\renewcommand\theequation{\arabic{section}.\arabic{equation}}
\newcommand{\vertiii}[1]{{\left\vert\kern-0.25ex\left\vert\kern-0.25ex\left\vert #1
    \right\vert\kern-0.25ex\right\vert\kern-0.25ex\right\vert}}
\newcommand{\mean}[1]{\mkern 1.5mu \overline{\mkern-2mu #1 \mkern-2mu}\mkern 1.5mu}
\newcommand{\thickhline}{%
    \noalign {\ifnum 0=`}\fi \hrule height 1pt
    \futurelet \reserved@a \@xhline
}
\newtheorem{theorem}{Theorem}[section]
\newtheorem{lemma}[theorem]{Lemma}
\newtheorem{corollary}[theorem]{Corollary}
\newtheorem{assumption}[theorem]{Assumption}
\newtheorem*{remark*}{Remark}
\DeclareMathOperator*{\argmax}{arg\,max}
\begin{document}

\title{\textbf{A Distributed One-Step Estimator}}
\author{\textbf{Cheng Huang} \thanks{c.huang@gatech.edu}\; and \textbf{Xiaoming Huo} \thanks{huo@gatech.edu} \\ \emph{Georgia Institute of Technology}}
\date{November 4, 2015}
\maketitle

\begin{abstract}
Distributed statistical inference has recently attracted enormous attention.
Many existing work focuses on the averaging estimator, e.g., \cite{zhang2013communication} together with many others.
We propose a one-step approach to enhance a simple-averaging based distributed estimator.
We derive the corresponding asymptotic properties of the newly proposed estimator.
We find that the proposed one-step estimator enjoys the same asymptotic properties as the centralized estimator.
The proposed one-step approach merely requires one additional round of communication in relative to the averaging estimator; so the extra communication burden is insignificant.
In finite sample cases, numerical examples show that the proposed estimator outperforms the simple averaging estimator with a large margin in terms of the mean squared errors.
A potential application of the one-step approach is that one can use multiple machines to speed up large scale statistical inference with little compromise in the quality of estimators.
The proposed method becomes more valuable when data can only be available at distributed machines with limited communication bandwidth.
\end{abstract}

\section{Introduction}
\label{intro}
In many important contemporary applications, data are often partitioned across multiple servers.
For example, a search engine company may have data coming from a large number of locations, and each location collects tera-bytes of data per day \citep{Corbett-et-al2012}.
On a different setting, high volumn of data (like videos) have to be stored distributively, instead of on a centralized server \citep{Mitra-et-al2011}.
Given the modern ``data deluge", it is often the case that centralized methods are no longer possible to implement.
It has also been notified by various researchers (e.g., \cite{jaggi2014communication}) that the speed of local processors can be thousands time faster than the rate of data transmission in a modern network.
Consequently it is evidently advantageous to develop communication-efficient method, instead of transmitting data to a central location and then apply a global estimator.

In statistical inference, estimators are introduced to infer some important hidden quantities.
In ultimate generality, a statistical estimator of a parameter $\theta \in \Theta$ is a measurable
function of the data, taking values in the parameter space $\Theta$.
Many statistical inference problems could be solved by finding the maximum likelihood estimators (MLE), or more generally, M-estimators.
In either case, the task is to minimize an objective function, which is the average of a criterion function over the entire data, which is typically denoted by $S = \{X_1, X_2, \ldots, X_N\}$, where $N$ is called the sample size.
Here we choose a capitalized $N$ to distinguish from a lower $n$ that will be used later.
Traditional centralized setting requires access to entire data set $S$ simultaneously.
However, due to the explosion of data size, it may be infeasible to store all the data in a single machine like we did during past several decades.
Distributed (sometimes, it is called {\it parallel}) statistical inference would be an indispensable approach for solving these large-scale problems.

At a high level, there are at least two types of distributed inference problems.
In the first type, each sample $X_i$ is completely observed at one location; at the same time, different samples (i.e., $X_i$ and $X_j$ for $i\neq j$) may be stored at different locations.
This paper will focus to this type of problems.
On the other hand, it is possible that for the same sample $X_i$, different parts are available at different locations, and they are {\it not} available in a centralized fashion.
The latter has been studied in the literature (see \cite{Lai2015} and references therein).
This paper will not study the second type.

For distributed inference in the first type of the aforementioned setting, data are split into several subsets and each subset is assigned to a processor.
This paper will focus on the M-estimator framework, in which an estimator is obtained by solving a distributed optimization problem.
The objective in the distributed optimization problem may come from an M-estimator framework (or more particularly from the maximum likelihood principle), empirical risk minimization, and/or penalized version of the above.
Due to the type 1 setting, we can see that the objective functions in the corresponding optimization problem are separable; in particular, the global objective function is a summation of functions such that each of them only depends on data reside on one machine.
The exploration in this paper will base on this fact.
As mentioned earlier, a distributed inference algorithm should be communication-efficient because of high communication cost between different machines or privacy concerns (such as sensitive personal information or financial data). It is worth noting that even if the data could be handled by a single machine, distributed inference would still be beneficial for reducing computing time.

Our work has been inspired by recent progress in distributed optimization. We review some noticeable progress in numerical approaches and their associated theoretical analysis.
Plenty of research work has been done in distributed algorithms for large scale optimization problems during recent years.
\cite{boyd2011distributed} suggests to use Alternating Direction Method of Multipliers (ADMM) to solve distributed optimization problems in statistics and machine learning.
Using a trick of {\it consistency} (or sometimes called {\it consensus}) constraints on local variables and a global variable, ADMM can be utilized to solve a distributed version of the Lasso problem \citep*{lasso,basispursuit}.
ADMM has also been adopted in solving distributed logistic regression problem, and many more.
ADMM is feasible for a wide range of problems, but it requires iterative communication between local machines and the center.
In comparison, we will propose a method that only requires two times iteration.
\cite{zinkevich2010parallelized} proposes a parallelized stochastic gradient descent method for empirical risk minimization and proves its convergence.
The established contractive mappings technique seems to be a powerful method to quantify the speed of convergence of the derived estimator to its limit.
\cite{shamir2013communication} presents the Distributed Approximate Newton-type Method (DANE) for distributed statistical optimization problems.
Their method firstly averages the local gradients then follows by averaging all local estimators in each iteration until convergence.
They prove that this method enjoys linear convergence rate for quadratic objectives.
For non-quadratic objectives, it has been showed that the value of objective function has geometric convergence rate.
\cite{jaggi2014communication} proposes a communication-efficient method for distributed optimization in machine learning, which uses local computation with randomized dual coordinate descent in a primal-dual setting.
They also prove the geometric convergence rate of their method.
The above works focused on the properties of numerical solutions to the corresponding optimization problems.
Nearly all of them require more than two rounds of communication.
Due to different emphasis, they did not study the statistical asymptotic properties (such as convergence in probability, asymptotic normality, Fisher information bound) of the resulting estimators.

Now we switch the gear to statistical inference. Distributed inference has been studied in many existing works, and various proposals have been made in different settings.
To the best of our knowledge, the distributed one-step estimator has {\it not} been studied in any of these existing works.
We review a couple of state-of-the-art approaches in the literature.
Our method builds on a closely related recent line of work of \cite{zhang2013communication}, which presents a straight forward approach to solve large scale statistical optimization problem, where the local empirical risk minimizers are simply averaged.
They showed that this averaged estimator achieves mean squared error that decays as $O(N^{-1} + (N/k)^{-2})$, where $N$ stands for the total number of samples and $k$ stands for the total number of machines.
They also showed that the mean squared error could be even reduced to $O(N^{-1} + (N/k)^{-3})$ with one more bootstrapping sub-sampling step.
Obviously, there exists efficiency loss in their method since the centralized estimator could achieve means squared error $O(N^{-1})$.
%However, to the best of our knowledge, current literature does not provide a distributed approach, which could achieve the same mean squared error as centralized approaches.
\cite{liu2014distributed} proposes an inspiring two-step approach: firstly find local maximum likelihood estimators, then subsequently combine them by minimizing the total Kullback-Leibler divergence (KL-divergence).
They proved the exactness of their estimator as the global MLE for the full exponential family.
They also estimated the mean squared errors of the proposed estimator for a curved exponential family.
Due to the adoption of the KL-divergence, the effectiveness of this approach heavily depends on the parametric form of the underlying model.
\cite{chen2014split} proposes a split-and-conquer approach for a penalized regression problem (in particular, a model with the canonical exponential distribution) and show that it enjoys the same oracle property as the method that uses the entire data set in a single machine.
Their approach is based on a majority voting, followed by a weighted average of local estimators, which somewhat resembles a one-step estimator however is different.
In addition, their theoretical results requires $k \le O(N^{\frac{1}{5}})$, where $k$ is the number of machines and $N$ is the total number of samples; this is going to be different from our needed condition for theoretical guarantees.
Their work considers a high-dimensional however sparse parameter vector, which is not considered in this paper.
\cite{rosenblatt2014optimality} analyzes the error of averaging estimator in distributed statistical learning under two scenarios.
The number of machines is fixed in the first one and the number of machines grows in the same order with the number of samples per machine.
They presented asymptotically exact expression for estimator error in both scenarios and showed that the error grows linearly with the number of machines in the latter case.
Their work does not consider the one-step updating that will be studied in this paper.
Although it seems that their work proves the asymptotic optimality of the simple averaging, our simulations will demonstrate the additional one-step updating can improve over the simple averaging, at least in some interesting finite sample cases.
\cite{battey2015distributed} study the distributed parameter estimation method for penalized regression and establish the oracle asymptotic property of an averaging estimator.
They also discussed hypotheses testing, which is not covered in this paper.
Precise upper bounds on the errors of their proposed estimator have been developed.
We benefited from reading the technical proofs of their paper; however unlike our method, their method is restricted to linear regression problems with penalty and requires the number of machine $k=o(\sqrt{N})$.
\cite{lee2015communication} devise a one-shot approach, which averages ``debiased"  lasso estimators, to distributed sparse regression in the high-dimensional setting.
They show that their approach converges at the same order of rate as the Lasso when the data set is not split across too many machines.
%\cite{van2014asymptotically}

It is worth noting that near all existing distributed estimator are {\it averaging} estimators.
The idea of applying one additional updating, which correspondingly requires one additional round of communication, has not be explicitly proposed.
We may notice some precursor of this strategy.
For example, in \cite{shamir2013communication}, an approximate Newton direction was estimated at the central location, and then broadcasted to local machines.
Another occurrence is that in \cite{lee2015communication}, some intermediate quantities are estimated in a centralized fashion, and then distributed to local machines.
None of them explicitly described what we will propose.

In the theory on maximum likelihood estimators (MLE) and M-estimators, there is a one-step method, which could make a consistent estimator as efficient as MLE or M-estimators with a single Newton-Raphson iteration.
(Here, efficiency stands for the relative efficiency converges to $1$.)
See \cite{van2000asymptotic} for more details.
There have been numerous papers utilizing this method.
See \cite{bickel1975one}, \cite{fan1999one} and \cite{zou2008one}.
One-step estimator enjoys the same asymptotic properties as the MLE or M-estimators as long as the initial estimators are $\sqrt{n}$-consistent.
A $\sqrt{n}$-consistent estimator is much easier to find than the MLE or an M-estimator.
For instance, the simple averaging estimator (e.g., the one proposed by \cite{zhang2013communication}) is good enough as a starting point for a one-step estimator.

In this paper, we propose a one-step estimator for distributed statistical inference.
The proposed estimator is built on the well-analyzed simple averaging estimator.
We show that the proposed one-step estimator enjoys the same asymptotic properties (including convergence and asymptotic normality) as the centralized estimator, which would utilize the entire data.
Given the amount of knowledge we had on the distributed estimators, the above result may not be surprising.
However, when we derive an upper bound for the error of the proposed one-step estimator, we found that we can achieve a slightly better one than those in the existing literature.
We also perform a detailed evaluation of our one-step method, comparing with simple averaging method and centralized method using synthetic data.
The numerical experiment is much more encouraging than the theory predicts: in nearly all cases, the one-step estimator outperformed the simple averaging one with a clear margin.
We also observe that the one-step estimator achieves the comparable performance as the global estimator at a much faster rate than the simple averaging estimator.
Our work may indicate that in practice, it is better to apply a one-step distributed estimator, than a simple-average one.

This paper is organized as follows.
Section \ref{background} describes details of our problem setting and two methods---the simple averaging method and the proposed one-step method.
In Section \ref{asym}, we study the asymptotic properties of the one-step estimator in the M-estimator framework and analyze the upper bound of its estimation error.
Section \ref{numerical} provides some numerical examples of distributed statistical inference with synthetic data.
We conclude in Section \ref{conclusion}.
When appropriate, detailed proofs are relegated to the appendix.

\section{Problem Formulation}
\label{background}

\subsection{Notations}
In this subsection, we will introduce some notations that will be used in this paper.
Let $\{ m(x;\theta): \theta \in \Theta \subset \mathbb{R}^d \}$ denote a collection of criterion functions, which should have continuous second derivative. Consider a data set $S$ consisting of $N=nk$ samples, which are drawn i.i.d. from $p(x)$ (for simplicity, we assume that the sample size $N$ is a multiple of $k$). This data set is divided evenly at random and stored in $k$ machines. Let $S_i$ denote the subset of data assigned to machine $i$, $i=1,\ldots,k$, which is a collection of $n$ samples drawn i.i.d. from $p(x)$. Note that any two subsets in those $S_i$'s are not overlapping.\\
For each $i \in \{1,\ldots,k\}$, let the local empirical criterion function that is based on the local data set on machine $i$ and the corresponding maximizer be denoted by
\begin{equation}
M_i(\theta) = \frac{1}{|S_i|} \sum_{x \in S_i} m(x;\theta) \text{\; and \;} \theta_i = \argmax_{\theta \in \Theta} M_i(\theta).
\label{def_local}
\end{equation}
Let the global empirical criterion function be denoted by
\begin{equation}
M(\theta) = \frac{1}{k} \sum_{i=1}^k M_i(\theta).
\end{equation}
And let the population criterion function and its maximizer be denoted by
\begin{equation}
M_{0}(\theta) = \int_{\mathcal{X}} m(x;\theta)p(x) dx \text{\; and \;} \theta_0 = \argmax_{\theta \in \Theta} M_0(\theta),
\label{def_pop}
\end{equation}
where $\mathcal{X}$ is the sample space. Note that $\theta_0$ is the parameter of interest. The gradient and Hessian matrix of $m(x;\theta)$ with respect to $\theta$ are denoted by
\begin{equation}
\dot{m}(x;\theta) = \frac{\partial m(x;\theta)}{\partial \theta},\ddot{m}(x;\theta) = \frac{\partial^2 m(x;\theta)}{\partial \theta \; \partial \theta^T}.
\label{def_gh}
\end{equation}
We also let the gradient and Hessian of local empirical criterion function be denoted by
\begin{equation}
\dot{M}_i(\theta) = \frac{\partial M_i(\theta)}{\partial \theta} = \frac{1}{|S_i|} \sum_{x \in S_i}\frac{\partial m(x;\theta)}{\partial \theta},
\ddot{M}_i(\theta) = \frac{\partial^2 M_i(x;\theta)}{\partial \theta \; \partial \theta^T} = \frac{1}{|S_i|} \sum_{x \in S_i} \frac{\partial^2 m(x;\theta)}{\partial \theta \; \partial \theta^T},
\label{def_gh_local}
\end{equation}
where $i \in \{1,2,\ldots,k\}$, and let the gradient and Hessian of global empirical criterion function be denoted by
\begin{equation}
\dot{M}(\theta) = \frac{\partial M(\theta)}{\partial \theta}, \ddot{M}(\theta) = \frac{\partial^2 M(\theta)}{\partial \theta \; \partial \theta^T}.
\label{def_gh_global}
\end{equation}
Similarly, let the gradient and Hessian of population criterion function be denoted by
\begin{equation}
\dot{M}_0(\theta) = \frac{\partial M_0(\theta)}{\partial \theta}, \ddot{M}_0(\theta) = \frac{\partial^2 M_0(\theta)}{\partial \theta \; \partial \theta^T}.
\label{def_gh_pop}
\end{equation}

The vector norm $\| \cdot \|$ for $a \in \mathbb{R}^d$ that we use in this paper is the usual Euclidean norm $\| a \| = (\sum_{j=1}^d a_j^2)^{\frac{1}{2}}$. And we also use $\vertiii{\cdot}$ to denote a norm for matrix $A \in \mathbb{R}^{d \times d}$, which is defined as its maximal singular value, i.e., we have
$$\vertiii{A} = \sup_{u: u \in R^d, \|u\| \le 1} \| Au \| .$$
The aforementioned matrix norm will be the major matrix norm that is used throughout the paper. The only exception is that we will also use Frobenius norm in Appendix \ref{bounds}. And the Euclidean norm is the only vector norm that we use throughout this paper.

\subsection{Review on M-estimators}
In this paper, we will study the distributed scheme for large-scale statistical inference. To make our conclusions more general, we consider M-estimators, which could be regarded as a generalization of the Maximum Likelihood Estimators (MLE). 
The M-estimator $\hat{\theta}$ could be obtained by maximizing empirical criterion function, which means
$$
\hat{\theta} = \argmax_{\theta \in \Theta} M(\theta) = \argmax_{\theta \in \Theta} \frac{1}{|S|}\sum_{x \in S} m(x;\theta).
$$
Note that, when the criterion function is the log likelihood function, i.e., $m(x;\theta) = \log f(x;\theta)$, the M-estimator is exactly the MLE.
Let us recall that $M_{0}(\theta) = \int_{\mathcal{X}} m(x;\theta)p(x) dx$ is the population criterion function and $\theta_0= \argmax_{\theta \in \Theta} M_{0}(\theta)$ is the maximizer of population criterion function. It is known that $\hat{\theta}$ is a consistent estimator for $\theta_0$, i.e., $\hat{\theta} - \theta_0 \xrightarrow{P} 0$. See Chapter 5 of \cite{van2000asymptotic}.

\subsection{Simple Averaging Estimator}
Let us recall that $M_i(\theta)$ is the local empirical criterion function on machine $i$,
\begin{equation*}
M_i(\theta) = \frac{1}{|S_i|} \sum_{x \in S_i} m(x;\theta).
\end{equation*}
And, $\theta_i$ is the local M-estimator on machine $i$,
\begin{equation*}
\theta_i = \argmax_{\theta \in \Theta} M_i(\theta).
\end{equation*}
Then as mentioned in \cite{zhang2013communication}, the simplest and most intuitive method is to take average of all local M-estimators. Let $\theta^{(0)}$ denote the average of these local M-estimators, we have
\begin{equation}
\theta^{(0)} = \frac{1}{k} \sum_{i=1}^k \theta_i \label{def_theta(0)},
\end{equation}
which is referred as the simple averaging estimator in the rest of this paper. 

\subsection{One-step Estimator}
Under the problem setting above, starting from the simple averaging estimator $\theta^{(0)}$, we can obtain the one-step estimator $\theta^{(1)}$ by performing a single Newton-Raphson update, i.e.,
\begin{equation}
\theta^{(1)} = \theta^{(0)} - [\ddot{M}(\theta^{(0)})]^{-1}[\dot{M}(\theta^{(0)})], \label{def_theta(1)}
\end{equation}
where $M(\theta) = \frac{1}{k}\sum_{i=1}^k M_i(\theta)$ is the global empirical criterion function, $\dot{M}(\theta)$ and $\ddot{M}(\theta)$ are the gradient and Hessian of $M(\theta)$, respectively. The whole process to compute one-step estimator can be summarized as follows. 
\begin{enumerate}
	\item For each $i \in \{ 1,2,\ldots,k \}$, machine $i$ compute the local M-estimator with its local data set, 
$$\theta_i = \argmax_{\theta \in \Theta} M_i(\theta) = \argmax_{\theta \in \Theta} \frac{1}{|S_i|} \sum_{x \in S_i} m(x;\theta).$$
	\item All local M-estimators are averaged to obtain simple averaging estimator,
$$\theta^{(0)} = \frac{1}{k} \sum_{i=1}^k \theta_i \, .$$
Then $\theta^{(0)}$ is sent back to each local machine.
	\item For each $i \in \{ 1,2,\ldots,k \}$, machine $i$ compute the gradient and Hessian matrix of its local empirical criterion function $M_i(\theta)$ at $\theta = \theta^{(0)}$. Then send $\dot{M}_i(\theta^{(0)})$ and $\ddot{M}_i(\theta^{(0)})$ to the central machine.
	\item Upon receiving all gradients and Hessian matrices, the central machine computes gradient and Hessian matrix of $M(\theta)$ by averaging all local information,
$$
\dot{M}(\theta^{(0)}) = \frac{1}{k} \sum_{i=1}^k \dot{M}_i(\theta^{(0)}),\; \ddot{M}(\theta^{(0)}) = \frac{1}{k} \sum_{i=1}^k \ddot{M}_i(\theta^{(0)}).
$$
Then the central machine would perform a Newton-Raphson iteration to obtain a one-step estimator,
$$\theta^{(1)} = \theta^{(0)} - [\ddot{M}(\theta^{(0)})]^{-1}[\dot{M}(\theta^{(0)})] .$$
\end{enumerate}
Note that $\theta^{(1)}$ is not necessarily the maximizer of empirical criterion function $M(\theta)$ but it shares the same asymptotic properties with the corresponding global maximizer (M-estimator) under some mild conditions, i.e., we will show
$$
\theta^{(1)} \xrightarrow{P} \theta_0, \sqrt{N}(\theta^{(1)} - \theta_0) \xrightarrow{d} \bold{N}(0,\Sigma), \text{ as } N \rightarrow \infty, 
$$
where the covariance matrix $\Sigma$ will be specified later. \\
The one-step estimator has advantage over simple averaging estimator in terms of estimation error. In \cite{zhang2013communication}, it is showed both theoretically and empirically that the MSE of simple averaging estimator $\theta^{(0)}$ grows significantly with the number of machines $k$ when the total number of samples $N$ is fixed. More precisely, there exists some constant $C_1 , C_2 > 0$ such that
$$
\mathbb{E} [ \| \theta^{(0)} - \theta_0 \|^2 ] \le \frac{C_1}{N} + \frac{C_2 k^2}{N^2} + O(k N^{-2}) +O(k^3 N^{-3}).
$$
Fortunately, one-step method $\theta^{(1)}$ could achieve a lower upper bound of MSE with only one additional step. we will show the following in Section \ref{asym}:
$$
\mathbb{E} [ \| \theta^{(1)} - \theta_0 \|^2 ] \le \frac{C_1}{N} + O(N^{-2}) + O(k^4 N^{-4}).
$$

\section{Main Results}
\label{asym}
At first, some assumptions will be introduced in Section 3.1. After that, we will study the asymptotic properties of one-step estimator in Section 3.2, i.e., convergence,  asymptotic normality and mean squared error (MSE). In Section 3.3, we will consider the one-step estimator under the presence of information loss.

\subsection{Assumptions}
Throughout this paper, we impose some regularity conditions on the criterion function $m(x;\theta)$, the local empirical criterion function $M_i(\theta)$ and population criterion function $M_0(\theta)$. We use the similar assumptions in \cite{zhang2013communication}. Those conditions are also standard in classical statistical analysis of M-estimators (cf. \cite{van2000asymptotic}).

First assumption restricts the parameter space to be compact, which is reasonable and not rigid in practice. One reason is that the possible parameters lie in a finite scope for most cases. Another justification is that the largest number that computers could cope with is always limited.
\begin{assumption}[parameter space]
\label{parameter_space}
The parameter space $\Theta \in \mathbb{R}^d$ is a compact convex set. And let $D \triangleq \max_{\theta,\theta' \in \Theta} \| \theta - \theta' \|$ denote the diameter of $\Theta$.
\end{assumption}

We also assume that $m(x;\theta)$ is concave with respect to $\theta$ and $M_0(\theta)$ has some curvature around the unique optimal point $\theta_0$, which is a standard assumption for any method requires consistency.
\begin{assumption}[invertibility]
\label{invertibility}
The Hessian of population criterion function $M_0(\theta)$ at $\theta_0$ is a nonsingular matrix, which means $\ddot{M}(\theta_0)$ is negative definite and there exists some $\lambda>0$ such that $\sup_{u \in \mathbb{R}^d : \|u\|<1} u^t \ddot{M}(\theta_0) u \le -\lambda$.
\end{assumption}

In addition, we require the criterion function $m(x;\theta)$ to be smooth enough, at least in the neighborhood of the optimal point $\theta_0$, $B_\delta = \{ \theta \in \Theta: \| \theta - \theta_0 \| \le \delta \}$. So, we impose some regularity conditions on the first and second derivative of $m(x;\theta)$. We assume the gradient of $m(x;\theta)$ is bounded in moment and the difference between $\ddot{m}(x;\theta)$ and $\ddot{M}_0(\theta)$ is also bounded in moment. Moreover, we assume that $\ddot{m}(x;\theta)$ has Lipschitz continuity in $B_\delta$.
\begin{assumption}[smoothness]
\label{smoothness}
There exist some constants $G$ and $H$ such that
$$
\mathbb{E}[\| \dot{m}(X;\theta) \|^8] \le G^8 \text{\; and \;} \mathbb{E} \left[ \vertiii{ \ddot{m}(X;\theta) - \ddot{M}_0(\theta)}^8 \right] \le H^8, \forall \theta \in B_\delta.
$$
For any $x \in \mathcal{X}$, the Hessian matrix $\ddot{m}(x;\theta)$ is $L(x)$-Lipschitz continuous,
$$
\vertiii{ \ddot{m}(x;\theta) - \ddot{m}(x;\theta')} \le L(x) \| \theta - \theta' \|, \forall \theta,\theta' \in B_\delta,
$$
where $L(x)$ satisfies
$$
\mathbb{E}[L(X)^8] \le L^8 \text{\; and \;} \mathbb{E}[(L(X)-\mathbb{E}[L(X)])^8] \le L^8,
$$
for some finite constant $L>0$.
\end{assumption}

By Theorem 8.1 in Chapter XIII of \cite{lang1993real}, $m(x;\theta)$ enjoys interchangeability between differentiation on $\theta$ and integration on $x$, which means the following two equations hold:
$$
\dot{M}_0(\theta) = \frac{\partial}{\partial \theta} \int_{\mathcal{X}} m(x;\theta)p(x) dx = \int_\mathcal{X} \frac{\partial m(x;\theta)}{\partial \theta} p(x) dx = \int_\mathcal{X} \dot{m}(x;\theta)p(x) dx,
$$
and,
$$
\ddot{M}_0(\theta) = \frac{\partial^2}{\partial \theta^t \partial \theta} \int_{\mathcal{X}} m(x;\theta)p(x) dx = \int_\mathcal{X} \frac{\partial^2 m(x;\theta)}{\partial \theta^t \partial \theta} p(x) dx = \int_\mathcal{X} \ddot{m}(x;\theta)p(x) dx.
$$

\subsection{Asymptotic Properties and Mean Squared Error (MSE) Bound}
Our main result is that one-step estimator enjoys oracle asymptotic properties and has mean squared error of $O(N^{-1})$ under some mild conditions.
\begin{theorem}
\label{one_step_M_ap}
Let $\Sigma= \ddot{M}_0(\theta_0)^{-1} \mathbb{E}[\dot{m}(x;\theta_0)\dot{m}(x;\theta_0)^t] \ddot{M}_0(\theta_0)^{-1}$, where the expectation is taken with respect to $p(x)$. Under Assumption \ref{parameter_space}, \ref{invertibility}, and \ref{smoothness}, when the number of machines $k$ satisfies $k =O(\sqrt{N})$, $\theta^{(1)}$ is consistent and asymptotically normal, i.e., we have
$$
\theta^{(1)} - \theta_0 \xrightarrow{P} 0 \text{\; and \;}
\sqrt{N} (\theta^{(1)} - \theta_0) \xrightarrow{d} \bold{N}(0,\Sigma) \text{ as } N \rightarrow \infty.
$$
\end{theorem}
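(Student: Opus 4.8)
The plan is to treat $\theta^{(0)}$ as a $\sqrt{N}$-consistent initial estimator and to run the classical one-step (Newton--Raphson) argument for M-estimators (cf.\ Chapter 5 of \cite{van2000asymptotic}) on the global criterion $M(\theta)$, whose gradient and Hessian pool all $N=nk$ samples. The linchpin is the $\sqrt{N}$-consistency of $\theta^{(0)}$: invoking the mean squared error bound
$$
\mathbb{E}[\|\theta^{(0)}-\theta_0\|^2] \le \frac{C_1}{N} + \frac{C_2 k^2}{N^2} + O(kN^{-2}) + O(k^3 N^{-3}),
$$
the hypothesis $k=O(\sqrt{N})$ forces the $k^2/N^2$ term to be $O(N^{-1})$, so that $\|\theta^{(0)}-\theta_0\| = O_P(N^{-1/2})$. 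I would establish this MSE bound as a separate lemma, expanding the local estimating equations $\dot{M}_i(\theta_i)=0$ and controlling the resulting bias (of order $1/n = k/N$) and variance terms via the eighth-moment conditions in Assumption \ref{smoothness}.

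With $\sqrt{N}$-consistency in hand, I would Taylor-expand the global gradient about $\theta_0$,
$$
\dot{M}(\theta^{(0)}) = \dot{M}(\theta_0) + \ddot{M}(\theta_0)(\theta^{(0)}-\theta_0) + r_N,
$$
and bound the remainder using the Lipschitz continuity of the Hessian from Assumption \ref{smoothness}: $\|r_N\| \le \frac{1}{2}\bar{L}_N \|\theta^{(0)}-\theta_0\|^2$, where $\bar{L}_N = \frac{1}{N}\sum_{x\in S} L(x) = O_P(1)$ by the moment bound on $L$. Hence $r_N = O_P(N^{-1})$, which is the decisive gain: substituting into the Newton update and rearranging gives
$$
\theta^{(1)}-\theta_0 = \left[I - \ddot{M}(\theta^{(0)})^{-1}\ddot{M}(\theta_0)\right](\theta^{(0)}-\theta_0) - \ddot{M}(\theta^{(0)})^{-1}\dot{M}(\theta_0) - \ddot{M}(\theta^{(0)})^{-1} r_N.
$$

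The third step is to dispatch each piece. Consistency of $\theta^{(0)}$ together with the Lipschitz/uniform-law control and the moment bound on $\ddot{m}-\ddot{M}_0$ give $\ddot{M}(\theta^{(0)}) \xrightarrow{P} \ddot{M}_0(\theta_0)$, which is invertible by Assumption \ref{invertibility}; thus $\ddot{M}(\theta^{(0)})^{-1}$ is bounded in probability and the bracket $[I - \ddot{M}(\theta^{(0)})^{-1}\ddot{M}(\theta_0)]$ is $o_P(1)$. Multiplied by $\theta^{(0)}-\theta_0 = O_P(N^{-1/2})$ it is $o_P(N^{-1/2})$, and $\ddot{M}(\theta^{(0)})^{-1}r_N = O_P(N^{-1})$ is likewise negligible. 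The surviving term is $-\ddot{M}_0(\theta_0)^{-1}\dot{M}(\theta_0)$. Since $\theta_0$ maximizes $M_0$ we have $\dot{M}_0(\theta_0)=0$, so $\dot{M}(\theta_0) = \frac{1}{N}\sum_{x\in S}\dot{m}(x;\theta_0)$ is an average of i.i.d.\ mean-zero vectors with finite covariance $\mathbb{E}[\dot{m}(x;\theta_0)\dot{m}(x;\theta_0)^t]$; the central limit theorem gives $\sqrt{N}\,\dot{M}(\theta_0) \xrightarrow{d} \mathbf{N}(0,\mathbb{E}[\dot{m}\dot{m}^t])$. Slutsky's theorem then yields $\sqrt{N}(\theta^{(1)}-\theta_0) \xrightarrow{d} \mathbf{N}(0,\Sigma)$ with $\Sigma = \ddot{M}_0(\theta_0)^{-1}\mathbb{E}[\dot{m}\dot{m}^t]\ddot{M}_0(\theta_0)^{-1}$, and consistency follows since every term on the right is $o_P(1)$.

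The main obstacle is the first step, the $\sqrt{N}$-consistency of $\theta^{(0)}$, and it is exactly here that $k=O(\sqrt{N})$ is unavoidable: each local M-estimator $\theta_i$ carries an $O(1/n)=O(k/N)$ bias, which survives averaging, so the squared bias contributes the $k^2/N^2$ term; demanding it not exceed the parametric $O(N^{-1})$ rate forces $k^2 = O(N)$. Establishing this bound rigorously requires a careful second-order expansion of the local estimating equations together with concentration of the local gradients and Hessians, which is where the eighth-moment conditions of Assumption \ref{smoothness} are consumed. A secondary difficulty is making the Hessian convergence $\ddot{M}(\theta^{(0)}) \xrightarrow{P} \ddot{M}_0(\theta_0)$ genuinely uniform over a neighborhood of $\theta_0$, so that evaluation at the random point $\theta^{(0)}$ is justified; this I would handle by combining the Lipschitz bound on $\ddot{m}$ with a uniform law of large numbers on $B_\delta$.
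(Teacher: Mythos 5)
Your proposal is correct and follows essentially the same route as the paper: establish $\sqrt{N}$-consistency of $\theta^{(0)}$ from the bias and MSE bounds under $k=O(\sqrt{N})$, expand the Newton update with an integral-form, Lipschitz-controlled remainder, and conclude via the CLT for $\dot{M}(\theta_0)$ together with Slutsky's lemma. The only cosmetic differences are that the paper derives the $\sqrt{N}$-consistency of $\theta^{(0)}$ by applying a CLT across the $k$ local estimators rather than Chebyshev on the MSE bound, and keeps the Hessian in its integral-average form rather than isolating a factor $I-\ddot{M}(\theta^{(0)})^{-1}\ddot{M}(\theta_0)$.
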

See Appendix \ref{proof_ap} for a proof.
The above theorem indicates that the one-step estimator is asymptotically equivalent to the centralized M-estimator.
\begin{remark*}
It is worth noting that the condition $\| \sqrt{N} (\theta^{(0)} - \theta_0) \|=O_P(1)$ suffices for our proof to Theorem \ref{one_step_M_ap}. Let $\tilde{\theta}^{(0)}$ denote another starting point for the one-step update, then the following estimator
$$
\tilde{\theta}^{(1)} = \tilde{\theta}^{(0)} - \ddot{M}(\tilde{\theta}^{(0)})^{-1} \dot{M}(\tilde{\theta}^{(0)})
$$
also enjoys the same asymptotic properties with $\theta^{(1)}$ (and the centralized M-estimator $\hat{\theta}$) as long as $\sqrt{N} (\tilde{\theta}^{(0)} - \theta_0)$ is bounded in probability. Therefore, we can replace $\theta^{(0)}$ with any estimator $\tilde{\theta}^{(0)}$ that satisfies
$$
\| \sqrt{N} (\tilde{\theta}^{(0)} - \theta_0) \| = O_P(1).
$$
\end{remark*}

\begin{theorem}
\label{one_step_M}
Under Assumption \ref{parameter_space}, \ref{invertibility}, and \ref{smoothness}, the mean squared error of the one-step estimator $\theta^{(1)}$ is bounded by
$$
\mathbb{E}[ \| \theta^{(1)} - \theta_0 \|^2 ] \le \frac{2 \mbox{Tr} [\Sigma]}{N} + O(N^{-2}) + O(k^{4}N^{-4}).
$$
When the number of machines $k$ satisfies $k =O(\sqrt{N})$, we have
$$
\mathbb{E}[ \| \theta^{(1)} - \theta_0 \|^2 ] \le \frac{2 \mbox{Tr} [\Sigma]}{N} + O(N^{-2}).
$$
\end{theorem}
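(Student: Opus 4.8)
The plan is to analyze the single Newton step via a second-order Taylor expansion of the global gradient about $\theta_0$, isolating a leading stochastic term whose second moment is exactly $\mathrm{Tr}(\Sigma)/N$ and showing the rest is of smaller order. Using the integral form of the remainder, set $\bar{H} = \int_0^1 \ddot{M}\bigl(\theta_0 + t(\theta^{(0)}-\theta_0)\bigr)\,dt$, so that (recalling $\dot{M}_0(\theta_0)=0$) one has $\dot{M}(\theta^{(0)}) = \dot{M}(\theta_0) + \bar{H}(\theta^{(0)}-\theta_0)$. Substituting into the definition (\ref{def_theta(1)}) and adding and subtracting the population inverse Hessian would give the decomposition
\begin{equation*}
\theta^{(1)} - \theta_0 = A + B_1 + B_2, \qquad A = -\ddot{M}_0(\theta_0)^{-1}\dot{M}(\theta_0),
\end{equation*}
where $B_1 = \ddot{M}(\theta^{(0)})^{-1}\bigl[\ddot{M}(\theta^{(0)}) - \bar{H}\bigr](\theta^{(0)}-\theta_0)$ is the curvature (Lipschitz) remainder and $B_2 = \bigl[\ddot{M}_0(\theta_0)^{-1} - \ddot{M}(\theta^{(0)})^{-1}\bigr]\dot{M}(\theta_0)$ is the empirical-versus-population Hessian mismatch. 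Applying $\|u+v\|^2 \le 2\|u\|^2 + 2\|v\|^2$ with $u=A$, $v=B_1+B_2$, and then once more to $B_1+B_2$, yields $\mathbb{E}[\|\theta^{(1)}-\theta_0\|^2] \le 2\mathbb{E}[\|A\|^2] + 4\mathbb{E}[\|B_1\|^2] + 4\mathbb{E}[\|B_2\|^2]$; the factor $2$ in front of $\mathrm{Tr}[\Sigma]/N$ in the statement is precisely the cost of this split.

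For the leading term I would use that $M(\theta)=\frac1N\sum_{x\in S}m(x;\theta)$, so $\dot{M}(\theta_0)=\frac1N\sum_{x\in S}\dot{m}(x;\theta_0)$ is an average of i.i.d. mean-zero vectors, giving $\mathbb{E}[\dot{M}(\theta_0)\dot{M}(\theta_0)^t]=\frac1N\mathbb{E}[\dot{m}(X;\theta_0)\dot{m}(X;\theta_0)^t]$ and hence $\mathbb{E}[\|A\|^2]=\mathrm{Tr}(\Sigma)/N$ exactly. The substance is to show the two remainders are $O(N^{-2}+k^4N^{-4})$. On a good event where $\theta^{(0)}\in B_\delta$ and the sample Hessian is well conditioned, Assumption \ref{invertibility} gives $\vertiii{\ddot{M}(\theta^{(0)})^{-1}}\le 2/\lambda$, and the Lipschitz bound in Assumption \ref{smoothness} gives $\vertiii{\ddot{M}(\theta^{(0)})-\bar{H}}\le \tfrac12\bar{L}\|\theta^{(0)}-\theta_0\|$ with $\bar{L}=\frac1N\sum_{x\in S}L(x)$; hence $\|B_1\|\lesssim \bar{L}\|\theta^{(0)}-\theta_0\|^2$. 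A Cauchy--Schwarz step bounds $\mathbb{E}[\|B_1\|^2]$ by $\sqrt{\mathbb{E}[\bar{L}^4]}\,\sqrt{\mathbb{E}[\|\theta^{(0)}-\theta_0\|^8]}$, where $\mathbb{E}[\bar{L}^4]=O(1)$ by Jensen and the eighth-moment hypothesis; the key input is an eighth-moment bound on the averaging estimator, $\mathbb{E}[\|\theta^{(0)}-\theta_0\|^8]=O(N^{-4})+O(k^8N^{-8})$, which upgrades the second-moment bound of \cite{zhang2013communication}. This delivers $\mathbb{E}[\|B_1\|^2]=O(N^{-2})+O(k^4N^{-4})$.

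For $B_2$ I would invoke the resolvent identity $\ddot{M}_0(\theta_0)^{-1}-\ddot{M}(\theta^{(0)})^{-1}=\ddot{M}_0(\theta_0)^{-1}[\ddot{M}(\theta^{(0)})-\ddot{M}_0(\theta_0)]\ddot{M}(\theta^{(0)})^{-1}$, split $\vertiii{\ddot{M}(\theta^{(0)})-\ddot{M}_0(\theta_0)}\le \vertiii{\ddot{M}(\theta^{(0)})-\ddot{M}_0(\theta^{(0)})}+\vertiii{\ddot{M}_0(\theta^{(0)})-\ddot{M}_0(\theta_0)}$ into an empirical fluctuation of order $O_P(N^{-1/2})$ and a Lipschitz-in-$\theta$ term of order $O_P(\|\theta^{(0)}-\theta_0\|)$, and combine with $\|\dot{M}(\theta_0)\|=O_P(N^{-1/2})$ and a further Cauchy--Schwarz to obtain $\mathbb{E}[\|B_2\|^2]=O(N^{-2})+O(k^2N^{-3})$. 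The cross-order term is absorbed since $k^2N^{-3}=\sqrt{N^{-2}\cdot k^4N^{-4}}\le\tfrac12(N^{-2}+k^4N^{-4})$ by AM--GM.

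The main obstacle I anticipate is controlling $\theta^{(1)}-\theta_0$ off the good event: the iterate involves $\ddot{M}(\theta^{(0)})^{-1}$, which need not exist once the empirical Hessian is poorly conditioned, and the above bounds on $B_1,B_2$ degenerate there. I would handle this by defining the good event through a concentration bound $\vertiii{\ddot{M}(\theta^{(0)})-\ddot{M}_0(\theta_0)}\le\lambda/2$, using Markov's inequality together with the eighth-moment assumptions to show its complement has probability $O(N^{-4})$, and bounding the contribution on the complement crudely using the compactness of $\Theta$ (Assumption \ref{parameter_space}); the small probability then renders that contribution negligible relative to the claimed remainder. Collecting the estimate for $A$, the bounds on $B_1$ and $B_2$, and the bad-event remainder proves the first displayed inequality, and specializing to $k=O(\sqrt{N})$, where $k^4N^{-4}=O(N^{-2})$, yields the second.
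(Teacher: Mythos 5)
Your decomposition is algebraically correct and the orders all check out, so the argument goes through; but it is structured differently from the paper's. The paper never inverts the empirical Hessian in its error analysis: it multiplies the Newton-step identity by the deterministic matrix $\ddot{M}_0(\theta_0)^{-1}$ and arrives at
$\theta^{(1)} - \theta_0 = -\ddot{M}_0(\theta_0)^{-1}\dot{M}(\theta_0) + \ddot{M}_0(\theta_0)^{-1}[\ddot{M}_0(\theta_0)-\ddot{M}(\theta^{(0)})](\theta^{(1)} - \theta_0) + \ddot{M}_0(\theta_0)^{-1}\bigl[\ddot{M}(\theta_0) - \int_0^1\ddot{M}((1-\rho)\theta_0+\rho\theta^{(0)})\,d\rho\bigr](\theta^{(0)} - \theta_0)$,
whose second term is self-referential in $\theta^{(1)}-\theta_0$. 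This forces a two-stage bootstrap: first a crude bound $\mathbb{E}[\|\theta^{(1)}-\theta_0\|^q]=O(N^{-q/2})+O(k^qN^{-q})$ for $q\le 4$ using the diameter $D$ of $\Theta$, then H\"older's inequality to upgrade the cross terms to $O(N^{-2})+O(k^4N^{-4})$. Your version trades that bootstrap for the burden of controlling $\vertiii{\ddot{M}(\theta^{(0)})^{-1}}$ on a good event, which you handle correctly via the $\lambda/2$-concentration condition and a Markov bound on its complement; in exchange your remainders $B_1,B_2$ depend only on $\theta^{(0)}$ and $\dot{M}(\theta_0)$, so the eighth-moment bound $\mathbb{E}[\|\theta^{(0)}-\theta_0\|^8]=O(N^{-4})+O(k^8N^{-8})$ (which the paper does establish in its Appendix C and which both arguments ultimately rest on) suffices in one pass. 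Both routes share the same leading-term computation $\mathbb{E}[\|\ddot{M}_0(\theta_0)^{-1}\dot{M}(\theta_0)\|^2]=\mathrm{Tr}(\Sigma)/N$ and pay the same factor of $2$ from the elementary split $\|u+v\|^2\le 2\|u\|^2+2\|v\|^2$. One caveat that applies equally to your proof and the paper's: on the bad event the Newton iterate need not lie in $\Theta$, so the bound $\|\theta^{(1)}-\theta_0\|\le D$ is an implicit convention (e.g., projecting onto $\Theta$ or declaring the update void) rather than a consequence of Assumption \ref{parameter_space}; it would be worth stating that convention explicitly.
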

See Appendix \ref{proof} for a proof. \\

\begin{comment}
\begin{remark*}
In \cite{battey2015distributed}, Theorem 4.10 claims that
$$
\| \bar{\beta} - \hat{\beta} \| =O_P(\sqrt{k}(d \vee \log n)/n), \| \bar{\beta} - \beta^{\star} \| =O_P(d/n).
$$
But it is restricted to linear models $Y_i = \bold{X}_i^t \beta^{\star} + \epsilon_i$ and requires the following conditions:
\begin{itemize}
	\item $\{(Y_i, X_i)\}^n_{i=1}$ are i.i.d. and $\Sigma$ satisfies $0 < C_{min} \le \lambda_{min}(\Sigma) \le \lambda_{max}(\Sigma) \le C_{max}$.
	\item The rows of $X$ are sub-Gaussian with $\|X_i\|_{\psi_2} \le \kappa, i = 1,\ldots,n$.
	\item $\{\epsilon_i\}^n_{i=1}$ are i.i.d sub-Gaussian random variables with $\| \epsilon_i \|_{\psi_2} \le \sigma_1$.
	\item number of subsamples satisfies $k = O(nd/(d \vee \log n)^2)$.
\end{itemize}
Note that sub-Gaussian means that every moment of the random variables/vectors exits. All the conditions except the last one are more strict than our assumptions. But the result is weaker than ours because convergence in moment does imply convergence in probability, but not vice versa. The proof techniques look very similar to ours.
\end{remark*}
\end{comment}

In particular, when we choose the criterion function to be the log likelihood function, $m(x;\theta) = \log f(x;\theta)$, the one-step estimator has the same asymptotic properties with the maximum likelihood estimator (MLE), which is described below.
\begin{corollary}
\label{one_step_mle}
If $m(x;\theta) = \log f(x;\theta)$ and $k = O(\sqrt{N})$, one-step estimator $\theta^{(1)}$ is a consistent and asymptotic efficient estimator of $\theta_0$,
$$
\theta^{(1)} - \theta_0 \xrightarrow{P} 0 \text{\; and \;} 
\sqrt{N} (\theta^{(1)} - \theta_0) \xrightarrow{d} \bold{N}(0,I(\theta_0)^{-1}), \text{ as } N \rightarrow \infty,
$$
where $I(\theta_0)$ is the Fisher's information at $\theta = \theta_0$. And the mean squared error of $\theta^{(1)}$ is bounded as follows:
$$
\mathbb{E}[ \| \theta^{(1)} - \theta_0 \|^2 ] \le \frac{2 \mbox{Tr} [I^{-1}(\theta_0)]}{N} + O(N^{-2}) + O(k^4 N^{-4}).
$$
\end{corollary}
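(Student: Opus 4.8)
The plan is to derive this corollary directly from Theorems \ref{one_step_M_ap} and \ref{one_step_M}; the only extra ingredient is the classical information matrix equality, which holds when the criterion is the log-likelihood and the model is correctly specified, $p(x) = f(x;\theta_0)$. Under this specification the generic sandwich covariance $\Sigma$ appearing in Theorem \ref{one_step_M_ap} collapses to the inverse Fisher information $I(\theta_0)^{-1}$, after which both the limit law and the MSE bound follow by pure substitution. No new probabilistic estimates are needed, since Assumptions \ref{parameter_space}--\ref{smoothness} are inherited verbatim.

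The key computation is the pair of Bartlett identities. Beginning from the normalization $\int_{\mathcal{X}} f(x;\theta)\,dx = 1$ and invoking the interchange of differentiation and integration already granted via Theorem 8.1 of \cite{lang1993real}, one differentiation gives $\int_{\mathcal{X}} \dot{m}(x;\theta) f(x;\theta)\,dx = 0$; evaluated at $\theta_0$ this reads $\mathbb{E}[\dot{m}(X;\theta_0)] = 0$ and reconfirms that $\dot{M}_0(\theta_0) = 0$, i.e., $\theta_0$ is the maximizer of $M_0$. A second differentiation under the integral sign produces
$$
\mathbb{E}[\ddot{m}(X;\theta_0)] + \mathbb{E}[\dot{m}(X;\theta_0)\dot{m}(X;\theta_0)^t] = 0,
$$
hence $\ddot{M}_0(\theta_0) = -I(\theta_0)$ and $\mathbb{E}[\dot{m}(X;\theta_0)\dot{m}(X;\theta_0)^t] = I(\theta_0)$. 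Feeding these two relations into the definition $\Sigma = \ddot{M}_0(\theta_0)^{-1}\mathbb{E}[\dot{m}(X;\theta_0)\dot{m}(X;\theta_0)^t]\ddot{M}_0(\theta_0)^{-1}$ makes the sandwich telescope,
$$
\Sigma = (-I(\theta_0))^{-1}\, I(\theta_0)\, (-I(\theta_0))^{-1} = I(\theta_0)^{-1}.
$$

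With $\Sigma = I(\theta_0)^{-1}$ in hand, the consistency statement and the limit $\sqrt{N}(\theta^{(1)} - \theta_0) \xrightarrow{d} \bold{N}(0, I(\theta_0)^{-1})$ are exactly the conclusions of Theorem \ref{one_step_M_ap} under the hypothesis $k = O(\sqrt{N})$, so asymptotic efficiency (attainment of the Cram\'er--Rao bound) is automatic. Likewise the displayed MSE bound is Theorem \ref{one_step_M} with $\mbox{Tr}[\Sigma]$ rewritten as $\mbox{Tr}[I^{-1}(\theta_0)]$. I do not anticipate a genuine obstacle: the single delicate point is licensing the second differentiation under the integral sign, but this is precisely the interchangeability already supplied by the Lang citation together with the moment bounds of Assumption \ref{smoothness}, so the corollary reduces to the bookkeeping described above.
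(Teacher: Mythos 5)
Your proposal is correct and follows the same route as the paper, which simply states that the corollary "follows immediately from Theorem \ref{one_step_M_ap}, Theorem \ref{one_step_M} and the definition of the Fisher's information." The only difference is that you spell out the Bartlett identities showing $\ddot{M}_0(\theta_0) = -I(\theta_0)$ and $\mathbb{E}[\dot{m}(X;\theta_0)\dot{m}(X;\theta_0)^t] = I(\theta_0)$ so that the sandwich $\Sigma$ collapses to $I(\theta_0)^{-1}$ --- a detail the paper leaves implicit.
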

\begin{proof}
It follows immediately from Theorem \ref{one_step_M_ap}, \ref{one_step_M} and the definition of the Fisher's information.
\end{proof}

\subsection{Under the Presence of Communication Failure}
In practice, it is possible that the information (local estimator, local gradient and local Hessian) from a local machine {\it cannot} be received by the central machine due to various causes (for instance, network problem or hardware crash). We assume that the communication failure on each local machine occurs independently.

We now derive a distributed estimator under the scenario with possible information loss. We will also present the corresponding theoretical results. We use $a_i \in \{0,1\}, i=1,\ldots,k$, to denote the status of local machines: when machine $i$ successfully sends all its local information to central machine, we have $a_i=1$; when machine $i$ fails, we have $a_i=0$. The corresponding simple averaging estimator is computed as
$$
\theta^{(0)} = \frac{\sum_{i=1}^k a_i \theta_i }{\sum_{i=1}^k a_i}.
$$
And one-step estimator is as follows
$$
\theta^{(1)} = \theta^{(0)} - \left[ \sum_{i=1}^k a_i \ddot{M}_i(\theta^{(0)}) \right]^{-1} \left[ \sum_{i=1}^k a_i \dot{M}_i(\theta^{(0)}) \right].
$$
\begin{corollary}
\label{estimation_with_crash}
Suppose $r$ is the probability (or rate) that a local machine fails to send its information to the central machine. When $n=N/k \rightarrow \infty$, $k \rightarrow \infty$ and $k=O(\sqrt{N})$, the one-step estimator is asymptotically normal:
\begin{eqnarray*}
\sqrt{(1-r)N}(\theta^{(1)} - \theta_0) \xrightarrow{d} \bold{N}(0,\Sigma).
\end{eqnarray*}
And more precisely, unless all machines fail, we have
$$
\mathbb{E}[ \| \theta^{(1)} - \theta_0 \|^2 ] \le \frac{2 \mbox{Tr} [\Sigma]}{N(1-r)} + \frac{6 \mbox{Tr} [\Sigma]}{N k (1-r)^2} + O(N^{-2}(1-r)^{-2}) + O(k^2 N^{-2}).
$$
\end{corollary}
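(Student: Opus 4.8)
The plan is to condition on the failure indicators $\mathbf{a}=(a_1,\dots,a_k)$, which are independent of the data, and thereby reduce to Theorems \ref{one_step_M_ap} and \ref{one_step_M}. Writing $K=\sum_{i=1}^k a_i$ for the number of surviving machines, I would first note that $\sum_{i=1}^k a_i\ddot M_i = K\,\ddot M_{\mathbf a}$ and $\sum_{i=1}^k a_i\dot M_i = K\,\dot M_{\mathbf a}$, where $\ddot M_{\mathbf a}$ and $\dot M_{\mathbf a}$ are the Hessian and gradient of $M_{\mathbf a}(\theta)=\frac1K\sum_{i:a_i=1}M_i(\theta)$, so that the failure-case estimator equals $\theta^{(1)}=\theta^{(0)}-\ddot M_{\mathbf a}(\theta^{(0)})^{-1}\dot M_{\mathbf a}(\theta^{(0)})$ --- exactly the ordinary one-step estimator built from the $K$ surviving machines, each carrying $n$ i.i.d. samples, hence with effective sample size $Kn$. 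Conditionally on $\mathbf a$ the pooled surviving data is a set of $Kn$ i.i.d. draws from $p$, so Theorems \ref{one_step_M_ap}--\ref{one_step_M} apply with $k\mapsto K$ and $N\mapsto Kn$. The machine-count hypothesis is inherited: $k=O(\sqrt N)$ forces $k=O(n)$, whence $K\le k=O(n)=O(\sqrt{Kn})$, and $K\to\infty$ because $K/k\xrightarrow{P}1-r>0$ with $k\to\infty$.

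For asymptotic normality, the conditional one-step expansion --- with the starting-point hypothesis $\|\sqrt{Kn}(\theta^{(0)}-\theta_0)\|=O_P(1)$ verified as in the Remark following Theorem \ref{one_step_M_ap} --- gives $\sqrt{Kn}(\theta^{(1)}-\theta_0)=-\ddot M_0(\theta_0)^{-1}\frac{1}{\sqrt{Kn}}\sum_{i:a_i=1}\sum_{x\in S_i}\dot m(x;\theta_0)+o_P(1)$. The leading term is $\ddot M_0(\theta_0)^{-1}$ times a normalized sum of a random number $Kn$ of i.i.d. mean-zero vectors with covariance $\mathbb E[\dot m(x;\theta_0)\dot m(x;\theta_0)^t]$. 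Since the weak law gives $Kn/((1-r)N)\xrightarrow{P}1$, Anscombe's central limit theorem (equivalently, conditioning on $\mathbf a$ and invoking Slutsky with the ratio $\sqrt{(1-r)N/(Kn)}\xrightarrow{P}1$) yields $\sqrt{(1-r)N}(\theta^{(1)}-\theta_0)\xrightarrow{d}\mathbf N(0,\Sigma)$; because the limit law is free of $\mathbf a$, conditional convergence upgrades to unconditional convergence.

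For the mean-squared error I would integrate the conditional bound of Theorem \ref{one_step_M} over $\mathbf a$, on the event $\{K\ge1\}$. Substitution gives $\mathbb E[\|\theta^{(1)}-\theta_0\|^2\mid K]\le \frac{2\,\mathrm{Tr}[\Sigma]}{Kn}+O(K^{-2}n^{-2})+O(n^{-4})$, so the whole bound reduces to estimating the negative moments $\mathbb E[1/K]$ and $\mathbb E[1/K^2]$ of $K\sim\mathrm{Binomial}(k,1-r)$. Expanding $1/K$ about $\mu=k(1-r)$ and using $\mathrm{Var}(K)=kr(1-r)$ gives $\mathbb E[1/K]=\mu^{-1}+\mathrm{Var}(K)\mu^{-3}+\cdots=\frac{1}{k(1-r)}+\frac{r}{k^2(1-r)^2}+\cdots$ and $\mathbb E[1/K^2]=\mu^{-2}(1+o(1))$. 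Multiplying by $2\,\mathrm{Tr}[\Sigma]/n$ and $O(n^{-2})$ and recalling $N=kn$ produces the main term $\frac{2\,\mathrm{Tr}[\Sigma]}{N(1-r)}$, a correction absorbed into $\frac{6\,\mathrm{Tr}[\Sigma]}{Nk(1-r)^2}$ (the constant $6$ covering the leading correction and higher-order tail), and $O(N^{-2}(1-r)^{-2})$; the residual $O(n^{-4})=O(k^4N^{-4})$ is dominated by $O(k^2N^{-2})$ once $k=O(\sqrt N)$.

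The main obstacle is making the reduction honest when $K$ is small: $1/K$ is unbounded on $\{K\ge1\}$, so the delta-method expansion above cannot be taken at face value. The careful step is to bound the negative moments rigorously --- for instance via the exact identity $\mathbb E[1/(K+1)]=(1-r^{k+1})/((k+1)(1-r))$ together with $1/K\le 2/(K+1)$ on $\{K\ge1\}$, and a Chernoff bound showing that the mass of $K$ on $\{K\le\mu/2\}$ is exponentially small and hence negligible against every polynomial term, so that the \emph{sharp} leading constant $1/(k(1-r))$ (rather than merely $2/(k(1-r))$) survives. This same concentration of $K/k$ about $1-r$ is precisely what licenses the Anscombe/Slutsky step in the normality argument, so the two parts ultimately rest on a single underlying estimate.
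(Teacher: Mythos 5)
Your proposal follows essentially the same route as the paper: condition on the (data-independent) set of surviving machines, observe that the failure-case estimator is exactly the ordinary one-step estimator built from the $Zn$ surviving samples, apply Theorem \ref{one_step_M_ap} conditionally and upgrade via $Zn/((1-r)N)\xrightarrow{P}1$ and Slutsky for the normality claim, and integrate the conditional bound of Theorem \ref{one_step_M} against the distribution of $Z\sim\mathbf{B}(k,1-r)$ for the MSE. The one place you diverge is the technical ingredient: the paper's Lemma \ref{bound_binomial} bounds $\mathbb{E}[Z^{-1}1_{(Z>0)}]$ and $\mathbb{E}[Z^{-2}1_{(Z>0)}]$ by an exact combinatorial manipulation (shifting $\binom{k}{z}$ to $\binom{k+1}{z+1}$ and $\binom{k+2}{z+2}$ and using $\frac{z+1}{z}\le 2$, $\frac{(z+1)(z+2)}{z^2}\le 6$), which yields the clean non-asymptotic constants $\frac{1}{kp}+\frac{3}{k^2p^2}$ and $\frac{6}{k^2p^2}$ directly, whereas you propose a truncation-plus-Taylor argument with a Chernoff bound on $\{Z\le\mu/2\}$. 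Your route is valid and correctly diagnoses the pitfall (the naive delta method fails because $1/Z$ is unbounded near $Z=1$), but it only recovers the stated constants up to exponentially small corrections for large $k$; the paper's identity-based lemma is both shorter and non-asymptotic, so if you write this up you may as well adopt it. A final small point that both you and the paper leave implicit: the $O(\cdot)$ remainders in the conditional MSE bound must be uniform in $Z$ for the double-expectation step to be legitimate.
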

See Appendix \ref{proof_crash} for a proof. Note that the probability that all machines fail is $r^k$, which is negligible when $r$ is small and $k$ is large.

\section{Numerical Examples}
\label{numerical}
In this section, we will discuss the results of simulation studies comparing the performance of the simple averaging estimator $\theta^{(0)}$ and the one-step estimator $\theta^{(1)}$, as well as the centralized M-estimator $\hat{\theta}$, which maximizes the global empirical criterion function $M(\theta)$ when the entire data are available centrally. Besides, we will also study the resampled averaging estimator, which is proposed by \cite{zhang2013communication}. The main idea of a resampled averaging estimator is to resample $\lfloor sn \rfloor$ observations from each local machine to obtain another averaging estimator $\theta^{(0)}_{1}$. Then the resampled averaging estimator can be constructed as follows:
$$
\theta^{(0)}_{re} = \frac{\theta^{(0)} - s\theta^{(0)}_1}{1-s}.
$$

In our numerical examples, the resampling ratio $s$ is chosen to be $s=0.1$ based on past empirical studies.
We shall implement these estimators for logistic regression, Beta distribution and Gaussian Distribution. We will also study the parameter estimation for Beta distribution with occurrence of communication failures, in which some local machines could fail to send their local information to the central machine.

\subsection{Logistic Regression}
In this example, we simulate the data from the following logistic regression model:
\begin{equation}
\label{logit_model}
y \sim \mbox{Bernoulli}(p), \text{ where } p = \frac{\exp(x^t \theta)}{1+\exp(x^t \theta)} = \frac{\exp(\sum_{j=1}^d x_j \theta_j)}{1+\exp(\sum_{j=1}^d x_j \theta_j)}.
\end{equation}
In this model, $y \in \{0,1\}$ is a binary response, $x \in R^d$ is a continuous predictor and $\theta \in R^d$ is the parameter of interest.

In each single experiment, we choose a fixed vector $\theta$ with each entry $\theta_j, j=1,\ldots,d,$ drawn from $\mbox{Unif} (-1,1)$ independently. Entry $x_j, j=1,\ldots,d$ of $x \in R^d$ is sampled from $\mbox{Unif}(-1,1)$, independent from parameters $\theta_j$'s and other entries. After generating parameter $\theta$ and predictor $x$, we can compute the value of probability $p$ and generate $y$ according to (\ref{logit_model}). We fix the number of observed samples $N=2^{17}=131,072$ in each experiment, but vary the number of machines $k$. The target is to estimate $\theta$ with different number of parallel splits $k$ of the data. The experiment is repeated for $K=50$ times to obtain reliable average error. And the criterion function is the log-likelihood function,
$$
m(x,y;\theta) = y x^t \theta - \log(1+\exp(x^t \theta)).
$$

The goal of each experiment is to estimate parameter $\theta_0$ maximizing population criterion function 
$$M_0(\theta) = \mathbb{E}_{x,y}[m(x,y;\theta)] = \mathbb{E}_{x,y}[y x^t \theta - \log(1+\exp(x^t \theta))].$$
In this particular case, $\theta_0$ is exactly the same with the true parameter.

In each experiment, we split the data into $k=2,4,8,16,32,64,128$ non-overlapping subsets of size $n=N/k$. We compute a local estimator $\theta_i$ from each subset. And simple averaging estimator is obtained by taking all local estimators, $\theta^{(0)} = \frac{1}{k}\sum_{i=1}^k \theta_i$. Then the one-step estimator $\theta^{(1)}$ could be computed by applying a Newton-Raphson update to $\theta^{(0)}$, i.e., equation (\ref{def_theta(1)}).

The dimension is chosen to be $d=20$ and $d=100$, which could help us understand the performance of those estimators in both low and high dimensional cases. In Fig. \ref{logistic}, we plot the mean squared error of each estimator versus the number of machines $k$.  As we expect, the mean squared error of simple averaging estimator grows rapidly with the number of machines. But, the mean squared error of one-step estimator remains the same with the mean squared error of oracle estimator when the number of machines $k$ is not very large. Even when the $k=128$ and the dimension of predictors $d=100$, the performance of one-step estimator is significantly better than simple averaging estimator. As we can easily find out from Fig. \ref{logistic}, the mean squared error of simple averaging estimator is about $10$ times of that of one-step estimator when $k=128$ and $d=100$.
\begin{figure}[h!]
	\centering
	\begin{subfigure}[b]{0.6\textwidth}
		\centering
		\includegraphics[width=\textwidth]{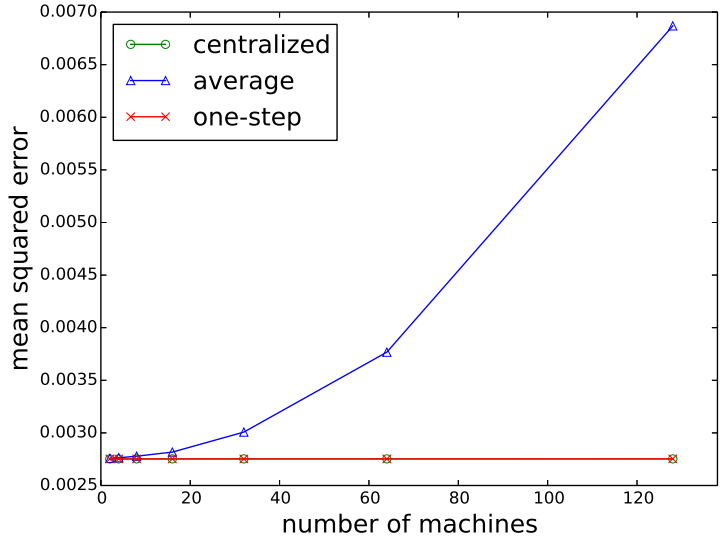}
		\caption{$d=20$}
	\end{subfigure}
	\begin{subfigure}[b]{0.6\textwidth}
		\centering
		\includegraphics[width=\textwidth]{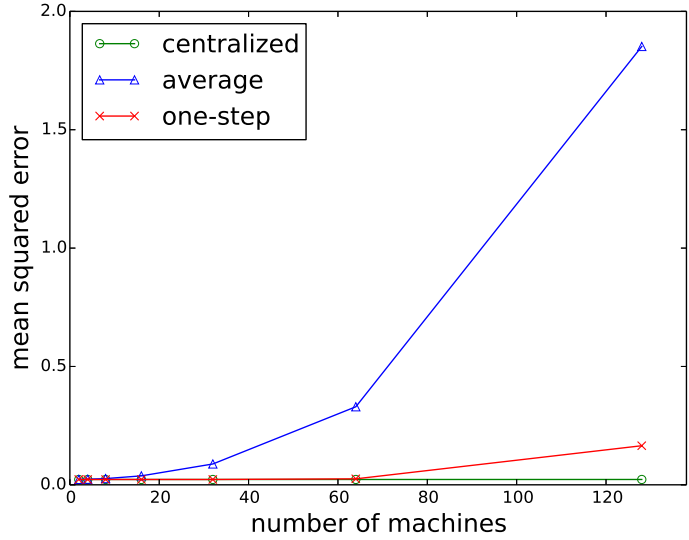}
		\caption{$d=100$}
	\end{subfigure}
	\caption{Logistic Regression: The mean squared error $\| \hat{\theta} - \theta_0 \|^2$ versus number of machines, with fifty simulations. The ``average" is $\theta^{(0)}$ and the ``one-step" is $\theta^{(1)}$. The ``centralized" denotes the oracle estimator with entire data.}
	\label{logistic}
\end{figure}
Detailed values of mean squared error are listed in Table \ref{logistic_table_low} and \ref{logistic_table_high}. From the tables, we can easily figure out that the standard deviation of the error of one-step estimator is significantly smaller than that of simple averaging, especially when the number of machines $k$ is large, which means one-step estimator is more stable.
\begin{table}[ht!]
\centering
%\hspace*{-2cm}
\begin{tabular}{| r | c | c | c | c | c | c | c |}
\hline
no. of machines & 2 & 4 & 8 & 16 & 32 & 64 & 128 \\
\hline
simple avg & 28.036 & 28.066 & 28.247 & 28.865 & 30.587 & 38.478 & 69.898 \\
($\times 10^{-4}$) & (7.982) & (7.989) & (8.145) & (8.443) & (9.812) & (14.247) & (27.655) \\
\hline
one-step & 28.038 & 28.038 & 28.038 & 28.038 & 28.038 & 28.035 & 28.039 \\
($\times 10^{-4}$) & (7.996) & (7.996) & (7.996) & (7.996) & (7.996) & (7.998) & (8.017) \\
\hline
centralized ($\times 10^{-4}$) & \multicolumn{7}{| c |}{28.038 (7.996)} \\
\hline
\end{tabular}
\caption{Logistic Regression ($d=20$): Detailed values of squared error $\| \hat{\theta} - \theta_0 \|^2$. In each cell, the first number is the mean of squared error in $K=50$ experiments and the number in the brackets is the standard deviation of the squared error.}
\label{logistic_table_low}
\end{table}

\begin{table}[ht!]
\centering
%\hspace*{-2cm}
\begin{tabular}{| r | c | c | c | c | c | c | c |}
\hline
no. of machines & 2 & 4 & 8 & 16 & 32 & 64 & 128 \\
\hline
simple avg & 23.066 & 23.818 & 26.907 & 38.484 & 87.896 & 322.274 & 1796.147 \\
($\times 10^{-3}$) & (4.299) & (4.789) & (6.461) & (10.692) & (22.782) & (67.489) & (324.274) \\
\hline
one-step & 22.787 & 22.784 & 22.772 & 22.725 & 22.612 & 24.589 & 151.440 \\
($\times 10^{-3}$) & (4.062) & (4.060) & (4.048) & (3.998) & (3.835) & (4.651) & (43.745) \\
\hline
centralized ($\times 10^{-3}$) & \multicolumn{7}{| c |}{22.787 (4.063)} \\
\hline
\end{tabular}
\caption{Logistic Regression ($d=100$): Detailed values of squared error $\| \hat{\theta} - \theta_0 \|^2$. In each cell, the first number is the mean of squared error in $K=50$ experiments and the number in the brackets is the standard deviation of squared error.}
\label{logistic_table_high}
\end{table}

\subsection{Beta Distribution}
In this example, we use data simulated from Beta distribution $\mbox{Beta}(\alpha,\beta)$, whose p.d.f. is as follows:
$$
f(x;\alpha,\beta) = \frac{\Gamma(\alpha+\beta)}{\Gamma(\alpha)\Gamma(\beta)}x^{\alpha-1}(1-x)^{\beta-1}.
$$

In each experiment, we generate the value of parameter as $\alpha \sim \mbox{Unif}(1,3)$ and $\beta \sim \mbox{Unif}(1,3)$, independently. Once $(\alpha,\beta)$ is determined, we can simulate samples from the above density. In order to examine the performance of two distributed methods when $k$ is extremely large, we choose to use a data set with relatively small size $N=2^{13}=8192$ and let number of machines vary in a larger range $k=2, 4, 8, \ldots, 256$. And the objective is to estimate parameter $(\alpha,\beta)$ from the observed data. The experiment is again repeated for $K=50$ times. The criterion function is $m(x;\theta) = \log f(x;\alpha,\beta)$, which implies that the centralized estimator is the MLE.

Figure \ref{betacomp} and Table \ref{beta_table} show that the one-step estimator has almost the same performance with centralized estimator in terms of MSE and standard deviation when the number of machines $k \le \sqrt{N}$ (i.e., when $k\le64$). However, the one-step estimator performs worse than centralized estimator when $k > \sqrt{N}$ (i.e., when $k=128$ or $256$), which confirms the necessity of condition $k=O(\sqrt{N})$ in Theorem \ref{one_step_M_ap}. In addition, we can easily find out that both simple averaging estimator and resampled averaging estimator are worse than the proposed one-step estimator regardless of the value of $k$.
\begin{figure}[h!]
	\centering
	\begin{subfigure}[b]{0.6\textwidth}
		\centering
		\includegraphics[width=\textwidth]{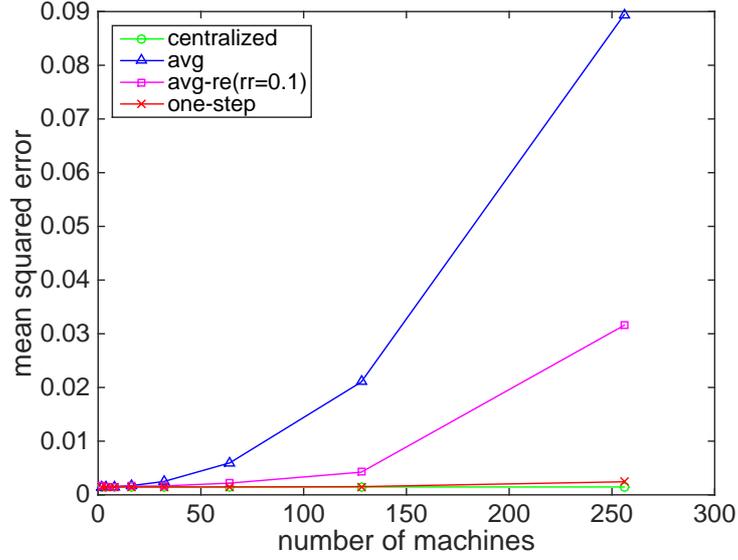}
		\caption{overview}
	\end{subfigure}
	\begin{subfigure}[b]{0.6\textwidth}
		\centering
		\includegraphics[width=\textwidth]{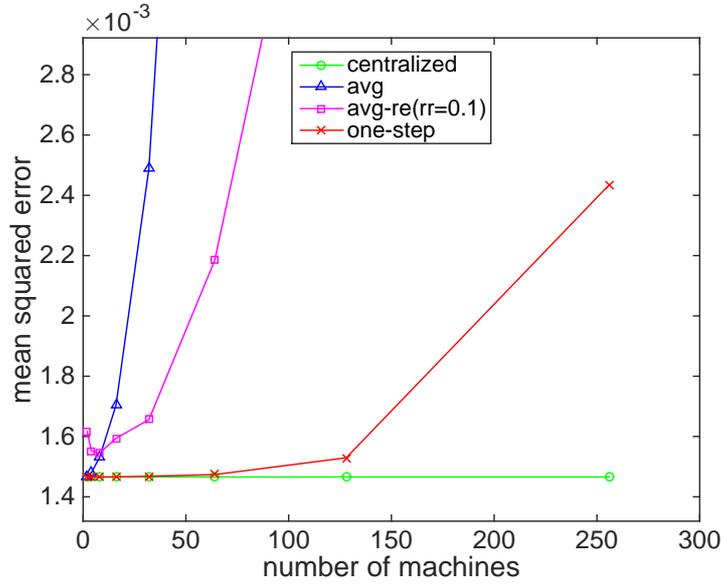}
		\caption{detailed view}
	\end{subfigure}
	\caption{Beta Distribution: The error $\| \theta - \theta_0 \|^2$ versus the number of machines, with fifty simulations, where $\theta_0$ is the true parameter. The ``avg" is $\theta^{(0)}$, the ``avg-re" is $\theta^{(0)}_{re}$ with resampling ratio $rr=10\%$ and the ``one-step" is $\theta^{(1)}$. The ``centralized" denotes maximum likelihood estimator with the entire data.}
	\label{betacomp}
\end{figure}

\begin{table}[ht!]
%\hspace*{-1cm}
\centering
\begin{tabular}{| r | c | c | c | c |}
\hline
\begin{tabular}{c} number of \\ machines \end{tabular} & \begin{tabular}{c} simple avg \\ ($\times 10^{-3}$) \end{tabular} & \begin{tabular}{c} resampled avg \\ ($\times 10^{-3}$) \end{tabular} & \begin{tabular}{c} one-step \\ ($\times 10^{-3}$) \end{tabular} & \begin{tabular}{c} centralized \\ ($\times 10^{-3}$) \end{tabular} \\
\hline
2 & 1.466 (1.936) & 1.616 (2.150) & 1.466 (1.943) & \multirow{9}{*}{\begin{tabular}{c} 1.466 \\ (1.943) \end{tabular}} \\ 
4 & 1.480 (1.907) & 1.552 (2.272) & 1.466 (1.943) & \\ 
8 & 1.530 (1.861) & 1.545 (2.177) & 1.466 (1.943) & \\ 
16 & 1.704 (1.876) & 1.594 (2.239) & 1.466 (1.946) & \\ 
32 & 2.488 (2.628) & 1.656 (2.411) & 1.468 (1.953) & \\ 
64 & 5.948 (5.019) & 2.184 (3.529) & 1.474 (1.994) & \\ 
128 & 21.002 (11.899) & 4.221 (7.198) & 1.529 (2.199) & \\ 
256 & 89.450 (35.928) & 31.574 (36.518) & 2.435 (3.384) & \\ 
\hline
\end{tabular}
\caption{Beta Distribution: Detailed values of squared error $\| \hat{\theta} - \theta_0 \|^2$. In each cell, the first number is the mean squared error with $K=50$ experiments and the number in the brackets is the standard deviation of the squared error.}
\label{beta_table}
\end{table}

\subsection{Beta Distribution with Possibility of Losing Information}
Now, we would like to compare the performance of simple averaging estimator and one-step estimator under a more practical scenario, in which each single local machine could fail to send its information to central machine. We assume those failures would occur independently with probability $r=0.05$. The simulation settings are similar to previous example in Section 4.2, however, we will generate $N=409600$ samples from Beta distribution $\mbox{Beta}(\alpha,\beta)$, where $\alpha$ and $\beta$ are chosen from $\mbox{Unif}(1,3)$, independently. And the goal of experiment is to estimate parameter $(\alpha,\beta)$. In each experiment, we let the number of machines vary $k=8,16, 32, 64, 128, 256, 512$. We also compare the performance of the centralized estimator with entire data and centralized estimator with $(1-r)\times100\% = 95\%$ of entire data. This experiment is repeated for $K=50$ times.

In Figure \ref{beta_crash} (a), we plot the MSE of each estimator against the number of machines. As expected, the MSE of simple averaging estimator grows significantly with the number of machines while the other three remains nearly the same. We can easily find out that performance of simple averaging estimator is far worse than others, especially when the number of machines is large (for instance, when $k = 256 \text{ or } 512$). If we take a closer look at the other three estimators from Fig. \ref{beta_crash} (b), we will find that the performance of one-step estimator is volatile but always remains in a reasonable range. And as expected, the error of one-step estimator converges to the error of oracle estimator with partial data when number of machines $k$ is large.
\begin{figure}[h!]
	\centering
	\begin{subfigure}[b]{0.6\textwidth}
		\centering
		\includegraphics[width=\textwidth]{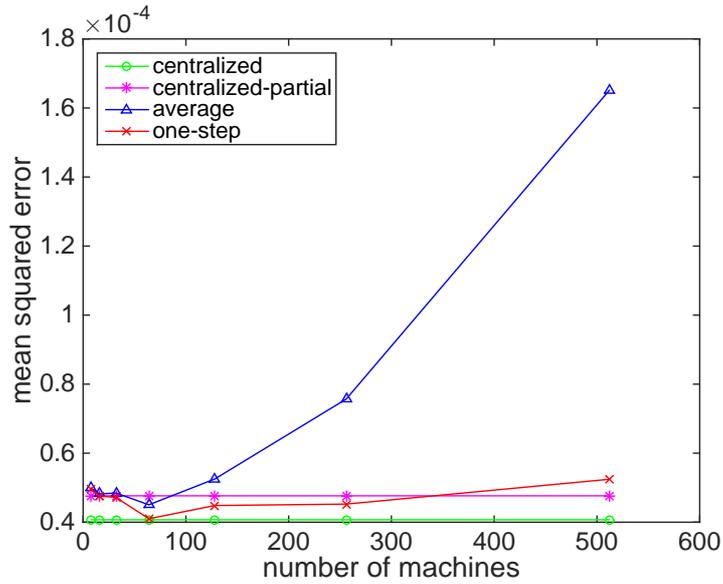}
		\caption{overview}
	\end{subfigure}
	\begin{subfigure}[b]{0.6\textwidth}
		\centering
		\includegraphics[width=\textwidth]{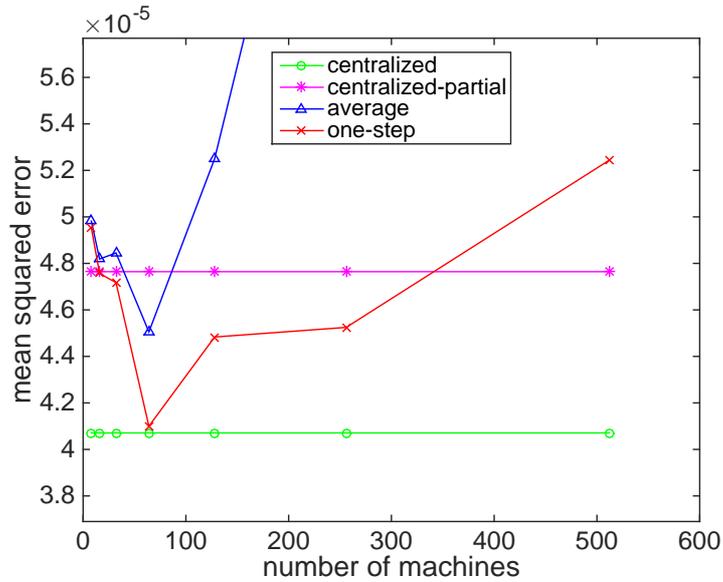}
		\caption{detailed view}
	\end{subfigure}
	\caption{Beta Distribution with Possibility of Losing Information: The error $\| \theta - \theta_0 \|^2$ versus the number of machines, with fifty simulations, where $\theta_0$ is the true parameter. The ``average" is $\theta^{(0)}$ and the ``one-step" is $\theta^{(1)}$. The ``centralized" denotes maximum likelihood estimator with the entire data. And the ``centralized-partial" denotes the maximum likelihood estimator with $(1-r)\times100\% = 95\%$ of data.}
	\label{beta_crash}
\end{figure}

\begin{table}[ht!]
\centering
\begin{tabular}{| r | c | c | c | c |}
\hline
\begin{tabular}{c} number of \\ machines \end{tabular} & \begin{tabular}{c} simple avg \\ ($\times 10^{-5}$) \end{tabular} & \begin{tabular}{c} one-step \\ ($\times 10^{-5}$) \end{tabular} & \begin{tabular}{c} centralized \\ ($\times 10^{-5}$) \end{tabular} & \begin{tabular}{c} centralized(95\%) \\ ($\times 10^{-5}$) \end{tabular} \\
\hline
8 & 4.98 (10.76) & 4.95 (10.62) & \multirow{7}{*}{\begin{tabular}{c} 4.07 \\ (6.91) \end{tabular}} & \multirow{7}{*}{\begin{tabular}{c} 4.76 \\ (9.81) \end{tabular}}\\ 
16 & 4.82 (7.61) & 4.75 (7.40) & &\\ 
32 & 4.85 (9.65) & 4.72 (9.31) & &\\ 
64 & 4.51 (7.89) & 4.10 (7.04) & &\\ 
128 & 5.25 (9.16) & 4.48 (7.77) & &\\ 
256 & 7.57 (12.26) & 4.52 (7.70) & &\\ 
512 & 16.51 (20.15) & 5.24 (8.02) & &\\ 
\hline
\end{tabular}
\caption{Beta Distribution with Possibility of Losing Information: Detailed values of squared error $\| \hat{\theta} - \theta_0 \|^2$. In each cell, the first number is the mean of squared error in $K=50$ experiments and the number in the brackets is the standard deviation of squared error.}
\label{beta_crash_table}
\end{table}

\subsection{Gaussian Distribution with Unknown Mean and Variance}
In this part, we will compare the performance of the simple averaging estimator, the resampled averaging estimator and the one-step estimator when fixing the number of machines $k = \sqrt{N}$ and letting the value of $N$ increase. We draw $N$ samples from $N(\mu,\sigma^2)$, where $\mu \sim \mbox{Unif}(-2,2)$ and $\sigma^2 \sim \mbox{Unif}(0.25,9)$, independently. We let $N$ vary in $\{ 4^3,\ldots,4^{9} \}$ and repeat the experiment for $K=50$ times for each $N$. We choose the criterion function to be the log-likelihood function
$$
m(x;\mu,\sigma^2) = - \frac{(x-\mu)^2}{2\sigma^2} - \frac{1}{2} \log (2\pi) - \frac{1}{2} \log \sigma^2.
$$

Figure \ref{gaussian} and Table \ref{gaussian_table} show that one-step estimator is asymptotically efficient while simple averaging estimator is absolutely not. It is worth noting that the resampled averaging estimator is not asymptotic efficient though it is better than simple averaging estimator. When the number of samples $N$ is relatively small, the one-step estimator is worse than centralized estimator. When the number of samples $N$ grows large, the differences between the one-step estimator and the centralized estimator become minimal in terms of both mean squared error and standard deviation. However, the error of the simple averaging estimator is significant larger than both the one-step estimator and the centralized estimator. When the sample size $N=4^{9}\approx 250,000$, the mean squared error of the simple averaging estimator is more than twice of that of the one-step and the centralized estimator.
\begin{figure}[h!]
	\centering
	\includegraphics[width=0.8\textwidth]{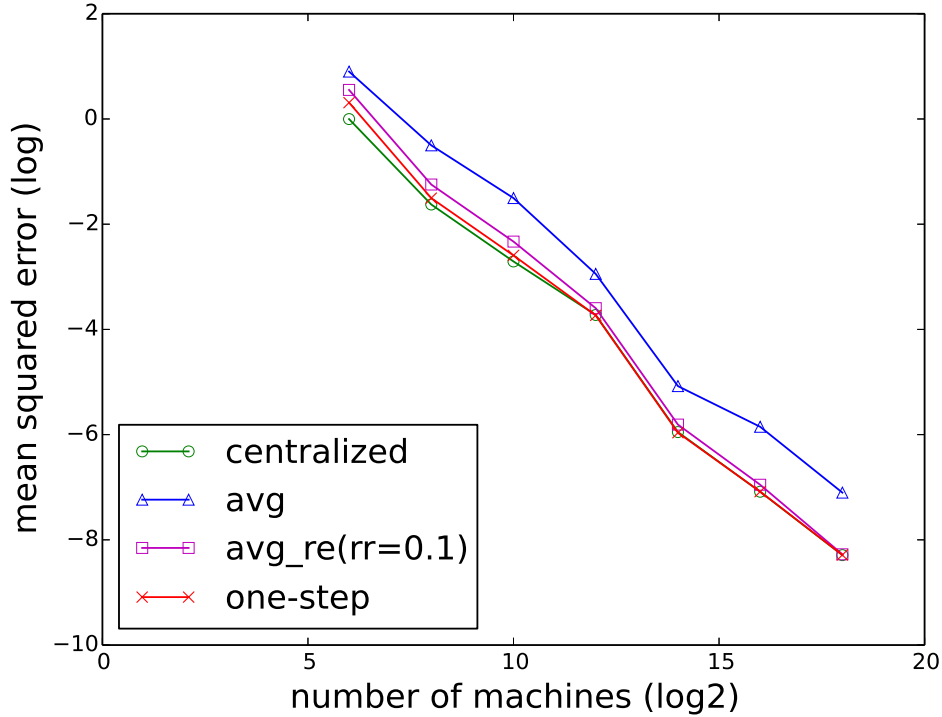}
	\caption{Gaussian Distribution with Unknown Mean and Variance: The log error $\log \| \theta - \theta_0 \|^2$ versus the log number of machines ($\log_2 k$), with fifty repeated experiments for each $N$, where $\theta_0$ is the true parameter. The ``avg", ``avg-re" and ``one-step" denote $\theta^{(0)}$, $\theta^{(0)}_{re}$ with resampling ratio $rr=10\%$ and $\theta^{(1)}$, respectively. The ``centralized" denotes the maximum likelihood estimator with the entire data. The sample size is fixed to be $N=k^2$.}
	\label{gaussian}
\end{figure}

\begin{table}[ht!]
%\hspace*{0cm}
\centering
\begin{tabular}{| r  r  c  c  c  c |}
\hline
\begin{tabular}{c} no. of \\ machines \end{tabular} & \begin{tabular}{c} no. of \\ samples \end{tabular} & simple avg &  resampled avg &  one-step & centralized \\
\hline
8 & 64 & \begin{tabular}{c} 3.022104 \\ (4.385627) \end{tabular} & \begin{tabular}{c} 2.153958 \\ (3.458645) \end{tabular} & \begin{tabular}{c} 1.694668 \\ (2.882794) \end{tabular} & \begin{tabular}{c} 1.388959 \\ (2.424813) \end{tabular} \\  \hline
16 & 256 & \begin{tabular}{c} 0.739784 \\ (1.209734) \end{tabular} & \begin{tabular}{c} 0.392389 \\ (0.739390) \end{tabular} & \begin{tabular}{c} 0.318765 \\ (0.621990) \end{tabular} & \begin{tabular}{c} 0.286175 \\ (0.566140) \end{tabular} \\ \hline
32 & 1024 & \begin{tabular}{c} 0.118766 \\ (0.151695) \end{tabular} & \begin{tabular}{c} 0.041050 \\ (0.053808) \end{tabular} & \begin{tabular}{c} 0.034494 \\ (0.046586) \end{tabular} & \begin{tabular}{c} 0.032563 \\ (0.045779) \end{tabular} \\ \hline
64 & 4096 & \begin{tabular}{c} 0.026839 \\ (0.046612) \end{tabular} & \begin{tabular}{c} 0.016519 \\ (0.030837) \end{tabular} & \begin{tabular}{c} 0.014255 \\ (0.029258) \end{tabular} & \begin{tabular}{c} 0.014414 \\ (0.030533) \end{tabular} \\ \hline
128 & 16384 & \begin{tabular}{c} 0.010996 \\ (0.019823) \end{tabular} & \begin{tabular}{c} 0.004542 \\ (0.009089) \end{tabular} & \begin{tabular}{c} 0.004329 \\ (0.009453) \end{tabular} & \begin{tabular}{c} 0.004357 \\ (0.009315) \end{tabular} \\ \hline
256 & 65536 & \begin{tabular}{c} 0.002909 \\ (0.005785) \end{tabular} & \begin{tabular}{c} 0.001158 \\ (0.002733) \end{tabular} & \begin{tabular}{c} 0.001105 \\ (0.002779) \end{tabular} & \begin{tabular}{c} 0.001099 \\ (0.002754) \end{tabular} \\ \hline
512 & 262144 & \begin{tabular}{c} 0.000843 \\ (0.001426) \end{tabular} & \begin{tabular}{c} 0.000461 \\ (0.000744) \end{tabular} & \begin{tabular}{c} 0.000376 \\ (0.000596) \end{tabular} & \begin{tabular}{c} 0.000376 \\ (0.000595) \end{tabular} \\
\hline
\end{tabular}
\caption{Gaussian Distribution with Unknown Mean and Variance: Detailed values of squared error $\| \hat{\theta} - \theta_0 \|^2$. In each cell, the first number is the mean of squared error in $K=50$ experiments and the number in the brackets is the standard deviation of squared error.}
\label{gaussian_table}
\end{table}

\section{Conclusion}
\label{conclusion}

The M-estimator is a fundamental and high-impact methodology in statistics.
The classic M-estimator theory is based on the assumption that the entire data are available at a central location, and can be processed/computed without considering communication issues.
In many modern estimation problems arising in contemporary sciences and engineering, the classical notion of asymptotic optimality suffers from a significant deficiency: it requires access to all data.
The asymptotic property when the data has to be dealt with distributively is under-developed.
In this paper, we close this gap by considering a distributed one-step estimator.

Our one-step estimator builds on the existing {\it averaging} estimator.
In a nutshell, after obtaining an averaging estimator, this initial estimate is broadcasted to local machines, to facilitate their computation of gradients and hessians of their objective functions.
By doing so, the data do {\it not} need to be transmitted to the central machine.
The central machine than collects the locally estimated gradients and hessians, to produce a global estimate of the overall gradient and overall hessian.
Consequently, a one-step update of the initial estimator is implemented.
Just like the one-step approach has improved the estimator in the classical (non-distributed) setting, we found that the one-step approach can improve the performance of an estimator under the distributed setting, both theoretically and numerically.

Besides the works that have been cited earlier, there are many other results that are in the relevant literature, however they may not be directly technically linked to what's been done here.
We discuss their influence and insights in the next few paragraphs.

An interesting split-and-merge Bayesian approach for variable selection under linear models is proposed in \cite{LiangJRSSB2015}.
The method firstly split the ultrahigh dimensional data set into a number of lower dimensional subsets
and select relevant variables from each of the subsets, and then aggregate the variables selected
from each subset and then select relevant variables from the aggregated data set.
Under mild conditions, the authors show that the proposed approach is consistent, i.e., the underlying true model will be selected in probability $1$ as the sample size becomes large.
This work differs from all the other approaches that we discussed in this paper: it splits the variables, while all other approaches that we referenced (including ours) split the data according to observations.
This paper certainly is in line with our research, however takes a very distinct angle.

An interesting piece of work that combines distributed statistical inference and information theory in communication is presented in \cite{Zhang2013NIPS}.
Their current results need to rely on special model settings: uniform location family $\mathcal{U} = \{P_{\theta}, \theta \in [-1, 1]\}$, where $P_\theta$ denotes the uniform distribution on the interval $[\theta - 1, \theta + 1]$, or Gaussian location families $N_d([-1, 1]^d) = \{N(\theta, \sigma^2 I_{d\times d}) \mid \theta \in \Theta = [-1, 1]^d \}$.
It will be interesting to see whether or not more general results are feasible.

\cite{nowak2003distributed} proposed a distributed expectation-maximization (EM) algorithm for density estimation and clustering in sensor networks.
Though the studied problem is technically different from ours, it provides an inspiring historic perspective: distributed inference has been studied more than ten years ago.

\cite{neiswanger2013asymptotically} propose an asymptotically exact, embarrassingly parallel MCMC method by approximating each sub-posterior with Gaussian density, Gaussian kernel or weighted Gaussian kernel.
They prove the asymptotic correctness of their estimators and bound rate of convergence.
Our paper does not consider the MCMC framework.
The analytical tools that they used in proving their theorems are of interests.

\cite{wang2014median} propose a distributed variable selection algorithm, which accepts a variable if more than half of machines select that variable.
They give upper bounds for the success probability and Mean Squared Error (MSE) of estimator.
This work bears similarity with \cite{LiangJRSSB2015} and \cite{chen2014split}, however with somewhat different emphases.

\cite{kleiner2014scalable} propose a scalable bootstrap (named `bag of little bootstraps' (BLB)) for massive data to assess the quality of estimators.
They also demonstrate its favorable statistical performance through both theoretical analysis and simulation studies.
A comparison with this work will be interesting, however not included here.

\cite{zhao2014partially} consider a partially linear framework for massive heterogeneous data and propose an aggregation type estimator for the commonality parameter that possesses the minimax optimal bound and asymptotic distribution when number of sub-populations does not grow too fast.

A recent work \citep{Arjevani2015} shed interesting new light into the distributed inference problem.
The authors studied the fundamental limits to communication-efficient distributed methods for convex learning and optimization, under different assumptions on the information available to individual machines, and the types of functions considered.
The current problem formulation is more numerical than statistical properties.
Their idea may lead to interesting counterparts in statistical inference.

Besides estimation, other distributed statistical technique may be of interests, such as the distributed principal component analysis \cite{distributePCA2014}.
This paper does not touch this line of research.

Various researchers have studied communication-efficient algorithms for statistical estimation (e.g., see the papers \cite{Ofer2012,Balcan2012,Wainwright2014ICM,McDonald2010} and references therein).
They were not discussed in details here, because they are pretty much discussed/compared in other references of this paper.

There is now a rich and well-developed body of theory for bounding and/or computing the minimax risk for various statistical estimation problems, e.g., see \cite{Yang1999Aos} and references therein.
In several cited references, researchers have started to derive the optimal minimax rate for estimators under the distributed inference setting.
This will be an exciting future research direction.

%%%%%%%%%%%%%%%%%%%%%%%%%%%%%%%%%%%%%%%%%%
\bibliographystyle{plainnat}
\bibliography{dMLEbib}

\appendix
\renewcommand{\theequation}{\Alph{section}.\arabic{equation}}

\section*{Appendix Overview}
The appendix is organized as follows. In Section \ref{bounds}, we analyze the upper bounds of sum of i.i.d. random vectors and random matrices, which will be useful in later proofs. In Section \ref{error_bound}, we derive the upper bounds of the local M-estimators and the simple averaging estimator. We present the proofs to Theorem \ref{one_step_M_ap} and Theorem \ref{one_step_M} in Section \ref{proof_ap} and Section \ref{proof}, respectively. A proof of Corollary \ref{estimation_with_crash} will be in Section \ref{proof_crash}.

\section{Bounds on Gradient and Hessian}
\label{bounds}
In order to establish the convergence of gradients and Hessians of the empirical criterion function to those of population criterion function, which is essential for the later proofs, we will present some results on the upper bound of sums of i.i.d. random vectors and random matrices. We start with stating a useful inequality on the sum of independent random variables from \cite{rosenthal1970subspaces}.
\begin{lemma}[Rosenthal's Inequality, \cite{rosenthal1970subspaces}, Theorem 3]
\label{Rosenthal}
For $q > 2$, there exists constant $C(q)$ depending only on $q$ such that if $X_1,\ldots,X_n$ are independent random variables with $\mathbb{E}[X_j] = 0$ and $\mathbb{E}[|X_j|^q] < \infty$ for all $j$, then
$$
(\mathbb{E} [| \sum_{j=1}^n X_j |^q])^{1/q} \le C(q) \max \left\{ (\sum_{j=1}^n \mathbb{E}[|X_j|^q])^{1/q} , (\sum_{j=1}^n \mathbb{E}[ |X_j|^2 ])^{1/2} \right\}
$$
\end{lemma}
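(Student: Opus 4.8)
The statement is classical (Theorem~3 of Rosenthal (1970)), and the route I would follow is symmetrization followed by a square-function reduction. First note that, since $\max\{a,b\}^q \le a^q + b^q \le 2\max\{a,b\}^q$ for $a,b\ge 0$, the claimed bound is equivalent to
$$
\mathbb{E}\left[\left|\sum_{j=1}^n X_j\right|^q\right] \le C(q)^q\left[\sum_{j=1}^n \mathbb{E}[|X_j|^q] + \left(\sum_{j=1}^n \mathbb{E}[X_j^2]\right)^{q/2}\right],
$$
so it suffices to control the right-hand sum. I would first symmetrize: letting $X_j'$ be independent copies and $\tilde X_j = X_j - X_j'$, Jensen's inequality (conditioning on the $X_j$, and using $\mathbb{E}[X_j']=0$) gives $\mathbb{E}|\sum_j X_j|^q \le \mathbb{E}|\sum_j \tilde X_j|^q$, while convexity yields $\mathbb{E}|\tilde X_j|^q \le 2^q\,\mathbb{E}|X_j|^q$ and $\mathbb{E}\tilde X_j^2 = 2\,\mathbb{E}X_j^2$. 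Thus it is enough to prove the inequality for the independent \emph{symmetric} summands $\tilde X_j$, absorbing the powers of $2$ into $C(q)$.

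For symmetric $\tilde X_j$ the vector $(\tilde X_j)_j$ has the same law as $(\varepsilon_j \tilde X_j)_j$ for an independent Rademacher sequence $(\varepsilon_j)$. Conditioning on the $\tilde X_j$ and applying Khintchine's inequality in $L^q$ gives a constant $A_q$ with
$$
\mathbb{E}\left[\left|\sum_{j} \tilde X_j\right|^q\right] \le A_q\,\mathbb{E}\left[\left(\sum_{j} \tilde X_j^2\right)^{q/2}\right],
$$
so the problem is reduced to bounding the $(q/2)$-th moment of the nonnegative sum $T = \sum_j Z_j$ with $Z_j = \tilde X_j^2$. Writing $p = q/2 > 1$ and splitting off the mean by the triangle inequality in $L^p$,
$$
\|T\|_p \le \Big\|\sum_j (Z_j - \mathbb{E}Z_j)\Big\|_p + \sum_j \mathbb{E}Z_j,
$$
the second term raised to the $p$-th power already reproduces the $L^2$ term $(\sum_j \mathbb{E}\tilde X_j^2)^{q/2}$, since $\sum_j\mathbb{E}Z_j=\sum_j\mathbb{E}\tilde X_j^2$.

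It remains to bound $\mathbb{E}|\sum_j Y_j|^p$ for the independent mean-zero variables $Y_j = Z_j - \mathbb{E}Z_j$, and here I would induct on the exponent. In the base range $2 < q \le 4$ (that is $1 < p \le 2$) the von Bahr--Esseen $L^p$-moment inequality gives $\mathbb{E}|\sum_j Y_j|^p \le 2\sum_j \mathbb{E}|Y_j|^p \le 2^{p+1}\sum_j \mathbb{E}\tilde X_j^{2p} = 2^{p+1}\sum_j \mathbb{E}|\tilde X_j|^q$, which is exactly the $q$-th moment term. For $q > 4$ (that is $p > 2$) I would apply the inequality being proved at the halved exponent $p$ to the $Y_j$; this produces $\sum_j \mathbb{E}|Y_j|^p \lesssim \sum_j\mathbb{E}|\tilde X_j|^q$ together with $(\sum_j \mathbb{E}Y_j^2)^{p/2}\le(\sum_j \mathbb{E}\tilde X_j^4)^{p/2}$, and the latter is recombined into the two canonical terms by the log-convexity bound $\mathbb{E}\tilde X_j^4 \le (\mathbb{E}\tilde X_j^2)^{(p-2)/(p-1)}(\mathbb{E}|\tilde X_j|^{2p})^{1/(p-1)}$ (valid for $p\ge 2$), Hölder over $j$, and Young's inequality; one checks the resulting exponents sum to $1$, so the product is bounded by $(\sum_j\mathbb{E}\tilde X_j^2)^{q/2}+\sum_j\mathbb{E}|\tilde X_j|^q$. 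The recursion halves the exponent at each step and terminates in the base range after $O(\log q)$ steps.

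The main obstacle is the constant bookkeeping: each recursion contributes multiplicative factors from $A_q$, from $(a+b)^p\le 2^{p-1}(a^p+b^p)$, and from the interpolation/Young step, and one must verify these compound to a finite $C(q)$ depending only on $q$. The most delicate point conceptually is the base range $2 < q < 4$: there the naive estimate $\|\cdot\|_p\le\|\cdot\|_2$ would introduce a fourth moment that exceeds the allowed $q$-th moment, so it is essential to use the genuinely different von Bahr--Esseen bound for $1<p\le 2$ to close the induction without a moment gap.
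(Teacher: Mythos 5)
The paper does not actually prove this statement: Lemma \ref{Rosenthal} is imported verbatim as Theorem 3 of \cite{rosenthal1970subspaces} and used as a black box, so there is no in-paper argument to compare yours against. Judged on its own terms, your outline is a correct sketch of one of the standard proofs of Rosenthal's inequality: the reduction from the $\max$ form to the sum form, the symmetrization via $\tilde X_j = X_j - X_j'$ with conditional Jensen, the Khintchine step $\mathbb{E}\bigl|\sum_j \tilde X_j\bigr|^q \le A_q\,\mathbb{E}\bigl[\bigl(\sum_j \tilde X_j^2\bigr)^{q/2}\bigr]$, and the induction that halves the exponent through $Z_j=\tilde X_j^2$ are all sound. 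The two delicate points are handled correctly: you invoke von Bahr--Esseen (rather than the lossy $\|\cdot\|_p\le\|\cdot\|_2$) in the base range $1<p\le 2$, and in the recursive step your interpolation exponent $\alpha=(p-2)/(p-1)$ does solve $4=2\alpha+2p(1-\alpha)$, after which Hölder over $j$ and Young's inequality recombine $(\sum_j\mathbb{E}\tilde X_j^4)^{p/2}$ into the two canonical terms with exponents summing to $1$ as you claim. What remains is genuinely routine: organizing the induction as one on $m$ with $q\in(2^m,2^{m+1}]$ so that the recursion is well founded, and tracking the compounding constants ($2^q$ from symmetrization, $A_q$, $2^{p-1}$ from the triangle-inequality splits), none of which threatens finiteness of $C(q)$. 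So your proposal is a legitimate self-contained derivation of a result the paper only cites.
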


Equipped with the above lemma, we can bound the moments of mean of random vectors.
\begin{lemma}
\label{vector_bound}
Let $X_1, \ldots, X_n \in \mathbb{R}^d$ be i.i.d. random vectors with $\mathbb{E}[X_i] = \mathbf{0}$. And there exists some constants $G>0$ and $q_0 \ge 2$ such that $\mathbb{E} [\| X_i \|^{q_0}] < G^{q_0}$. Let $\mean{X} = \frac{1}{n}\sum_{i=1}^n X_i$, then for $1 \le q \le q_0$, we have
$$
\mathbb{E}[\| \mean{X} \|^q] \le \frac{C_v(q,d)}{n^{q/2}}G^q,
$$
where $C(q,d)$ is a constant depending solely on $q$ and $d$.
\end{lemma}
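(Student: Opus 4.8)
The plan is to reduce the vector moment to a coordinatewise scalar moment, apply Rosenthal's inequality (Lemma \ref{Rosenthal}) in each coordinate, and carefully track how the resulting constant depends on $q$ and $d$. Write $X_i = (X_{i1},\dots,X_{id})^t$ and let $\mean{X}_j = \frac{1}{n}\sum_{i=1}^n X_{ij}$ denote the $j$-th coordinate of $\mean{X}$, so that $n\mean{X}_j = \sum_{i=1}^n X_{ij}$. I would treat the ranges $q\ge 2$ and $1\le q<2$ separately, since Rosenthal is only needed for the former.

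For $q\ge 2$, since $t\mapsto t^{q/2}$ is convex, the power-mean inequality gives $\big(\sum_{j=1}^d \mean{X}_j^2\big)^{q/2} \le d^{q/2-1}\sum_{j=1}^d |\mean{X}_j|^q$, hence $\mathbb{E}[\|\mean{X}\|^q] \le d^{q/2-1}\sum_{j=1}^d \mathbb{E}[|\mean{X}_j|^q]$. For each fixed $j$ the scalars $X_{1j},\dots,X_{nj}$ are i.i.d.\ with mean zero, so Lemma \ref{Rosenthal} applies to $\sum_{i=1}^n X_{ij}$; using the i.i.d.\ assumption to replace $\sum_i \mathbb{E}[|X_{ij}|^q]$ by $n\,\mathbb{E}[|X_{1j}|^q]$ and $\sum_i \mathbb{E}[|X_{ij}|^2]$ by $n\,\mathbb{E}[|X_{1j}|^2]$, one gets a bound in terms of $\max\{n^{1/q}(\mathbb{E}[|X_{1j}|^q])^{1/q},\, n^{1/2}(\mathbb{E}[|X_{1j}|^2])^{1/2}\}$. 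Next I would control the coordinate moments by the vector norm: $|X_{1j}|\le \|X_1\|$, and by Lyapunov's inequality (valid since $q\le q_0$ and $q_0\ge 2$) $\mathbb{E}[|X_{1j}|^q]\le \mathbb{E}[\|X_1\|^q]\le G^q$ and $\mathbb{E}[|X_{1j}|^2]\le \mathbb{E}[\|X_1\|^2]\le G^2$. The key observation is that for $q\ge 2$ we have $n^{1/2}\ge n^{1/q}$, so the second-moment branch $n^{1/2}G$ dominates the maximum; this yields $\mathbb{E}[|\sum_i X_{ij}|^q]\le C(q)^q n^{q/2}G^q$, and therefore $\mathbb{E}[|\mean{X}_j|^q] = n^{-q}\,\mathbb{E}[|\sum_i X_{ij}|^q] \le C(q)^q n^{-q/2}G^q$. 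Summing over the $d$ coordinates gives $\mathbb{E}[\|\mean{X}\|^q]\le d^{q/2}C(q)^q\, n^{-q/2}G^q$, so $C_v(q,d)=d^{q/2}C(q)^q$ works.

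For $1\le q<2$ no heavy machinery is needed. By Lyapunov's inequality $\mathbb{E}[\|\mean{X}\|^q]\le (\mathbb{E}[\|\mean{X}\|^2])^{q/2}$, and expanding the second moment as $\mathbb{E}[\|\mean{X}\|^2]=\frac{1}{n^2}\sum_{i,i'}\mathbb{E}[X_i^t X_{i'}]$ shows that all off-diagonal terms vanish by independence and the mean-zero assumption, leaving $\mathbb{E}[\|\mean{X}\|^2]=\frac{1}{n}\mathbb{E}[\|X_1\|^2]\le G^2/n$. Hence $\mathbb{E}[\|\mean{X}\|^q]\le G^q n^{-q/2}$, and $C_v(q,d)=1$ suffices in this range; one then takes the maximum of the two constants over the two cases.

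The only real subtlety is that Rosenthal's inequality is a scalar statement, so the substance of the argument is the passage from $\|\mean{X}\|$ to its coordinates while keeping the prefactor depending solely on $q$ and $d$ (the factor $d^{q/2-1}$ from the power-mean step and $C(q)^q$ from Rosenthal), together with the correct identification of the dominant branch of the Rosenthal maximum for $q\ge 2$. I expect no genuine obstacle beyond this bookkeeping; the mean-zero and i.i.d.\ hypotheses make the cross-terms disappear and make the Lyapunov reductions clean.
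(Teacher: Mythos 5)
Your proposal is correct and follows essentially the same route as the paper's proof: reduce $\|\mean{X}\|^q$ to coordinatewise moments, apply Rosenthal's inequality per coordinate with the second-moment branch dominating for $q\ge 2$, and handle $1\le q<2$ by Lyapunov plus the direct variance computation (your power-mean step $d^{q/2-1}\sum_j|\mean{X}_j|^q$ even gives a marginally better constant than the paper's bound via $d\max_j|\mean{X}_j|^2$). The only cosmetic point is that Lemma \ref{Rosenthal} is stated for $q>2$, so the boundary case $q=2$ should be covered by the direct computation (which your second case already supplies).
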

\begin{proof}
The main idea of this proof is to transform the sum of random vectors into the sum of random variables and then apply Lemma \ref{Rosenthal}. Let $X_{i,j}$ denote the $j$-th component of $X_i$ and $\mean{X}_j$ denote the $j$-th component of $\mean{X}$. 
\begin{enumerate}
\item Let us start with a simpler case in which $q=2$. 
\begin{align*}
\mathbb{E} [\| \mean{X} \|^2] &= \sum_{j=1}^d \mathbb{E} [ |\mean{X}_j|^2] = \sum_{j=1}^d \sum_{i=1}^n \mathbb{E}[| X_{i,j}/n |^2] \\
&= \sum_{j=1}^d \mathbb{E}[|X_{1,j}|^2]/n = n^{-1} \mathbb{E}[\|X_1\|^2] \le n^{-1}G^2,
\end{align*}
The last inequality holds because $\mathbb{E}[\|X_{1}\|^q] \le (\mathbb{E}[\|X_{1}\|^{q_0}])^{q/q_0} \le G^q$ for $1 \le q \le q_0$ by H\"{o}lder's inequality.
\item When $1 \le q < 2$, we have
$$
\mathbb{E} [\| \mean{X} \|^q] \le (\mathbb{E} [\| \mean{X} \|^2])^{q/2} \le n^{-q/2}G^q.
$$
\item For $2 < q \le q_0$, with some simple algebra, we have
\begin{align*}
\mathbb{E} [\| \mean{X} \|^q] &= \mathbb{E} \left[ (\sum_{j=1}^d |\mean{X}_j|^2)^{q/2} \right] \\
&\le \mathbb{E} \left[ (d \max_{1 \le j \le d} |\mean{X}_j|^2)^{q/2} \right] = d^{q/2} \mathbb{E} \left[ \max_{1 \le j \le d} |\mean{X}_j|^q \right] \\
&\le d^{q/2} \mathbb{E} \left[ \sum_{j=1}^d |\mean{X}_j|^q \right] = d^{q/2} \sum_{j=1}^d \mathbb{E} \left[ |\mean{X}_j|^q \right].
\end{align*}
As a continuation, we have
\begin{align*}
& \quad \mathbb{E} [\| \mean{X} \|^q] \le d^{q/2} \sum_{j=1}^d \mathbb{E}[|\mean{X}_j|^q] \\
&\le d^{q/2} \sum_{j=1}^d [C(q)]^q \max \left\{ \sum_{i=1}^n \mathbb{E}[|X_{i,j}/n|^q] , (\sum_{i=1}^n \mathbb{E}[ |X_{i,j}/n|^2 ])^{q/2} \right\} \qquad \text{(Lemma \ref{Rosenthal})}\\
&= d^{q/2} [C(q)]^q \sum_{j=1}^d \max \left\{ \frac{\mathbb{E}[|X_{1,j}|^q]}{n^{q-1}} , \frac{(\mathbb{E}[|X_{1,j}|^2])^{q/2}}{n^{q/2}} \right\} \\
&\le d^{q/2+1} [C(q)]^q \max \left\{ \frac{\mathbb{E}[\|X_{1}\|^q]}{n^{q-1}} , \frac{(\mathbb{E}[\|X_{1}\|^2])^{q/2}}{n^{q/2}} \right\} \qquad \text{(since $\mathbb{E}[|X_{1,j}|^q] \le \mathbb{E}[\|X_{1}\|^q]$)}\\
&\le d^{q/2+1} [C(q)]^q \max \left\{ \frac{G^q}{n^{q-1}} , \frac{G^q}{n^{q/2}} \right\} \qquad \text{(H\"{o}lder's inequality)}\\
&= \frac{d^{q/2+1} [C(q)]^q}{n^{q/2}}G^q. \qquad \text{($q-1>q/2$ when $q>2$)}
\end{align*}
To complete this proof, we just need to set $C_v(q,d)=d^{q/2+1} [C(q)]^q$.
\end{enumerate}
\end{proof}

To bound the moment of the mean of i.i.d. random matrices, let us consider another matrix norm -- Frobenius norm $\vertiii{\cdot}_F$, i.e., 
$$\vertiii{A}_F = \sqrt{\sum_{i,j} | a_{ij} |^2}, \forall A \in \mathbb{R}^{d \times d}.$$
Note that
$$
\vertiii{A}_F = \sqrt{\sum_{i,j} | a_{ij} |^2} = \sqrt{\mbox{trace} (A^tA)} \ge \sqrt{ \sup_{u \in \mathbb{R}^d: \|u\| \le 1} \| A^tAu \| } = \vertiii{A},
$$
and
$$
\vertiii{A}_F \le \sqrt{ d \sup_{u \in \mathbb{R}^d: \|u\| \le 1} \| A^tAu \| } = \sqrt{d}\, \vertiii{A}.
$$
With Frobenius norm, we can regard a random matrix $X \in \mathbb{R}^{d \times d}$ as a random vector in $\mathbb{R}^{d^2}$ and apply Lemma \ref{vector_bound} to obtain the following lemma.
\begin{lemma}
\label{matrix_bound}
Let $X_1,\ldots,X_n \in \mathbb{R}^{d \times d}$ be i.i.d. random matrices with $\mathbb{E}[X_i] = \mathbf{0}_{d \times d}$. Let $\vertiii{X_i}$ denote the norm of $X_i$, which is defined as its maximal singular value. Suppose $\mathbb{E} [\vertiii{X_i}^{q_0}] \le H^{q_0}$, where $q_0 \ge 2$ and $H>0$. Then for $\mean{X} = \frac{1}{n}\sum_{i=1}^n X_i$ and $1 \le q \le q_0$, we have
$$
\mathbb{E} \left[ \vertiii{\mean{X}}^q \right] \le \frac{C_m(q,d)}{n^{q/2}}H^q,
$$
where $C_m(q,d)$ is a constant depending on $q$ and $d$ only.
\end{lemma}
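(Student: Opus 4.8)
The plan is to reduce the matrix statement to the vector statement in Lemma \ref{vector_bound} by vectorizing the matrices and exploiting the two-sided comparison $\vertiii{A} \le \vertiii{A}_F \le \sqrt{d}\,\vertiii{A}$ established immediately above. The key observation is that if we identify each $d \times d$ matrix with the $d^2$-dimensional vector of its entries, then the Frobenius norm $\vertiii{\cdot}_F$ is precisely the Euclidean norm on $\mathbb{R}^{d^2}$, so the machinery of Lemma \ref{vector_bound} applies verbatim once its hypotheses are checked.

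First I would pass from the operator norm to the Frobenius norm on the left-hand side, writing $\mathbb{E}[\vertiii{\mean{X}}^q] \le \mathbb{E}[\vertiii{\mean{X}}_F^q]$. Next, set $Y_i = \mathrm{vec}(X_i) \in \mathbb{R}^{d^2}$; since vectorization is linear, $\mathbb{E}[Y_i] = \mathbf{0}$, the $Y_i$ remain i.i.d., and $\mathrm{vec}(\mean{X}) = \frac{1}{n}\sum_{i=1}^n Y_i = \mean{Y}$, with $\|\mean{Y}\| = \vertiii{\mean{X}}_F$. It then remains to verify the moment hypothesis of Lemma \ref{vector_bound} for the $Y_i$: using the other direction of the norm comparison, $\mathbb{E}[\|Y_i\|^{q_0}] = \mathbb{E}[\vertiii{X_i}_F^{q_0}] \le d^{q_0/2}\,\mathbb{E}[\vertiii{X_i}^{q_0}] \le (\sqrt{d}\,H)^{q_0}$, so the hypothesis holds in dimension $d^2$ with constant $G = \sqrt{d}\,H$.

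Applying Lemma \ref{vector_bound} to $Y_1,\ldots,Y_n$ then yields $\mathbb{E}[\|\mean{Y}\|^q] \le C_v(q,d^2)\, n^{-q/2} (\sqrt{d}\,H)^q$, and chaining this with the first inequality gives $\mathbb{E}[\vertiii{\mean{X}}^q] \le d^{q/2} C_v(q,d^2)\, n^{-q/2} H^q$, so the claim follows upon setting $C_m(q,d) = d^{q/2} C_v(q,d^2)$. I do not anticipate a genuine obstacle here, since all the analytic work is already carried by Lemma \ref{vector_bound}; the only point demanding care is that the two norm inequalities point in opposite directions---one must use $\vertiii{\cdot} \le \vertiii{\cdot}_F$ to control the output and $\vertiii{\cdot}_F \le \sqrt{d}\,\vertiii{\cdot}$ to feed in the moment bound---and one must correctly propagate the resulting factor $d^{q/2}$ into the final constant.
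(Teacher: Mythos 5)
Your proposal is correct and follows essentially the same route as the paper: bound the operator norm by the Frobenius norm, view the matrices as $d^2$-dimensional vectors, verify the moment hypothesis via $\vertiii{\cdot}_F \le \sqrt{d}\,\vertiii{\cdot}$ with constant $G = \sqrt{d}\,H$, and apply Lemma \ref{vector_bound}. You even arrive at the same constant $C_m(q,d) = d^{q/2} C_v(q,d^2)$ as the paper.
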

\begin{proof}
By the fact $\vertiii{A}_F \le \sqrt{d} \vertiii{A}$, we have
$$
\mathbb{E} \left[ \vertiii{X_i}_F^{q_0} \right] \le \mathbb{E} \left[ \vertiii{\sqrt{d} X_i}^{q_0} \right] \le (\sqrt{d}H)^{q_0}.
$$
Then by the fact $ \vertiii{A} \le \vertiii{A}_F$ and Lemma \ref{vector_bound}, we have
\begin{align*}
\mathbb{E} \left[ \vertiii{\mean{X}}^q \right] \le \mathbb{E} \left[ \vertiii{\mean{X}}_F^q \right] \le \frac{C_v(q,d^2)}{n^{q/2}}(\sqrt{d}H)^q = \frac{C_v(q,d^2)d^\frac{q}{2}}{n^{q/2}}H^q.
\end{align*}
In the second inequality, we treat $\mean{X}$ as a $d^2$-dimensional random vector and then apply Lemma \ref{vector_bound}. Then the proof can be completed by setting $C_m(q,d) = C_v(q,d^2)d^\frac{q}{2}$.
\end{proof}

\section{Error Bound of Local M-estimator and Simple Averaging Estimator}
\label{error_bound}
Since the simple averaging estimator is the average of all local estimators and the one-step estimator is just a single Newton-Raphson update from the simple averaging estimator. Thus, it is natural to study the upper bound of the mean squared error (MSE) of a local M-estimator and the upper bound of the MSE of the simple averaging estimator. The main idea in the following proof is similar to the thread in the proof of \textbf{Theorem 1} in \cite{zhang2013communication}, but the conclusions are different. Besides, in the following proof, we use a correct analogy of mean value theorem for vector-valued functions.
\subsection{Bound the Error of Local M-estimators $\theta_i, i=1,\ldots,k$}
In this subsection, we would like to analyze the mean squared error of a local estimator $\theta_i = \argmax_{\theta \in \Theta} M_i(\theta)$, $i = 1, \ldots, k$ and prove the following lemma in the rest of this subsection.
\begin{lemma}
\label{asym_norm2}
Let $\Sigma = \ddot{M}_0(\theta_0)^{-1} \mathbb{E}[\dot{m}(X;\theta_0)\dot{m}(X;\theta_0)^t] \ddot{M}_0(\theta_0)^{-1}$, where the expecation is taken with respect to $X$. Under Assumption \ref{parameter_space}, \ref{invertibility} and \ref{smoothness}, for each $i=1,\ldots,k$, we have
$$
\mathbb{E} [\| \theta_i - \theta_0 \|^2] \le \frac{2}{n} Tr(\Sigma) + O(n^{-2}).
$$
\end{lemma}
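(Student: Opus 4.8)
The plan is to begin from the first-order optimality condition. Since $m(x;\theta)$ is concave and, for $n$ large, $\theta_i$ lies in the interior of $\Theta$ (by consistency of the local M-estimator), we have $\dot{M}_i(\theta_i) = \mathbf{0}$. Because the naive mean value theorem fails for vector-valued maps, I would expand the gradient using the integral form of the fundamental theorem of calculus: setting $\bar{H}_i = \int_0^1 \ddot{M}_i\big(\theta_0 + t(\theta_i - \theta_0)\big)\,dt$, we obtain $\mathbf{0} = \dot{M}_i(\theta_i) = \dot{M}_i(\theta_0) + \bar{H}_i(\theta_i - \theta_0)$, and hence $\theta_i - \theta_0 = -\bar{H}_i^{-1}\dot{M}_i(\theta_0)$ on the event where $\bar{H}_i$ is invertible. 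This is the ``correct analogy'' flagged in the text.

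Next I would isolate the leading term by decomposing
$$
\theta_i - \theta_0 = -\ddot{M}_0(\theta_0)^{-1}\dot{M}_i(\theta_0) + \big(\ddot{M}_0(\theta_0)^{-1} - \bar{H}_i^{-1}\big)\dot{M}_i(\theta_0),
$$
and applying $\|a+b\|^2 \le 2\|a\|^2 + 2\|b\|^2$; this inequality is the source of the factor $2$ in the bound. For the leading piece, note that $\theta_0$ maximizes $M_0$, so $\dot{M}_0(\theta_0) = \mathbb{E}[\dot{m}(X;\theta_0)] = \mathbf{0}$, whence $\dot{M}_i(\theta_0) = \frac{1}{n}\sum_{x\in S_i}\dot{m}(x;\theta_0)$ is an average of i.i.d. mean-zero vectors with covariance $\frac1n\mathbb{E}[\dot{m}(X;\theta_0)\dot{m}(X;\theta_0)^t]$. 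A direct second-moment computation, using the cyclic property of the trace and the symmetry of $\ddot{M}_0(\theta_0)^{-1}$, then gives $\mathbb{E}\big[\|\ddot{M}_0(\theta_0)^{-1}\dot{M}_i(\theta_0)\|^2\big] = \frac1n\mathrm{Tr}(\Sigma)$, which after the factor-of-two split yields the announced $\frac{2}{n}\mathrm{Tr}(\Sigma)$.

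The main obstacle is showing the remainder $\mathbb{E}\big[\|(\ddot{M}_0(\theta_0)^{-1} - \bar{H}_i^{-1})\dot{M}_i(\theta_0)\|^2\big]$ is $O(n^{-2})$. The strategy is to display it as a product of two factors each of order $n^{-1/2}$ in an appropriate moment norm and then split via Cauchy--Schwarz (or H\"older), which is precisely why Assumption \ref{smoothness} is stated with eighth moments. Using the resolvent identity $\ddot{M}_0(\theta_0)^{-1} - \bar{H}_i^{-1} = \bar{H}_i^{-1}\big(\bar{H}_i - \ddot{M}_0(\theta_0)\big)\ddot{M}_0(\theta_0)^{-1}$, I would bound $\vertiii{\bar{H}_i - \ddot{M}_0(\theta_0)}$ by splitting it into a sampling fluctuation $\vertiii{\ddot{M}_i(\theta_0) - \ddot{M}_0(\theta_0)}$, controlled by Lemma \ref{matrix_bound}, plus a Lipschitz term dominated by $L(X)\|\theta_i - \theta_0\|$ from Assumption \ref{smoothness}; both are $O(n^{-1/2})$ in the relevant moments once the consistency rate of $\theta_i$ is quantified. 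Combined with $\vertiii{\ddot{M}_0(\theta_0)^{-1}}\le 1/\lambda$ from Assumption \ref{invertibility} and the $O(n^{-1/2})$ bound on $\|\dot{M}_i(\theta_0)\|$ from Lemma \ref{vector_bound}, the product is $O(n^{-1})$ in norm, hence $O(n^{-2})$ in squared norm.

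The delicate technical point I would treat carefully is the invertibility of $\bar{H}_i$: on a bad event where $\theta_i$ strays far from $\theta_0$, the averaged Hessian need not be negative definite and $\bar{H}_i^{-1}$ may be ill-behaved. I would therefore restrict the above estimates to a high-probability good event on which $\|\theta_i - \theta_0\| \le \delta$ and $\bar{H}_i$ is uniformly negative definite, and bound the complement using the compactness of $\Theta$ (finite diameter $D$ from Assumption \ref{parameter_space}) together with the moment bounds, so that the contribution of the bad event is also $O(n^{-2})$ and does not disturb the stated order.
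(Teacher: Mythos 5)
Your proposal is correct and follows essentially the same route as the paper: the integral form of the mean value theorem in place of the (invalid) vector mean value theorem, a good-event/bad-event split to secure local strong concavity and the rate $\|\theta_i-\theta_0\|=O(n^{-1/2})$ in moments, the $\|a+b\|^2\le 2\|a\|^2+2\|b\|^2$ split isolating $-\ddot{M}_0(\theta_0)^{-1}\dot{M}_i(\theta_0)$ (whence the factor $2$ and $\frac{1}{n}\mathrm{Tr}(\Sigma)$), and H\"older plus the moment lemmas to make the remainder $O(n^{-2})$. The only cosmetic difference is that you invert the random averaged Hessian $\bar{H}_i$ and use a resolvent identity, whereas the paper keeps the remainder in the form $\left[\bar{H}_i-\ddot{M}_0(\theta_0)\right](\theta_i-\theta_0)$ and inverts only the deterministic $\ddot{M}_0(\theta_0)$; both yield the same bound.
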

\noindent
Since $\dot{M}_i(\theta_i) = 0$, by Theorem 4.2 in Chapter XIII of \cite{lang1993real}, we have
\begin{align*}
0 &= \dot{M}_i(\theta_i) = \dot{M}_i(\theta_0) + \int_{0}^1 \ddot{M}_i((1 - \rho) \theta_0 +\rho \theta_i) d\rho \, [\theta_i - \theta_0] \\
&= \dot{M}_i(\theta_0) + \ddot{M}_0(\theta_0) [\theta_i - \theta_0] + \left[ \int_{0}^1 \ddot{M}_i((1 - \rho) \theta_0 +\rho \theta_i) d\rho - \ddot{M}_0(\theta_0) \right] [\theta_i - \theta_0] \\
%& \qquad \text{(add and subtract $\ddot{M}_0(\theta_0)$)}\\
&= \dot{M}_i(\theta_0) + \ddot{M}_0(\theta_0) [\theta_i - \theta_0] + \left[ \int_{0}^1 \ddot{M}_i((1 - \rho) \theta_0 +\rho \theta_i) d\rho \, - \ddot{M}_i(\theta_0) \right] [\theta_i - \theta_0] \\
& \qquad + [\ddot{M}_i(\theta_0) - \ddot{M}_0(\theta_0)] [\theta_i - \theta_0] \qquad \text{(subtract and add $\ddot{M}_i(\theta_0)$)},
\end{align*}
\begin{remark*}
Here, it is worth noting that there is no analogy of mean value theorem for vector-valued functions, which implies that there does not necessarily exist $\theta'$ lying on the line between $\theta_i$ and $\theta_0$ satisfying $\dot{M}_i(\theta_i) - \dot{M}_i(\theta_0) = \ddot{M}_i(\theta') (\theta_i - \theta_0)$. Numerous papers make errors by claiming such $\theta'$ lies between $\theta_i$ and $\theta_0$.
\end{remark*}
If last two terms in above equation are reasonably small, this lemma follows immediately. So, our strategy is as follows. First, we show that the mean squared error of both $[\int_{0}^1 \ddot{M}_i((1 - \rho) \theta_0 +\rho \theta_i) d\rho \, - \ddot{M}_i(\theta_0)] [\theta_i - \theta_0]$ and $[\ddot{M}_i(\theta_0) - \ddot{M}_0(\theta_0)] [\theta_i - \theta_0]$ is small under some ``good" events. Then we will show the probability of ``bad" events is small enough. And Lemma \ref{asym_norm2} will follow by the fact that $\Theta$ is compact.

Suppose $S_i = \{x_1,\ldots,x_n\}$ is the data set on local machine $i$. Let us define some good events:
\begin{eqnarray*}
E_1 &=& \left\{ \frac{1}{n}\sum_{j=1}^n L(x_j) \le 2L \right\}, \\
E_2 &=& \left\{ \vertiii{ \ddot{M}_i(\theta_0) - \ddot{M}_0(\theta_0)} \le \lambda/4 \right\}, \\
E_3 &=& \left\{ \| \dot{M}_i(\theta_0) \| \le \frac{\lambda}{2} \delta' \right\},
\end{eqnarray*}
where $\delta' = \min(\delta,\frac{\lambda}{8L})$, $\lambda$ is the constant in Assumption \ref{invertibility} and $L$ and $\delta$ are the constants in Assumption \ref{smoothness}. We will show that event $E_1$ and $E_2$ ensure that $M_i(\theta)$ is strictly concave at a neighborhood of $\theta_0$. And we will also show that in event $E_3$, $\theta_i$ is fairly close to $\theta_0$. Let $E = E_1 \cap E_2 \cap E_3$, then we have the following lemma:
\begin{lemma}
Under event $E$, we have
$$
\| \theta_i - \theta_0 \| \le \frac{4}{\lambda}  \| \dot{M}_i(\theta_0) \|
$$
\end{lemma}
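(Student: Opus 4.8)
The plan is to first use the events $E_1$ and $E_2$ to upgrade the single-point curvature bound of Assumption \ref{invertibility} into a uniform strong-concavity estimate on the neighborhood $B_{\delta'} = \{\theta \in \Theta : \|\theta - \theta_0\| \le \delta'\}$, and then to use $E_3$ together with the global concavity of $m(x;\theta)$ to confine the maximizer $\theta_i$ to this neighborhood, where the curvature is controlled.

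For the first step, I would write, for $\theta \in B_{\delta'}$, the decomposition $\ddot{M}_i(\theta) - \ddot{M}_0(\theta_0) = [\ddot{M}_i(\theta) - \ddot{M}_i(\theta_0)] + [\ddot{M}_i(\theta_0) - \ddot{M}_0(\theta_0)]$. On $E_1$ the Lipschitz hypothesis of Assumption \ref{smoothness} gives $\vertiii{\ddot{M}_i(\theta) - \ddot{M}_i(\theta_0)} \le \big(\tfrac1n\sum_j L(x_j)\big)\|\theta-\theta_0\| \le 2L\delta' \le \lambda/4$, using $\delta' \le \lambda/(8L)$; on $E_2$ the second bracket is $\le \lambda/4$. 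Hence $\vertiii{\ddot{M}_i(\theta) - \ddot{M}_0(\theta_0)} \le \lambda/2$, and with Assumption \ref{invertibility} this yields, for every unit vector $u$ and every $\theta\in B_{\delta'}$, the bound $u^t\ddot{M}_i(\theta)u \le u^t\ddot{M}_0(\theta_0)u + \lambda/2 \le -\lambda/2$. Thus $M_i$ is $\lambda/2$-strongly concave on $B_{\delta'}$.

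The second step, which I expect to be the main obstacle, is to localize $\theta_i$, i.e.\ to show $\theta_i \in B_{\delta'}$. The point is that the curvature bound is available only inside $B_{\delta'}$, so I must rule out the global maximizer escaping to a region where I have no control; this is exactly where the global concavity of $m$ is essential. Fixing any unit direction $v$ and setting $\phi(t) = M_i(\theta_0 + tv)$, concavity of $m$ makes $\phi$ concave. On $[0,\delta']$ the curvature bound gives $\phi''(t) \le -\lambda/2$, so $\phi'(\delta') \le \phi'(0) - (\lambda/2)\delta' \le \|\dot{M}_i(\theta_0)\| - (\lambda/2)\delta' \le 0$ on $E_3$. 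Since $\phi'$ is non-increasing, $\phi$ cannot increase once $t \ge \delta'$, so along every ray from $\theta_0$ the value $M_i$ does not grow beyond radius $\delta'$; by global concavity the maximizer $\theta_i$ must therefore lie in the interior of $B_{\delta'}$, whence $\dot{M}_i(\theta_i)=0$.

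For the final step, with $\theta_i \in B_{\delta'}$ I would invoke the integral identity already displayed, namely $0 = \dot{M}_i(\theta_0) + A\,(\theta_i - \theta_0)$ with $A = \int_0^1 \ddot{M}_i((1-\rho)\theta_0 + \rho\theta_i)\,d\rho$. The entire segment lies in the convex set $B_{\delta'}$, so $u^tAu = \int_0^1 u^t\ddot{M}_i((1-\rho)\theta_0 + \rho\theta_i)u\,d\rho \le -\lambda/2$ for every unit $u$; hence $A$ is negative definite with $\vertiii{A^{-1}} \le 2/\lambda$. Solving for the displacement gives $\theta_i - \theta_0 = -A^{-1}\dot{M}_i(\theta_0)$, and therefore $\|\theta_i - \theta_0\| \le \vertiii{A^{-1}}\,\|\dot{M}_i(\theta_0)\| \le (2/\lambda)\|\dot{M}_i(\theta_0)\| \le (4/\lambda)\|\dot{M}_i(\theta_0)\|$, which is in fact slightly stronger than the claimed inequality, leaving room to absorb any looseness in the localization argument.
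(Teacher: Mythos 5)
Your proof is correct, and it shares the paper's first step verbatim: using $E_1$, $E_2$, the Lipschitz bound, and the choice $\delta'=\min(\delta,\lambda/(8L))$ to get $u^t\ddot{M}_i(\theta)u\le-\lambda/2$ uniformly on $B_{\delta'}$. Where you diverge is in the other two steps. For localization, the paper writes the chain $\frac{\lambda}{2}\delta'\ge\|\dot{M}_i(\theta_0)\|=\|\dot{M}_i(\theta_0)-\dot{M}_i(\theta_i)\|\ge\frac{\lambda}{2}\|\theta_i-\theta_0\|$, invoking the local curvature bound on a segment that is not yet known to lie in $B_{\delta'}$; your one-dimensional ray argument ($\phi(t)=M_i(\theta_0+tv)$, $\phi'(\delta')\le\phi'(0)-(\lambda/2)\delta'\le 0$ on $E_3$, then monotonicity of $\phi'$ from global concavity) supplies exactly the missing justification and is the more careful of the two. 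For the final bound, the paper compares function values via Taylor's theorem, $M_i(\theta_i)\le M_i(\theta_0)+\dot{M}_i(\theta_0)^t(\theta_i-\theta_0)-\frac{\lambda}{4}\|\theta_i-\theta_0\|^2$, together with $M_i(\theta_0)\le M_i(\theta_i)$, which yields the stated constant $4/\lambda$; you instead solve the integral mean-value identity $0=\dot{M}_i(\theta_0)+A(\theta_i-\theta_0)$ and bound $\vertiii{A^{-1}}\le 2/\lambda$ from the uniform curvature on the (now certified) segment, which gives the sharper constant $2/\lambda$. Both final steps are valid; yours is slightly stronger and reuses machinery (the Lang integral identity) that the surrounding proof of Lemma \ref{asym_norm2} already employs, while the paper's Taylor-based step is marginally more elementary in that it never needs to invert the averaged Hessian. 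The only caveats, shared with the paper, are the implicit assumptions that $\theta_0+tv\in\Theta$ for $t\le\delta'$ and that $\theta_i$ is interior to $\Theta$ so that $\dot{M}_i(\theta_i)=0$.
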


\begin{proof}
First, we will show $\ddot{M}_i(\theta)$ is a negative definite matrix over a ball centered at $\theta_0$: $B_{\delta'} = \{ \theta \in \Theta: \| \theta - \theta_0 \| \le \delta' \} \subset B_{\delta}$. For any fixed $\theta \in B_{\delta'}$, we have
\begin{eqnarray*}
\vertiii{ \ddot{M}_i(\theta) - \ddot{M}_0(\theta_0) } &\le& \vertiii{\ddot{M}_i(\theta) - \ddot{M}_i(\theta_0)} + \vertiii{ \ddot{M}_i(\theta_0) - \ddot{M}_0(\theta_0) } \\
&\le& 2L\| \theta - \theta_0 \| + \frac{\lambda}{4} \le \lambda/4 + \lambda/4 = \lambda/2, 
\end{eqnarray*}
where we apply event $E_1$, Assumption \ref{smoothness} and the fact that $\delta'=\min(\delta,\frac{\lambda}{8L})$ on the first term and event $E_2$ on the second term. Since $\ddot{M}_0(\theta_0)$ is negative definite by Assumption \ref{invertibility}, above inequality implies that $\ddot{M}_i(\theta)$ is negative definite for all $\theta \in B_{\delta'}$ and
\begin{equation}
\label{concavity}
\sup_{u \in \mathbb{R}^d : \| u \| \le 1} u^t \ddot{M}_i(\theta) u \le -\lambda/2. 
\end{equation}
With negative definiteness of $\ddot{M}_i(\theta), \theta \in B_{\delta'}$, event $E_3$ and concavity of $M_i(\theta), \theta \in \Theta$, we have
$$
\frac{\lambda}{2} \delta' \stackrel{E_3}{\ge} \| \dot{M}_i(\theta_0) \| = \| \dot{M}_i(\theta_0) - \dot{M}_i(\theta_i) \|  \stackrel{(\ref{concavity})}{\ge} \frac{\lambda}{2} \| \theta_i - \theta_0 \|.
$$
Thus, we know $\| \theta_i - \theta_0 \| \le \delta'$, or equivalently, $\theta_i \in B_{\delta'}$. Then by applying Taylor's Theorem on $M_i(\theta)$ at $\theta_0$, we have
\begin{eqnarray*}
M_i(\theta_i) \stackrel{ (\ref{concavity}) }{\le} M_i(\theta_0) + \dot{M}_i(\theta_0)^t (\theta_i - \theta_0) - \frac{\lambda}{4} \| \theta_i - \theta_0 \|^2.
\end{eqnarray*}
Thus, as $M_i(\theta_0) \le M_i(\theta_i)$ by definiton,
\begin{align*}
\frac{\lambda}{4} \| \theta_i - \theta_0 \|^2 &\le M_i(\theta_0) - M_i(\theta_i) + \dot{M}_i(\theta_0)^t (\theta_i - \theta_0) \\
&\le  \| \dot{M}_i(\theta_0) \| \| \theta_i - \theta_0 \|,
\end{align*}
which implies
$$
\| \theta_i - \theta_0 \| \le \frac{4}{\lambda}  \| \dot{M}_i(\theta_0) \|.
$$
\end{proof}

For $1 \le q \le 8$, we can bound $\mathbb{E}[ \| \dot{M}_i(\theta_0) \|^q ]$ by Lemma \ref{vector_bound} and Assumption \ref{smoothness},
$$
\mathbb{E}[ \| \dot{M}_i(\theta_0) \|^q ] \le \frac{C_v(q,d)}{n^{q/2}}G^q,
$$
where $C_v(q,d)$ is a constant depending on $q$ and $d$ only. Then by conditioning on event $E$, we have
\begin{eqnarray*}
\mathbb{E}[ \| \theta_i - \theta_0 \|^q] &=& \mathbb{E}[ \| \theta_i - \theta_0 \|^q 1_{(E)}] + \mathbb{E}[ \| \theta_i - \theta_0 \|^q 1_{(E^c)}] \\
&\le& \frac{4^q}{\lambda^q}\mathbb{E}[ \| \dot{M}_i(\theta_0) \|^q ] + D^q \mbox{Pr}(E^c) \\
&\le& \frac{4^q}{\lambda^q}\frac{C_v(q,d)}{n^{q/2}}G^q + D^q \mbox{Pr}(E^c).
\end{eqnarray*}
If we can show $\mbox{Pr}(E^c) = O(n^{-\frac{q}{2}})$, then $\mathbb{E}[ \| \theta_i - \theta_0 \|^q] = O(n^{-\frac{q}{2}})$ follows immediately.

\begin{lemma}
\label{prob_good_event}
Under Assumption \ref{smoothness}, we have
$$
\mbox{Pr}(E^c) = O(n^{-4}).
$$
\end{lemma}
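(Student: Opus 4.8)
The plan is to control $\mbox{Pr}(E^c)$ by a union bound, $\mbox{Pr}(E^c) \le \mbox{Pr}(E_1^c) + \mbox{Pr}(E_2^c) + \mbox{Pr}(E_3^c)$, and to show that each of the three terms is $O(n^{-4})$ by pairing Markov's inequality at the eighth moment with the moment bounds of Lemma \ref{vector_bound} and Lemma \ref{matrix_bound}. The unifying observation is that each quantity inside the three events is an average of $n$ i.i.d. summands, so once it is recognized as (a centered) mean it falls directly under the scope of those two lemmas with $q = q_0 = 8$, which is exactly the order of moment controlled in Assumption \ref{smoothness}.

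For the gradient event $E_3$, first I would note that $\dot{M}_i(\theta_0) = \frac{1}{n}\sum_{j=1}^n \dot{m}(x_j;\theta_0)$ is the mean of i.i.d. vectors whose common mean is $\mathbb{E}[\dot{m}(X;\theta_0)] = \dot{M}_0(\theta_0) = 0$, the last equality holding because $\theta_0$ maximizes $M_0$. Hence Lemma \ref{vector_bound} applies with the constant $G$ of Assumption \ref{smoothness}, giving $\mathbb{E}[\|\dot{M}_i(\theta_0)\|^8] \le C_v(8,d) G^8 n^{-4}$, and Markov's inequality yields
$$
\mbox{Pr}(E_3^c) = \mbox{Pr}\left( \|\dot{M}_i(\theta_0)\| > \tfrac{\lambda}{2}\delta' \right) \le \frac{\mathbb{E}[\|\dot{M}_i(\theta_0)\|^8]}{(\lambda \delta'/2)^8} = O(n^{-4}).
$$
The Hessian event $E_2$ is entirely analogous: $\ddot{M}_i(\theta_0) - \ddot{M}_0(\theta_0) = \frac{1}{n}\sum_{j=1}^n (\ddot{m}(x_j;\theta_0) - \ddot{M}_0(\theta_0))$ is an average of i.i.d. mean-zero random matrices with eighth-moment bound $H^8$, so Lemma \ref{matrix_bound} gives $\mathbb{E}[\vertiii{\ddot{M}_i(\theta_0) - \ddot{M}_0(\theta_0)}^8] \le C_m(8,d) H^8 n^{-4}$, and Markov against the threshold $\lambda/4$ again produces $O(n^{-4})$.

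The event $E_1$ requires one extra step because the summands $L(x_j)$ are not centered. Since $\mathbb{E}[L(X)^8] \le L^8$ forces $\mathbb{E}[L(X)] \le L$ by Jensen, on $E_1^c$ the centered average exceeds $2L - \mathbb{E}[L(X)] \ge L$, so $\mbox{Pr}(E_1^c) \le \mbox{Pr}(|\frac{1}{n}\sum_{j}(L(x_j) - \mathbb{E}[L(X)])| > L)$. The centered variables have eighth moment at most $L^8$ by Assumption \ref{smoothness}, so the scalar case ($d=1$) of Lemma \ref{vector_bound} together with Markov gives $\mbox{Pr}(E_1^c) \le C_v(8,1) L^8 / (n^4 L^8) = O(n^{-4})$. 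Summing the three bounds completes the proof. The only genuinely delicate point—and the step I would be most careful about—is the centering in each case: verifying that $\dot{M}_0(\theta_0)$ vanishes so that the average in $E_3$ is mean zero, that $\mathbb{E}[\ddot{m}(X;\theta_0)] = \ddot{M}_0(\theta_0)$ for $E_2$, and that $L$ must be recentered before the moment bound applies in $E_1$. Once this bookkeeping is in place, the remainder is a routine invocation of the two moment lemmas at $q = 8$.
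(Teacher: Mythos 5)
Your proposal is correct and follows essentially the same route as the paper: a union bound over $E_1^c$, $E_2^c$, $E_3^c$, followed by Markov's inequality at the eighth moment combined with Lemma \ref{vector_bound} and Lemma \ref{matrix_bound}. Your explicit treatment of the centering (in particular, noting $\mathbb{E}[L(X)]\le L$ so that $E_1^c$ forces the centered average past the threshold $L$) is a slightly more careful write-up of a step the paper leaves implicit, but it is not a different argument.
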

\begin{proof}
Under Assumption \ref{smoothness}, by applying Lemma \ref{vector_bound} and \ref{matrix_bound}, we can bound the moments of $\dot{M}_i(\theta_0)$ and $\ddot{M}_i(\theta_0) - \ddot{M}_0(\theta_0)$. Rigorously, for $1 \le q \le 8$, we have
\begin{eqnarray*}
\mathbb{E} \left[ \|\dot{M}_i(\theta_0)\|^q \right] &\le& \frac{C_v(q,d)}{n^{q/2}}G^q, \\
\mathbb{E} \left[ \vertiii{ \ddot{M}_i(\theta_0) - \ddot{M}_0(\theta_0)}^q \right] &\le& \frac{C_m(q,d)}{n^{q/2}}H^q.
\end{eqnarray*}
Therefore, by Markov's inequality, we have
\begin{eqnarray*}
& & \mbox{Pr}(E^c) = \mbox{Pr}(E_1^c \cup E_2^c  \cup E_3^c) \le \mbox{Pr}(E_1^c) + \mbox{Pr}(E_2^c) + \mbox{Pr}(E_3^c) \\
&\le& \frac{\mathbb{E} \left[ |\frac{1}{n}\sum_{j=1}^n L(x_j) - \mathbb{E}[L(x)]|^8 \right]}{L^8} + \frac{\mathbb{E} \left[ \vertiii{ \ddot{M}_i(\theta_0) - \ddot{M}_0(\theta_0)}^8 \right]}{(\lambda/4)^8} + \frac{\mathbb{E} \left[ \|\dot{M}_i(\theta_0)\|^8 \right]}{(\lambda \delta'/2)^8} \\
&\le& O(\frac{1}{n^4}) + O(\frac{1}{n^4}) + O(\frac{1}{n^4}) = O(n^{-4}).
\end{eqnarray*}
\end{proof}

Now, we have showed that for $1 \le q \le 8$,
\begin{equation}
\label{q_bound_of_theta_i}
\mathbb{E}[ \| \theta_i - \theta_0 \|^q] \le \frac{4^q}{\lambda^q}\frac{C_v(q,d)}{n^{q/2}}G^q + O(n^{-4}) = O(n^{-\frac{q}{2}}).
\end{equation}
Until now, $[\ddot{M}_i(\theta_0) - \ddot{M}_0(\theta_0)] [\theta_i - \theta_0]$ has been well bounded. Next, we will consider the moment bound of $\int_{0}^1 \ddot{M}_i((1 - \rho) \theta_0 +\rho \theta_i) d\rho - \ddot{M}_i(\theta_0)$.
\begin{lemma}
\label{q_bound_of_matrix}
Under assumption \ref{smoothness}, for $1 \le q \le 4$,
$$
\mathbb{E}\left[ \vertiii{ \int_{0}^1 \ddot{M}_i((1 - \rho) \theta_0 +\rho \theta_i) d\rho - \ddot{M}_i(\theta_0) }^q \right] \le L^q \frac{4^q}{\lambda^q}\frac{\sqrt{C_v(2q,d)}}{n^{q/2}}G^q + O(n^{-2}) = O(n^{-q/2}).
$$
\end{lemma}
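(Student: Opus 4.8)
The plan is to pull the matrix norm inside the integral and then exploit the Lipschitz continuity of $\ddot{m}(x;\theta)$ furnished by Assumption \ref{smoothness}. Writing the integrand as $\ddot{M}_i((1-\rho)\theta_0 + \rho\theta_i) - \ddot{M}_i(\theta_0)$ and using the triangle inequality for matrix-valued integrals, I would first bound
$$
\vertiii{ \int_0^1 \ddot{M}_i((1-\rho)\theta_0 + \rho\theta_i)\,d\rho - \ddot{M}_i(\theta_0)} \le \int_0^1 \vertiii{\ddot{M}_i((1-\rho)\theta_0 + \rho\theta_i) - \ddot{M}_i(\theta_0)}\,d\rho.
$$
Since $\ddot{M}_i(\theta) = \frac1n\sum_{j=1}^n \ddot{m}(x_j;\theta)$, the pointwise Lipschitz bound yields $\vertiii{\ddot{M}_i(\theta) - \ddot{M}_i(\theta_0)} \le \bar{L}_i\|\theta - \theta_0\|$ with $\bar{L}_i = \frac1n\sum_{j=1}^n L(x_j)$. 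Evaluating along the segment and using the crude bound $\|(1-\rho)\theta_0 + \rho\theta_i - \theta_0\| = \rho\|\theta_i - \theta_0\| \le \|\theta_i - \theta_0\|$ (which is what produces the constant $4^q/\lambda^q$ rather than a smaller one), the whole integral is at most $\bar{L}_i\,\|\theta_i - \theta_0\|$.

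Raising this to the $q$-th power and taking expectations reduces the claim to controlling $\mathbb{E}[\bar{L}_i^q\,\|\theta_i - \theta_0\|^q]$. The crucial observation is that $\bar{L}_i$ and $\|\theta_i - \theta_0\|$ are both measurable functions of the same local sample $S_i$, hence dependent, so they cannot be factored; instead I would apply the Cauchy--Schwarz inequality to obtain
$$
\mathbb{E}\big[\bar{L}_i^q\,\|\theta_i - \theta_0\|^q\big] \le \sqrt{\mathbb{E}[\bar{L}_i^{2q}]}\;\sqrt{\mathbb{E}[\|\theta_i - \theta_0\|^{2q}]}.
$$
This is precisely where the exponent $2q$ (and hence the $C_v(2q,d)$ appearing in the statement) enters, and it is why the hypothesis is restricted to $1 \le q \le 4$, so that $2q \le 8$ stays within the range covered by the eighth-moment conditions of Assumption \ref{smoothness}.

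It then remains to bound the two factors. For the second factor I would invoke the already-established bound (\ref{q_bound_of_theta_i}) with exponent $2q$, giving $\mathbb{E}[\|\theta_i - \theta_0\|^{2q}] \le \frac{4^{2q}}{\lambda^{2q}}\frac{C_v(2q,d)}{n^q}G^{2q} + O(n^{-4})$, whose square root has leading term $\frac{4^q}{\lambda^q}\frac{\sqrt{C_v(2q,d)}}{n^{q/2}}G^q$ and a lower-order remainder of size $O(n^{q/2-4})$, which is $O(n^{-2})$ for $q \le 4$. For the first factor I would simply use Jensen's inequality: because $t\mapsto t^{2q}$ is convex and each $L(x_j)\ge 0$, one has $\bar{L}_i^{2q} = \big(\tfrac1n\sum_{j=1}^n L(x_j)\big)^{2q} \le \tfrac1n\sum_{j=1}^n L(x_j)^{2q}$, so $\mathbb{E}[\bar{L}_i^{2q}] \le \mathbb{E}[L(X)^{2q}] \le L^{2q}$ by H\"older and Assumption \ref{smoothness}, giving $\sqrt{\mathbb{E}[\bar{L}_i^{2q}]} \le L^q$ with no remainder. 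Multiplying the two factors produces exactly the advertised leading term $L^q\frac{4^q}{\lambda^q}\frac{\sqrt{C_v(2q,d)}}{n^{q/2}}G^q$ together with an $O(n^{-2})$ remainder, and since the leading term is $O(n^{-q/2})$ the final conclusion $O(n^{-q/2})$ follows.

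The routine parts are the Lipschitz estimate and the computation of the two moment factors. I expect the main subtlety to be the very first step: there is no vector-valued mean value theorem, so the integral representation $\int_0^1 \ddot{M}_i((1-\rho)\theta_0+\rho\theta_i)\,d\rho$ must be treated directly, and the norm--integral interchange for the operator norm must be justified before the Lipschitz bound can be applied. The second delicate point is recognizing that $\bar{L}_i$ and $\theta_i$ are coupled through $S_i$, which forces the Cauchy--Schwarz decoupling (and thereby the $2q$ exponents) rather than a naive factorization; handling $\bar{L}_i$ via Jensen rather than capping it by a constant is what keeps the remainder cleanly at $O(n^{-2})$, since $L(x)$ is only controlled in moment and not almost surely bounded.
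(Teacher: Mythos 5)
Your argument follows the same route as the paper's: interchange the operator norm with the integral, apply the Lipschitz property of $\ddot m(x;\cdot)$ to reduce the integrand to $\bar L_i\|\theta_i-\theta_0\|$ with $\bar L_i=\frac1n\sum_j L(x_j)$, decouple the two dependent factors by Cauchy--Schwarz (which is exactly where the exponent $2q$ and the constant $\sqrt{C_v(2q,d)}$ come from, and why $q\le 4$ is needed so that $2q\le 8$ stays within the eighth-moment hypotheses), bound $\mathbb{E}[\bar L_i^{2q}]$ by $L^{2q}$ via Jensen and H\"older, and bound $\mathbb{E}[\|\theta_i-\theta_0\|^{2q}]$ by (\ref{q_bound_of_theta_i}). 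All of that matches the paper, and your bookkeeping of the leading term and the $O(n^{-2})$ remainder is correct (indeed slightly cleaner, since you avoid the extraneous $C(q)$ factor that appears in the paper's displayed computation but not in the lemma statement).

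The gap is at the point where you invoke the Lipschitz bound. Assumption \ref{smoothness} only guarantees $\vertiii{\ddot m(x;\theta)-\ddot m(x;\theta')}\le L(x)\|\theta-\theta'\|$ for $\theta,\theta'\in B_\delta$, whereas $\theta_i$ is the maximizer of $M_i$ over all of $\Theta$ and need not lie in $B_\delta$; when it does not, the segment $(1-\rho)\theta_0+\rho\theta_i$ leaves $B_\delta$ and the inequality $\vertiii{\ddot M_i(\theta')-\ddot M_i(\theta_0)}\le \bar L_i\|\theta'-\theta_0\|$ is simply not licensed by the assumptions. The paper repairs this by restricting to the good event $E=E_1\cap E_2\cap E_3$, on which $\theta_i\in B_{\delta'}\subset B_\delta$ (so the whole segment stays in $B_\delta$ by convexity), running the Cauchy--Schwarz argument on $E$, and then bounding the contribution of $E^c$ by a constant (compactness of $\Theta$ and continuity of $\ddot M_i$) times $\mbox{Pr}(E^c)=O(n^{-4})$ from Lemma \ref{prob_good_event}, which is absorbed into the remainder. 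Without this conditioning step your proof applies the Lipschitz hypothesis outside its stated domain; with it inserted, your argument goes through essentially verbatim and yields exactly the claimed bound.
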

\begin{proof}
By Minkowski's integral inequality, we have
\begin{align*}
\mathbb{E}\left[ \vertiii{ \int_{0}^1 \ddot{M}_i((1 - \rho) \theta_0 +\rho \theta_i) d\rho - \ddot{M}_i(\theta_0) }^q \right] &\le \mathbb{E}\left[ \int_{0}^1 \vertiii{ \ddot{M}_i((1 - \rho) \theta_0 +\rho \theta_i) - \ddot{M}_i(\theta_0) }^q d\rho \right] \\
&= \int_{0}^1 \mathbb{E}\left[ \vertiii{ \ddot{M}_i((1 - \rho) \theta_0 +\rho \theta_i) - \ddot{M}_i(\theta_0) }^q \right] d\rho.
\end{align*}
For simplicity of notation, we use $\theta' = (1 - \rho) \theta_0 +\rho \theta_i$ in this proof.
When event $E$ holds, we have
$$
\| \theta' - \theta_0 \| = \| \rho(\theta_i - \theta_0) \| \le \rho \delta' \le \delta,
$$
which means that $\theta' \in B_{\delta}, \forall \rho \in [0,1]$. Thus, because of the convexity of the matrix norm $\vertiii{\cdot}$, we can apply Jensen's inequality and Assumption \ref{smoothness} and get
\begin{eqnarray*}
\vertiii{ \ddot{M}_i(\theta') - \ddot{M}_i(\theta_0) }^q \le \frac{1}{n} \sum_{j=1}^n \vertiii{ \ddot{m}(x_j;\theta') - \ddot{m}(x_j;\theta_0)}^q \le \frac{1}{n} \sum_{j=1}^n L(x_i)^q \| \theta' - \theta_0 \|^q.
\end{eqnarray*}
Then apply H\"{o}lder's inequality,
\begin{eqnarray*}
\mathbb{E} \left[ \vertiii{ \ddot{M}_i(\theta') - \ddot{M}_i(\theta_0) }^q 1_{(E)} \right] &\le& \left\{ \mathbb{E}[ (\frac{1}{n} \sum_{j=1}^n L(x_i)^q)^2 ] \right\} ^{1/2} \left\{ \mathbb{E}[ \| \theta' - \theta_0 \|^{2q} ] \right\} ^{1/2} \\
&\stackrel{\mbox{Jensen's}}{\le}& C(q) L^q \rho^q \left\{ \mathbb{E}[ \| \theta_i - \theta_0 \|^{2q} ] \right\} ^{1/2} \\
&\stackrel{\mbox{(\ref{q_bound_of_theta_i})}}\le& C(q) L^q \frac{4^q \sqrt{C(2q,d)} G^q}{\lambda^q n^{q/2}} + O(n^{-2}).
\end{eqnarray*}
When event $E$ does not hold, we know that $\vertiii{ \ddot{M}_i(\theta') - \ddot{M}_i(\theta_0) }^q$ must be finite by the assumption that $\Theta$ is compact and $\ddot{M}_i(\theta)$ is continuous. By Lemma \ref{prob_good_event}, the probability that event $E$ does not hold is bounded by $O(n^{-4})$, which implies,
$$
\mathbb{E} \left[ \vertiii{ \ddot{M}_i(\theta') - \ddot{M}_i(\theta_0) }^q \right] \le C(q) L^q \frac{4^q \sqrt{C(2q,d)} G^q}{\lambda^q n^{q/2}} + O(n^{-2}) + O(n^{-4}) = O(n^{-q/2}).
$$
Therefore, we have
\begin{align*}
\mathbb{E}\left[ \vertiii{ \int_{0}^1 \ddot{M}_i((1 - \rho) \theta_0 +\rho \theta_i) d\rho - \ddot{M}_i(\theta_0) }^q \right] \le \int_{0}^1 \mathbb{E}\left[ \vertiii{ \ddot{M}_i((1 - \rho) \theta_0 +\rho \theta_i) - \ddot{M}_i(\theta_0) }^q \right] d\rho \\
\le C(q) L^q \frac{4^q \sqrt{C(2q,d)} G^q}{\lambda^q n^{q/2}} + O(n^{-2}) + O(n^{-4}) = O(n^{-q/2}).
\end{align*}
\end{proof}
Now, recall that we have
\begin{multline}
0=\dot{M}_i(\theta_0) + \ddot{M}_0(\theta_0) [\theta_i - \theta_0] + \left[ \int_{0}^1 \ddot{M}_i((1 - \rho) \theta_0 +\rho \theta_i) d\rho - \ddot{M}_i(\theta_0) \right] [\theta_i - \theta_0] \\
+ [\ddot{M}_i(\theta_0) - \ddot{M}_0(\theta_0)] [\theta_i - \theta_0]. \label{key_eq}
\end{multline}
For the sum of last two terms, we have
\begin{align*}
&\qquad \mathbb{E}\left[ \left\| [\int_{0}^1 \ddot{M}_i((1 - \rho) \theta_0 +\rho \theta_i) d\rho - \ddot{M}_i(\theta_0)] [\theta_i - \theta_0] + [\ddot{M}_i(\theta_0) - \ddot{M}_0(\theta_0)] [\theta_i - \theta_0] \right\|^2 \right] \\
&\le 2\mathbb{E}\left[ \left\| [\int_{0}^1 \ddot{M}_i((1 - \rho) \theta_0 +\rho \theta_i) d\rho - \ddot{M}_i(\theta_0)] [\theta_i - \theta_0] \right\|^2 \right] \\
& \qquad + 2\mathbb{E}\left[ \| [\ddot{M}_i(\theta_0) - \ddot{M}_0(\theta_0)] [\theta_i - \theta_0] \|^2 \right] \qquad \text{(since $(a+b)^2 \le 2a^2 + 2b^2$)}\\
&\le 2 (\mathbb{E} \left[ \left\| \int_{0}^1 \ddot{M}_i((1 - \rho) \theta_0 +\rho \theta_i) d\rho - \ddot{M}_i(\theta_0) \right\|^4 \right])^{1/2} (\mathbb{E}[\| \theta_i - \theta_0 \|^4])^{1/2} \\
& \qquad + 2 (\mathbb{E}[\| \ddot{M}_i(\theta_0) - \ddot{M}_0(\theta_0) \|^4])^{1/2} (\mathbb{E}[\| \theta_i - \theta_0 \|^4])^{1/2} \qquad \text{(H\"{o}lder's inequality)}\\
&= O(n^{-2}) + O(n^{-2}) \qquad \text{(Lemma \ref{matrix_bound} \& \ref{q_bound_of_matrix} and (\ref{q_bound_of_theta_i}))}\\
&= O(n^{-2}).
\end{align*}

\noindent
Unitl now, we have established the upper bound for the mean squared error of local M-estimators,
$$
\mathbb{E} [\| \theta_i - \theta_0 \|^2] \le \frac{2}{n} \mbox{Tr}(\Sigma) + O(n^{-2}),
$$
for $i = 1,\ldots,k$.

\subsection{Bound the Error of Simple Averaging Estimator $\theta^{(0)}$}
Next, we will study the mean squared error of simple averaging estimator,
$$
\theta^{(0)} = \frac{1}{k} \sum_{i=1}^k \theta_i.
$$
We start with a lemma, which bounds the bias of local M-estimator $\theta_i, i=1,\ldots,k$.
\begin{lemma}
\label{bias_bound}
There exists some constant $\tilde{C}>0$ such that for $i=1,\ldots,k$, we have
$$
\| \mathbb{E}[\theta_i - \theta_0] \| \le \frac{\tilde{C}}{n} + O(n^{-2}),
$$
where $\tilde{C} = 16 [C_v(4,d)]^{\frac{1}{4}}\sqrt{C_v(2,d)} \lambda^{-3} G^2L + 4 \sqrt{C_m(2,d)}\sqrt{C_v(2,d)} \lambda^{-2} GH$.
\end{lemma}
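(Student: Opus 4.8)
The plan is to start from the exact identity (\ref{key_eq}) and solve it for $\theta_i - \theta_0$, exploiting the crucial fact that $\mathbb{E}[\dot{M}_i(\theta_0)] = \dot{M}_0(\theta_0) = 0$, which holds because $\theta_0$ maximizes the population criterion $M_0$ so its gradient vanishes, and $\dot{M}_i(\theta_0)$ is an average of i.i.d.\ terms each with mean $\dot{M}_0(\theta_0)$. By Assumption \ref{invertibility}, the matrix $\ddot{M}_0(\theta_0)$ is invertible with $\vertiii{\ddot{M}_0(\theta_0)^{-1}} \le 1/\lambda$, so I can rearrange (\ref{key_eq}) into
$$
\theta_i - \theta_0 = -\ddot{M}_0(\theta_0)^{-1}\dot{M}_i(\theta_0) - \ddot{M}_0(\theta_0)^{-1} R_i \, [\theta_i - \theta_0],
$$
where $R_i = \left[ \int_0^1 \ddot{M}_i((1-\rho)\theta_0 + \rho\theta_i)\,d\rho - \ddot{M}_i(\theta_0)\right] + [\ddot{M}_i(\theta_0) - \ddot{M}_0(\theta_0)]$ collects the two remainder matrices already analyzed in Lemma \ref{q_bound_of_matrix} and Lemma \ref{matrix_bound}.

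Taking expectations annihilates the first term, leaving $\mathbb{E}[\theta_i - \theta_0] = -\ddot{M}_0(\theta_0)^{-1}\,\mathbb{E}[R_i(\theta_i-\theta_0)]$, and hence
$$
\| \mathbb{E}[\theta_i - \theta_0] \| \le \frac{1}{\lambda}\, \| \mathbb{E}[R_i(\theta_i-\theta_0)] \| \le \frac{1}{\lambda}\,\mathbb{E}\!\left[\vertiii{R_i}\,\|\theta_i-\theta_0\|\right].
$$
I would then split $R_i$ into its two pieces via the triangle inequality for $\vertiii{\cdot}$ and bound each resulting expectation by the Cauchy--Schwarz inequality, $\mathbb{E}[\vertiii{A}\,\|v\|] \le (\mathbb{E}\vertiii{A}^2)^{1/2}(\mathbb{E}\|v\|^2)^{1/2}$. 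For the Hessian-difference piece, Lemma \ref{matrix_bound} with $q=2$ gives $(\mathbb{E}\vertiii{\ddot{M}_i(\theta_0)-\ddot{M}_0(\theta_0)}^2)^{1/2}\le \sqrt{C_m(2,d)}\,H/\sqrt{n}$; for the integral-difference piece, Lemma \ref{q_bound_of_matrix} with $q=2$ gives $(\mathbb{E}\vertiii{\cdot}^2)^{1/2} \le 4L[C_v(4,d)]^{1/4}G/(\lambda\sqrt{n}) + O(n^{-1})$; and in both products eq.\ (\ref{q_bound_of_theta_i}) with $q=2$ supplies $(\mathbb{E}\|\theta_i-\theta_0\|^2)^{1/2} \le 4\sqrt{C_v(2,d)}\,G/(\lambda\sqrt{n}) + O(n^{-3/2})$. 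Each of the two products is then a product of two $O(n^{-1/2})$ factors, giving two terms of order $n^{-1}$; multiplying by the prefactor $1/\lambda$ and summing reproduces exactly the stated constant $\tilde{C} = 16 [C_v(4,d)]^{1/4}\sqrt{C_v(2,d)} \lambda^{-3} G^2L + 4 \sqrt{C_m(2,d)}\sqrt{C_v(2,d)} \lambda^{-2} GH$, with all lower-order contributions absorbed into $O(n^{-2})$.

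The conceptual crux is the cancellation $\mathbb{E}[\dot{M}_i(\theta_0)]=0$: without it the bias would be only $O(n^{-1/2})$ rather than $O(n^{-1})$, and it is precisely this mean-zero structure of the leading term that upgrades the error bound (which is $O(n^{-1/2})$ in norm) to an $O(n^{-1})$ bound on the \emph{bias}. The only genuine care required is the bookkeeping of higher-order remainders: each moment bound invoked above carries an additive $O(n^{-2})$ (or smaller) term, and I must confirm through H\"older's inequality that every cross term involving these remainders is truly $O(n^{-2})$. This is guaranteed because Assumption \ref{smoothness} provides moments up to order $8$, comfortably covering the $q=2$ and $q=4$ invocations of Lemmas \ref{matrix_bound} and \ref{q_bound_of_matrix} together with the Cauchy--Schwarz pairings; I therefore anticipate no substantive obstacle beyond this routine accounting.
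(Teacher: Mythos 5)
Your proposal is correct and follows essentially the same route as the paper's proof: both rearrange the identity (\ref{key_eq}) using the invertibility of $\ddot{M}_0(\theta_0)$, kill the leading term via $\mathbb{E}[\dot{M}_i(\theta_0)]=\dot{M}_0(\theta_0)=0$, and then bound $\mathbb{E}[\vertiii{R_i}\,\|\theta_i-\theta_0\|]$ by splitting $R_i$ into its two pieces and applying H\"older/Cauchy--Schwarz together with Lemmas \ref{matrix_bound} and \ref{q_bound_of_matrix} and inequality (\ref{q_bound_of_theta_i}) at $q=2$, which yields exactly the stated constant $\tilde{C}$.
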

\begin{proof}
The main idea of this proof is to use equation (\ref{key_eq}) and apply the established error bounds of Hessian and the aforementioned local m-estimators. By equation (\ref{key_eq}) and fact $\mathbb{E}[M_i(\theta_0)] = 0$, we have
\begin{eqnarray*}
& & \| \mathbb{E}[\theta_i - \theta_0] \| \\
&=& \| \mathbb{E}\{ \ddot{M}_0(\theta_0)^{-1} \left[ \int_{0}^1 \ddot{M}_i((1 - \rho) \theta_0 +\rho \theta_i) d\rho - \ddot{M}_i(\theta_0) \right] [\theta_i - \theta_0] \\
& & \qquad +  \ddot{M}_0(\theta_0)^{-1} [\ddot{M}_i(\theta_0) - \ddot{M}_0(\theta_0)] [\theta_i - \theta_0] \} \| \\
&\le& \| \mathbb{E}\{ \ddot{M}_0(\theta_0)^{-1} \left[ \int_{0}^1 \ddot{M}_i((1 - \rho) \theta_0 +\rho \theta_i) d\rho - \ddot{M}_i(\theta_0) \right] [\theta_i - \theta_0] \} \| \\
& & \qquad + \| \mathbb{E}\{ \ddot{M}_0(\theta_0)^{-1} [\ddot{M}_i(\theta_0) - \ddot{M}_0(\theta_0)] [\theta_i - \theta_0] \} \| \\
&\stackrel{\mbox{Jensen's}}{\le}& \mathbb{E}\left[ \| \ddot{M}_0(\theta_0)^{-1} \left[\int_{0}^1 \ddot{M}_i((1 - \rho) \theta_0 +\rho \theta_i) d\rho - \ddot{M}_i(\theta_0) \right] [\theta_i - \theta_0] \| \right] \\
& & \qquad + \mathbb{E}\left[ \| \ddot{M}_0(\theta_0)^{-1} [\ddot{M}_i(\theta_0) - \ddot{M}_0(\theta_0)] [\theta_i - \theta_0] \| \right] \\
&\stackrel{\mbox{Assumption \ref{invertibility}}}{\le}& \lambda^{-1} \mathbb{E}\left[ \left\| \int_{0}^1 \ddot{M}_i((1 - \rho) \theta_0 +\rho \theta_i) d\rho - \ddot{M}_i(\theta_0) \right\| \| \theta_i - \theta_0 \| \right] \\
& & \qquad + \lambda^{-1} \mathbb{E}\left[ \| \ddot{M}_i(\theta_0) - \ddot{M}_0(\theta_0) \| \| \theta_i - \theta_0 \| \right] \\
&\stackrel{\mbox{H\"{o}lder's}}{\le}&  \lambda^{-1} \mathbb{E} \left[ \left\| \int_{0}^1 \ddot{M}_i((1 - \rho) \theta_0 +\rho \theta_i) d\rho - \ddot{M}_i(\theta_0) \right\|^2 \right]^{1/2} \mathbb{E}[ \| \theta_i - \theta_0 \|^2 ]^{1/2} \\
& & \qquad + \lambda^{-1} \mathbb{E}[ \| \ddot{M}_i(\theta_0) - \ddot{M}_0(\theta_0) \|^2 ]^{1/2} \mathbb{E}[ \| \theta_i - \theta_0 \|^2 ]^{1/2}.
\end{eqnarray*}
Then we can apply Lemma \ref{matrix_bound} \& \ref{q_bound_of_matrix}, and (\ref{q_bound_of_theta_i}) to bound each term, thus, we have
\begin{eqnarray*}
\| \mathbb{E}[\theta_i - \theta_0] \| ]&\le& \lambda^{-1} \sqrt{L^2 \frac{4^2}{\lambda^2}\frac{\sqrt{C_v(4,d)}}{n}G^2 + O(n^{-2})} \sqrt{\frac{4^2}{\lambda^2}\frac{C_v(2,d)}{n}G^2 + O(n^{-4})}\\
& & \qquad + \lambda^{-1} \sqrt{\frac{C_m(2,d)}{n}H^2} \sqrt{\frac{4^2}{\lambda^2}\frac{C_v(2,d)}{n}G^2 + O(n^{-4})} \\
&\le& \lambda^{-1} \left[ L \frac{4}{\lambda} \frac{C_v(4,d)^{1/4}}{\sqrt{n}} G + O(n^{-\frac{3}{2}}) \right] \left[ \frac{4}{\lambda}\frac{\sqrt{C_v(2,d)}}{\sqrt{n}}G + O(n^{-\frac{7}{2}}) \right] \\
& & \qquad + \lambda^{-1} \frac{\sqrt{C_m(2,d)}}{\sqrt{n}}H \left[ \frac{4}{\lambda}\frac{\sqrt{C_v(2,d)}}{\sqrt{n}}G + O(n^{-\frac{7}{2}}) \right] \\
&=& L \frac{4^2}{\lambda^3} \frac{C_v(4,d)^{1/4}\sqrt{C_v(2,d)}}{n} G^2 + O(n^{-2}) \\
& & \qquad + \frac{4}{\lambda^2} \frac{\sqrt{C_m(2,d)}\sqrt{C_v(2,d)}}{n}GH + O(n^{-4}).
\end{eqnarray*}
Let $\tilde{C} = 16 [C_v(4,d)]^{\frac{1}{4}}\sqrt{C_v(2,d)} \lambda^{-3} G^2L + 4 \sqrt{C_m(2,d)}\sqrt{C_v(2,d)} \lambda^{-2} GH$, then we have
$$
\| \mathbb{E}[\theta_i - \theta_0] \| \le \frac{\tilde{C}}{n} + O(n^{-2}).
$$
\end{proof}
\noindent
Then we can show that the MSE of $\theta^{(0)}$ could be bounded as follows.
\begin{lemma}
There exists some constant $\tilde{C}>0$ such that
$$
\mathbb{E} [ \| \theta^{(0)} - \theta_0 \|^2] \le \frac{2}{N} \mbox{Tr}(\Sigma) + \frac{\tilde{C}^2 k^2}{N^2} + O(kN^{-2}) + O(k^3N^{-3}),
$$
where $\tilde{C} = 16 [C_v(4,d)]^{\frac{1}{4}}\sqrt{C_v(2,d)} \lambda^{-3} G^2L + 4 \sqrt{C_m(2,d)}\sqrt{C_v(2,d)} \lambda^{-2} GH$.
\end{lemma}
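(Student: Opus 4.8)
The plan is to exploit the fact that the local estimators $\theta_1,\dots,\theta_k$ are i.i.d.: the subsets $S_i$ are non-overlapping and each consists of $n$ samples drawn i.i.d.\ from $p(x)$, so $\theta_i = \argmax_{\theta} M_i(\theta)$ are i.i.d.\ random vectors. This independence is exactly what makes a classical bias--variance decomposition effective. Writing $\theta^{(0)} - \theta_0 = \frac{1}{k}\sum_{i=1}^k(\theta_i - \theta_0)$ and noting that $\mathbb{E}[\theta^{(0)}] = \mathbb{E}[\theta_1]$, I would decompose
\begin{equation*}
\mathbb{E}\big[\|\theta^{(0)} - \theta_0\|^2\big] = \big\|\mathbb{E}[\theta_1 - \theta_0]\big\|^2 + \mathbb{E}\big[\|\theta^{(0)} - \mathbb{E}[\theta^{(0)}]\|^2\big],
\end{equation*}
so that the error separates cleanly into a squared-bias piece and a variance piece, each to be controlled by one of the preceding lemmas.

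For the squared-bias term I would invoke Lemma \ref{bias_bound}, which gives $\|\mathbb{E}[\theta_1 - \theta_0]\| \le \tilde{C}/n + O(n^{-2})$; squaring yields $\tilde{C}^2/n^2 + O(n^{-3})$. For the variance term, independence of the $\theta_i$ kills all cross terms, so
\begin{equation*}
\mathbb{E}\big[\|\theta^{(0)} - \mathbb{E}[\theta^{(0)}]\|^2\big] = \frac{1}{k}\,\mathbb{E}\big[\|\theta_1 - \mathbb{E}[\theta_1]\|^2\big] \le \frac{1}{k}\,\mathbb{E}\big[\|\theta_1 - \theta_0\|^2\big] \le \frac{1}{k}\Big(\frac{2}{n}\mbox{Tr}(\Sigma) + O(n^{-2})\Big),
\end{equation*}
where the middle inequality uses that the variance about the mean is no larger than the second moment about $\theta_0$, and the final bound is Lemma \ref{asym_norm2}. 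Substituting $n = N/k$ then converts each piece into the stated form: the variance contributes $\frac{2}{nk}\mbox{Tr}(\Sigma) = \frac{2}{N}\mbox{Tr}(\Sigma)$ together with $\frac{1}{k}O(n^{-2}) = O(kN^{-2})$, while the squared bias contributes $\tilde{C}^2/n^2 = \tilde{C}^2 k^2/N^2$ together with $O(n^{-3}) = O(k^3 N^{-3})$, and collecting these reproduces the claimed bound.

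The individual steps are routine once Lemmas \ref{asym_norm2} and \ref{bias_bound} are available; the conceptual crux—and the reason the bound takes precisely this shape—is that averaging over $k$ independent machines shrinks the variance by a factor $1/k$ but leaves the per-machine bias untouched. This is exactly why the non-vanishing term $\tilde{C}^2 k^2/N^2$ survives and comes to dominate the error once $k$ grows, in contrast to the $O(N^{-1})$ variance floor. The only genuine care required is bookkeeping of the lower-order remainders under the change of variables $n = N/k$, ensuring each $O(n^{-j})$ is correctly re-expressed as $O(k^{\,j-1}N^{-j})$, and verifying that discarding the subtracted squared-bias term in the variance step only loosens the inequality rather than changing its validity.
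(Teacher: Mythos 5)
Your proposal is correct and follows essentially the same route as the paper: the identical bias--variance decomposition $\mathbb{E}[\|\theta^{(0)}-\theta_0\|^2]=\mbox{Tr}(\mbox{Cov}[\theta^{(0)}])+\|\mathbb{E}[\theta_1-\theta_0]\|^2$, the same use of independence to reduce the covariance term to $\frac{1}{k}\mathbb{E}[\|\theta_1-\theta_0\|^2]$, and the same two lemmas (the local MSE bound and the bias bound) followed by the substitution $n=N/k$. Your bookkeeping of the remainder terms is accurate and, if anything, slightly more explicit than the paper's.
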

\begin{proof}
The mean squared error of $\theta^{(0)}$ could be decomposed into two parts: covariance and bias. Thus,
\begin{eqnarray*}
\mathbb{E} [ \| \theta^{(0)} - \theta_0 \|^2] &=& \mbox{Tr}(\mbox{Cov}[\theta^{(0)}]) + \| \mathbb{E}[\theta^{(0)}-\theta_0] \|^2 \\
&=& \frac{1}{k} \mbox{Tr}(\mbox{Cov}[\theta_1]) + \| \mathbb{E}[\theta_1-\theta_0] \|^2 \\
&\le& \frac{1}{k} \mathbb{E} [ \| \theta_1 - \theta_0 \|^2] + \| \mathbb{E}[\theta_1-\theta_0] \|^2,
\end{eqnarray*}
where the first term is well bounded by Lemma \ref{asym_norm2} and the second term could be bounded by Lemma \ref{bias_bound}. Thus, we know
$$
\mathbb{E} [ \| \theta^{(0)} - \theta_0 \|^2] \le \frac{2}{N} \mbox{Tr}(\Sigma) + \frac{\tilde{C}^2 k^2}{N^2} + O(kN^{-2}) + O(k^3N^{-3}).
$$
More generally, for $1 \le q \le 8$, we have
\begin{eqnarray*}
& & \mathbb{E}[ \| \theta^{(0)} - \theta_0 \|^q ] = \mathbb{E}[ \| ( \theta^{(0)} - \mathbb{E}[\theta^{(0)}] ) + ( \mathbb{E}[\theta^{(0)}] - \theta_0 ) \|^q ] \\
&\le& 2^q \mathbb{E}[ \| \theta^{(0)} - \mathbb{E}[\theta^{(0)}] \|^q] + 2^q \| \mathbb{E}[\theta^{(0)}] - \theta_0 \|^q \qquad \text{(since $(a+b)^q \le 2^q a^q + 2^q b^q$)} \\
&=& 2^q \mathbb{E}[ \| \theta^{(0)} - \mathbb{E}[\theta^{(0)}] \|^q ] + 2^q \| \mathbb{E}[\theta_1] - \theta_0 \|^q \qquad \text{(since $\mathbb{E}[\theta^{(0)}] = \mathbb{E}[\theta_1]$)}\\
&\le& 2^q \mathbb{E}[ \| \theta^{(0)} - \theta_0 \|^q ] + 2^q \frac{\tilde{C}^q}{n^q} + O(n^{-q-1}) \\
&\stackrel{\mbox{Lemma \ref{vector_bound}}}{\le}& 2^q \frac{C_v(q,d)}{k^{q/2}} \mathbb{E}[ \| \theta_1 - \theta_0 \|^q ]) + 2^q \frac{\tilde{C}^q}{n^q} + O(n^{-q-1}) \\
&\stackrel{\mbox{(\ref{q_bound_of_theta_i})}}{\le}& 2^q \frac{C_v(q,d)}{k^{q/2}} \left[ \frac{4^q}{\lambda^q}\frac{C(q,d)}{n^{q/2}}G^q + O(n^{-4}) \right] + 2^q \frac{\tilde{C}^q}{n^q} + O(n^{-q-1}) \\
&=& 8^{q} [C_v(q,d)]^2 \lambda^{-q} G^q N^{-\frac{q}{2}} + O(N^{-\frac{q}{2}} n^{\frac{q}{2}-4}) + 2^q \frac{\tilde{C}^q}{n^q} + O(n^{-q-1}).
\end{eqnarray*}
In summary, we have 
\begin{equation} 
\label{q_bound_of_theta(0)}
\mathbb{E}[ \| \theta^{(0)} - \theta_0 \|^q ] \le O(N^{-\frac{q}{2}}) + \frac{2^q \tilde{C}^q k^q}{N^q} + O(k^{q+1}N^{-q-1}) = O(N^{-\frac{q}{2}}) + O(k^{q}N^{-q}) .
\end{equation}
\end{proof}

\section{Proof of Theorem \ref{one_step_M_ap}}
\label{proof_ap}
The whole proof could be completed in two steps: first, show simple averaging estimator $\theta^{(0)}$ is $\sqrt{N}$-consistent when $k=O(\sqrt{N})$; then show the consistency and asymptotic normality of the one-step estimator $\theta^{(1)}$. In the first step, we need to show the following.
\begin{lemma}
\label{N-consistent}
Under Assumption \ref{parameter_space}, \ref{invertibility} and \ref{smoothness}, when $k=O(\sqrt{N})$, the simple averaging estimator $\theta^{(0)}$ is $\sqrt{N}$-consistent estimator of $\theta_0$, i.e.,
$$
\sqrt{N} \| \theta^{(0)} - \theta_0 \| = O_P(1) \text{ as } N \rightarrow \infty.
$$
\end{lemma}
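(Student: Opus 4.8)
The plan is to obtain this statement as a direct consequence of the moment bound (\ref{q_bound_of_theta(0)}) that has already been established for the simple averaging estimator, specialized to the second moment. The only genuinely new ingredient is the standard fact that a uniformly bounded second moment implies tightness, so the bulk of the work is already done; what remains is to verify that the scaling $k = O(\sqrt{N})$ makes the bias contribution harmless.

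Concretely, I would first set $q = 2$ in (\ref{q_bound_of_theta(0)}), which yields
$$
\mathbb{E}[\| \theta^{(0)} - \theta_0 \|^2] \le O(N^{-1}) + O(k^2 N^{-2}).
$$
Next I would substitute the hypothesis $k = O(\sqrt{N})$, so that $k^2 = O(N)$ and hence $k^2 N^{-2} = O(N^{-1})$. The bias-type term is therefore of the same order as (or smaller than) the variance-type term, and the two combine to give
$$
\mathbb{E}[\| \theta^{(0)} - \theta_0 \|^2] \le O(N^{-1}).
$$
Multiplying through by $N$ shows that the scaled estimator has a uniformly bounded second moment,
$$
\mathbb{E}\left[ \left\| \sqrt{N}(\theta^{(0)} - \theta_0) \right\|^2 \right] = N \, \mathbb{E}[\| \theta^{(0)} - \theta_0 \|^2] \le C
$$
for some constant $C$ not depending on $N$.

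Finally I would invoke Markov's (Chebyshev's) inequality to pass from this moment bound to boundedness in probability: for any $\varepsilon > 0$,
$$
\mbox{Pr}\left( \sqrt{N} \| \theta^{(0)} - \theta_0 \| > T \right) \le \frac{\mathbb{E}\left[ \left\| \sqrt{N}(\theta^{(0)} - \theta_0) \right\|^2 \right]}{T^2} \le \frac{C}{T^2},
$$
which can be made smaller than $\varepsilon$ uniformly in $N$ by choosing $T$ large enough. This is precisely the statement that $\sqrt{N} \| \theta^{(0)} - \theta_0 \| = O_P(1)$. I do not anticipate a serious obstacle here, since the delicate moment computations were carried out in deriving Lemma \ref{bias_bound} and (\ref{q_bound_of_theta(0)}); the one point deserving care is simply confirming that $k = O(\sqrt{N})$ is exactly the threshold at which the squared-bias term $k^2 N^{-2}$ no longer dominates the $N^{-1}$ variance term, which is what makes the $\sqrt{N}$-rate (rather than a slower rate) achievable.
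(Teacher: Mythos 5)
Your proposal is correct, but it takes a genuinely different route from the paper. The paper proves this lemma by writing $\sqrt{N}(\theta^{(0)} - \theta_0) = \frac{1}{\sqrt{k}}\sum_{i=1}^k \{\sqrt{n}(\theta_i - \theta_0) - \mathbb{E}[\sqrt{n}(\theta_i-\theta_0)]\} + \sqrt{nk}\,\mathbb{E}[\theta_1 - \theta_0]$, applying the Lindeberg--L\'{e}vy central limit theorem to the centered sum (using the second-moment bound of Lemma \ref{asym_norm2}) and then showing that the deterministic bias term $\sqrt{nk}\,\mathbb{E}[\theta_1-\theta_0]$ is $O(1)$ via Lemma \ref{bias_bound}, since $\|\mathbb{E}[\theta_1-\theta_0]\| = O(1/n)$ and $k = O(n)$ under the hypothesis $k = O(\sqrt{N})$. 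Your argument instead specializes the already-established moment bound (\ref{q_bound_of_theta(0)}) (equivalently, the unnamed MSE lemma for $\theta^{(0)}$ in Appendix \ref{error_bound}) to $q=2$ and passes to $O_P(1)$ by Chebyshev's inequality. Both are valid, and both ultimately rest on the same ingredients (Lemmas \ref{asym_norm2} and \ref{bias_bound}). Your route is more elementary and has the minor advantage of treating fixed $k$ and growing $k$ uniformly, whereas the paper's CLT argument needs $k \to \infty$ and dismisses the fixed-$k$ case as trivial separately. The paper's route buys strictly more than the lemma states: it identifies the limiting distribution $\bold{N}(0,\Sigma)$ shifted by an asymptotic bias, which foreshadows why the one-step correction is needed and why $k = O(\sqrt{N})$ is the natural threshold. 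Your closing observation --- that $k = O(\sqrt{N})$ is exactly where the squared-bias term $k^2N^{-2}$ stops dominating $N^{-1}$ --- captures the same insight from the moment side.
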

\begin{proof}
If $k$ is finite and does not grow with $N$, the proof is trivial. So, we just need to consider the case that $k \rightarrow \infty$. We know that $\| \mathbb{E}[ \sqrt{n}(\theta_i - \theta_0) ] \| \le O(\frac{1}{\sqrt{n}})$ by Lemma \ref{bias_bound} and $\mathbb{E}[ \| \sqrt{n}(\theta_i - \theta_0) \|^2] \le 2 \mbox{Tr}(\Sigma) + O(n^{-1})$ by Lemma \ref{asym_norm2}. By applying Lindeberg-L\'{e}vy Central Limit Theorem, we have
\begin{multline*}
\sqrt{N} (\theta^{(0)} - \theta_0) = \frac{1}{\sqrt{k}} \sum_{i=1}^k \sqrt{n}(\theta_i - \theta_0) = \frac{1}{\sqrt{k}} \sum_{i=1}^k \{ \sqrt{n}(\theta_i - \theta_0) - \mathbb{E}[\sqrt{n}(\theta_i - \theta_0)] \} + \sqrt{nk}\mathbb{E}[ \theta_1 - \theta_0 ] \\ 
\xrightarrow{d} \bold{N}(0,\Sigma) + \lim_{N \rightarrow \infty} \sqrt{nk}\mathbb{E}[ \theta_1 - \theta_0 ],
\end{multline*}
It suffices to show $\lim_{N \rightarrow \infty} \sqrt{nk}\mathbb{E}[ \theta_1 - \theta_0 ]$ is finite.
By Lemma \ref{bias_bound}, we have
$$
\| \mathbb{E}[\theta_i - \theta_0] \|= O(\frac{1}{n}), \forall i \in \{1,2,\ldots,k\},
$$
which means that $\| \sqrt{nk}\mathbb{E}[ \theta_i - \theta_0 ] \| = O(1)$ if $k=O(\sqrt{N})=O(n)$.
Thus, when $k=O(\sqrt{N})$, $\sqrt{N} (\theta^{(0)} - \theta_0)$ is bounded in probability. 
\end{proof}
Now, we can prove Theorem \ref{one_step_M_ap}.
\begin{proof}
By the definition of the one-step estimator
$$
\theta^{(1)} = \theta^{(0)} - \ddot{M}(\theta^{(0)})^{-1} \dot{M}(\theta^{(0)}),
$$
and by Theorem 4.2 in Chapter XIII of \cite{lang1993real}, we have
\begin{align*}
&\quad \sqrt{N} \ddot{M}(\theta^{(0)}) (\theta^{(1)} - \theta_0) = \ddot{M}(\theta^{(0)}) \sqrt{N} (\theta^{(0)} - \theta_0) - \sqrt{N} (\dot{M}(\theta^{(0)})-\dot{M}(\theta_0)) - \sqrt{N} \dot{M}(\theta_0) \\
&= \ddot{M}(\theta^{(0)}) \sqrt{N} (\theta^{(0)} - \theta_0) - \sqrt{N} \int_{0}^1 \ddot{M}((1 - \rho) \theta_0 +\rho \theta^{(0)}) d\rho \,(\theta^{(0)}-\theta_0) - \sqrt{N} \dot{M}(\theta_0)\\
&= \left[ \ddot{M}(\theta^{(0)}) - \int_{0}^1 \ddot{M}((1 - \rho) \theta_0 +\rho \theta^{(0)}) d\rho \right] \sqrt{N} (\theta^{(0)} - \theta_0) - \sqrt{N} \dot{M}(\theta_0),
\end{align*}
As it is shown in (\ref{q_bound_of_theta(0)}), for any $\rho \in [0,1]$, when $k = O(\sqrt{N})$, we have
$$\| (1 - \rho) \theta_0 +\rho \theta^{(0)} - \theta_0 \| \le \rho \| \theta^{(0)} - \theta_0 \| \xrightarrow{P} 0.$$
Since $\ddot{M}(\cdot)$ is a continuous function, $\vertiii{ \ddot{M}(\theta^{(0)}) - \int_{0}^1 \ddot{M}((1 - \rho) \theta_0 +\rho \theta^{(0)}) d\rho }\xrightarrow{P} 0$. Thus,
$$
\sqrt{N} \ddot{M}(\theta^{(0)}) (\theta^{(1)} - \theta_0) = - \sqrt{N} \dot{M}(\theta_0) + o_P(1).
$$
And, $\ddot{M}(\theta^{(0)}) \xrightarrow{P} \ddot{M}_0(\theta_0)$ because of $\theta^{(0)} \xrightarrow{P} \theta_0$ and Law of Large Number. Therefore, we can obatain
$$
\sqrt{N} (\theta^{(1)} - \theta_0) \xrightarrow{d} \bold{N}(0,\Sigma) \text{ as } N \rightarrow \infty
$$
by applying Slutsky's Lemma.
\end{proof}

\section{Proof of Theorem \ref{one_step_M}}
\label{proof}
Let us recall the formula for one-step estimator,
$$
\theta^{(1)} = \theta^{(0)} - \ddot{M}(\theta^{(0)})^{-1} \dot{M}(\theta^{(0)}).
$$
Then by Theorem 4.2 in Chapter XIII of \cite{lang1993real}, we have
\begin{eqnarray*}
& & \ddot{M}_0(\theta_0) (\theta^{(1)} - \theta_0) = [\ddot{M}_0(\theta_0)-\ddot{M}(\theta^{(0)})] (\theta^{(1)} - \theta_0) + \ddot{M}(\theta^{(0)}) (\theta^{(1)} - \theta_0) \\
&=& [\ddot{M}_0(\theta_0)-\ddot{M}(\theta^{(0)})] (\theta^{(1)} - \theta_0) + \ddot{M}(\theta^{(0)}) (\theta^{(0)} - \theta_0) - [\dot{M}(\theta^{(0)}) - \dot{M}(\theta_0)] - \dot{M}(\theta_0) \\
&=& [\ddot{M}_0(\theta_0)-\ddot{M}(\theta^{(0)})] (\theta^{(1)} - \theta_0) + \left[ \ddot{M}(\theta_0) - \int_{0}^1 \ddot{M}((1 - \rho) \theta_0 +\rho \theta^{(0)}) d\rho \right] (\theta^{(0)} - \theta_0) - \dot{M}(\theta_0).
\end{eqnarray*}
Then we have
\begin{multline}
\label{key_eq1}
\theta^{(1)} - \theta_0 =   - \ddot{M}_0(\theta_0)^{-1} \dot{M}(\theta_0) + \ddot{M}_0(\theta_0)^{-1} [\ddot{M}_0(\theta_0)-\ddot{M}(\theta^{(0)})] (\theta^{(1)} - \theta_0) \\
 + \ddot{M}_0(\theta_0)^{-1} \left[ \ddot{M}(\theta_0) - \int_{0}^1 \ddot{M}((1 - \rho) \theta_0 +\rho \theta^{(0)}) d\rho \right] (\theta^{(0)} - \theta_0)
\end{multline}
We will show the last two terms are small enough. Similar to the proof of Lemma \ref{asym_norm2}, we define a ``good" event:
\begin{eqnarray*}
E_4 &=& \{ \| \theta^{(0)} - \theta_0 \| \le \delta \}.
%E_5 &=& \{ \frac{1}{N} \sum_{i=1}^k \sum_{x \in S_i} L(x) \le 2L \}, \\
%E_6 &=& \{ \vertiii{ \ddot{M}(\theta^{(0)}) } \ge \lambda/4 \}
\end{eqnarray*}
The probability of above event is close to 1 when $N$ is large.
\begin{eqnarray*}
\mbox{Pr}(E_4^c) \le \frac{\mathbb{E}[\| \theta^{(0)} - \theta_0 \|^8]}{\delta^8} \le O(N^{-4}) + O(k^8 N^{-8}).
\end{eqnarray*}

\begin{lemma}
If event $E_4$ holds, for $1 \le q \le 4$, we have
\begin{eqnarray*}
\mathbb{E} \left[ \vertiii{\ddot{M}_0(\theta_0)-\ddot{M}(\theta^{(0)})}^q \right] &\le& O(N^{-\frac{q}{2}}) + O(k^{q}N^{-q}), \\
\mathbb{E} \left[ \vertiii{\ddot{M}(\theta_0) - \int_{0}^1 \ddot{M}((1 - \rho) \theta_0 +\rho \theta^{(0)}) d\rho}^q \right] &\le& O(N^{-\frac{q}{2}}) + O(k^{q}N^{-q}).
\end{eqnarray*}
\end{lemma}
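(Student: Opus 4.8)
The plan is to prove both bounds by the same two-part strategy, working throughout on the event $E_4$ so that every convex combination $(1-\rho)\theta_0 + \rho\theta^{(0)}$ stays inside the convex set $B_\delta$ and the Lipschitz part of Assumption \ref{smoothness} becomes available. The key preliminary observation is that, since $M(\theta) = \frac{1}{k}\sum_{i=1}^k M_i(\theta)$ averages over all $k$ machines, the global Hessian is $\ddot{M}(\theta) = \frac{1}{N}\sum_{x \in S}\ddot{m}(x;\theta)$, i.e.\ an empirical mean over all $N=nk$ samples. This is what will yield the $N^{-q/2}$ rate rather than an $n^{-q/2}$ rate, and it lets me invoke Lemma \ref{matrix_bound} with the full sample size.

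For the first inequality I would split by the triangle inequality $\vertiii{\ddot{M}_0(\theta_0)-\ddot{M}(\theta^{(0)})} \le \vertiii{\ddot{M}_0(\theta_0)-\ddot{M}(\theta_0)} + \vertiii{\ddot{M}(\theta_0)-\ddot{M}(\theta^{(0)})}$. The first term is the deviation of the empirical Hessian at the fixed point $\theta_0$ from its population counterpart: because $\ddot{M}(\theta_0)-\ddot{M}_0(\theta_0)$ is the mean of $N$ i.i.d.\ centered matrices whose eighth moment is controlled by $H$ (Assumption \ref{smoothness}), Lemma \ref{matrix_bound} gives $\mathbb{E}[\vertiii{\ddot{M}(\theta_0)-\ddot{M}_0(\theta_0)}^q] \le C_m(q,d)H^q N^{-q/2} = O(N^{-q/2})$ for all $1 \le q \le 8$. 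For the second term, under $E_4$ both $\theta_0$ and $\theta^{(0)}$ lie in $B_\delta$, so the Lipschitz bound of Assumption \ref{smoothness} yields, on $E_4$, $\vertiii{\ddot{M}(\theta_0)-\ddot{M}(\theta^{(0)})} \le \bigl(\frac{1}{N}\sum_{x\in S}L(x)\bigr)\|\theta^{(0)}-\theta_0\|$.

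The crux is then bounding the $q$-th moment of this product. Since $\theta^{(0)}$ and the $L(x)$ are built from the same data they are dependent, so I would not factor the expectation directly; instead Cauchy--Schwarz (H\"older with exponents $2,2$) decouples them: $\mathbb{E}[(\frac{1}{N}\sum L(x))^q\|\theta^{(0)}-\theta_0\|^q 1_{E_4}] \le (\mathbb{E}[(\frac{1}{N}\sum L(x))^{2q}])^{1/2}(\mathbb{E}[\|\theta^{(0)}-\theta_0\|^{2q}])^{1/2}$. The first factor is $O(1)$: writing $\frac{1}{N}\sum L(x) = \mathbb{E}[L(X)] + \frac{1}{N}\sum(L(x)-\mathbb{E}[L(X)])$, the constant part is bounded and the centered part has $2q$-th moment $O(N^{-q})$ by the scalar case of Lemma \ref{vector_bound}, using $\mathbb{E}[(L(X)-\mathbb{E}[L(X)])^8]\le L^8$. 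The second factor is controlled by the already-established bound (\ref{q_bound_of_theta(0)}) at exponent $2q$, giving $(\mathbb{E}[\|\theta^{(0)}-\theta_0\|^{2q}])^{1/2} = O(N^{-q/2}) + O(k^q N^{-q})$ by subadditivity of the square root. This is precisely where the hypothesis $q \le 4$ is forced: I need moments of order $2q \le 8$, which is the highest order that Assumption \ref{smoothness} and (\ref{q_bound_of_theta(0)}) supply. Combining the two terms via $(a+b)^q \le 2^{q-1}(a^q+b^q)$ gives the first claimed bound.

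The second inequality follows the same line after one extra step. By Minkowski's integral inequality, $\vertiii{\ddot{M}(\theta_0) - \int_0^1 \ddot{M}((1-\rho)\theta_0 + \rho\theta^{(0)})d\rho} \le \int_0^1 \vertiii{\ddot{M}(\theta_0) - \ddot{M}((1-\rho)\theta_0 + \rho\theta^{(0)})}d\rho$, and under $E_4$ convexity of $B_\delta$ keeps each interior point in $B_\delta$, so Lipschitz continuity bounds the integrand by $\rho\bigl(\frac{1}{N}\sum L(x)\bigr)\|\theta^{(0)}-\theta_0\|$; integrating the factor $\rho$ over $[0,1]$ contributes only the harmless constant $1/2$. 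Taking $q$-th powers and expectations and reusing the Cauchy--Schwarz / Lemma \ref{vector_bound} / (\ref{q_bound_of_theta(0)}) estimates produces the identical $O(N^{-q/2}) + O(k^q N^{-q})$ bound. I expect the main obstacle to be exactly the dependence between $\theta^{(0)}$ and the sample-dependent Lipschitz average: resolving it by Cauchy--Schwarz rather than any independence argument is what makes the moment bookkeeping (and hence the $q\le 4$ ceiling dictated by the eighth-moment assumptions) the decisive point, while everything else is routine triangle/Minkowski/H\"older manipulation.
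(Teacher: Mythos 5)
Your proposal is correct and follows essentially the same route as the paper: split off $\vertiii{\ddot{M}_0(\theta_0)-\ddot{M}(\theta_0)}$ and control it by Lemma \ref{matrix_bound} at sample size $N$, handle the remaining term on $E_4$ via the Lipschitz bound and a Cauchy--Schwarz decoupling against the $2q$-th moment bound (\ref{q_bound_of_theta(0)}) for $\theta^{(0)}-\theta_0$, and use Minkowski's integral inequality for the averaged-Hessian term. The only cosmetic difference is that the paper applies Jensen to get $\frac{1}{N}\sum L(x)^q$ inside the $q$-th power while you keep $\bigl(\frac{1}{N}\sum L(x)\bigr)^q$ and center $L$; both require exactly the eighth-moment hypotheses, which is the same $q\le 4$ ceiling you correctly identify.
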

\begin{proof}
By Lemma \ref{matrix_bound}, we know
$$
\mathbb{E} \left[ \vertiii{ \ddot{M}_0(\theta_0)- \ddot{M}(\theta_0) }^q \right] \le \frac{C(q,d)}{N^{q/2}} H^q.
$$
Under event $E_4$ and Assumption \ref{smoothness}, by applying Jensen's inequality, we have
$$
\vertiii{ \ddot{M}(\theta_0) - \ddot{M}(\theta^{(0)}) }^q \le \frac{1}{N} \sum_{i=1}^k \sum_{x \in S_i} \vertiii{ \ddot{m}(x;\theta^{(0)}) - \ddot{m}(x;\theta_0) }^q \le \frac{1}{N} \sum_{i=1}^k \sum_{x \in S_i} L(x)^q \| \theta^{(0)} - \theta_0 \|^q.
$$
Thus, for $1 \le q \le 4$, we have
\begin{eqnarray*}
\mathbb{E} \left[ \vertiii{ \ddot{M}(\theta_0) - \ddot{M}(\theta^{(0)}) }^q \right] &\stackrel{\mbox{H\"{o}lder's}}{\le}& \mathbb{E} \left[ (\frac{1}{N} \sum_{i=1}^k \sum_{x \in S_i} L(x)^q)^2 \right]^{\frac{1}{2}} \mathbb{E} \left[ \| \theta^{(0)} - \theta_0 \|^{2q} \right]^{\frac{1}{2}} \\
&\stackrel{\mbox{(\ref{q_bound_of_theta(0)})}}{\le}& O(N^{-\frac{q}{2}}) + \frac{2^q \tilde{C}^q L^q k^q}{N^q} + O(\frac{k^{q+1}}{N^{q+1}}).
\end{eqnarray*}
As a result, we have, for $1 \le q \le 4$,
\begin{eqnarray*}
\mathbb{E} \left[ \vertiii{\ddot{M}_0(\theta_0)-\ddot{M}(\theta^{(0)})}^q \right] &\le& 2^q \mathbb{E} \left[ \vertiii{\ddot{M}_0(\theta_0)-\ddot{M}(\theta_0))}^q \right] + 2^q \mathbb{E} \left[ \vertiii{ \ddot{M}(\theta_0) - \ddot{M}(\theta^{(0)}) }^q \right] \\
&\le& O(N^{-\frac{q}{2}}) + \frac{4^q \tilde{C}^q L^q k^q}{N^q} + O(\frac{k^{q+1}}{N^{q+1}}).
\end{eqnarray*}
In this proof, we let $\theta' = (1 - \rho) \theta_0 +\rho \theta^{(0)}$ for the simplicity of notation.
Note that $\theta'- \theta_0 = \rho (\theta^{(0)} - \theta_0)$, then by event $E_4$, Assumption \ref{smoothness} and inequality (\ref{q_bound_of_theta(0)}), we have
\begin{eqnarray*}
\mathbb{E} \left[ \vertiii{ \ddot{M}(\theta_0) - \ddot{M}(\theta') }^q \right] &\stackrel{\mbox{Jensen's}}{\le}& \mathbb{E} \left[ \frac{1}{N} \sum_{i=1}^k \sum_{x \in S_i} L(x)^q \| \theta' - \theta_0 \|^q \right] \\
&\stackrel{\mbox{H\"{o}lder's}}{\le}& \mathbb{E} \left[ (\frac{1}{N} \sum_{i=1}^k \sum_{x \in S_i} L(x)^q)^2 \right]^{\frac{1}{2}} \rho^q \mathbb{E} \left[ \| \theta^{(0)} - \theta_0 \|^{2q} \right]^{\frac{1}{2}} \\
&\le& O(N^{-\frac{q}{2}}) + \frac{2^q \tilde{C}^q L^q k^q}{N^q} + O(\frac{k^{q+1}}{N^{q+1}}).
\end{eqnarray*}
So, we have
\begin{align*}
& \mathbb{E} \left[ \vertiii{\ddot{M}(\theta_0) - \int_{0}^1 \ddot{M}((1 - \rho) \theta_0 +\rho \theta^{(0)}) d\rho}^q \right] \le \mathbb{E}\left[ \int_{0}^1 \vertiii{ \ddot{M}(\theta_0) - \ddot{M}((1 - \rho) \theta_0 +\rho \theta^{(0)}) }^q d\rho \right] \\
=& \int_{0}^1 \mathbb{E}\left[ \vertiii{ \ddot{M}(\theta_0) - \ddot{M}((1 - \rho) \theta_0 +\rho \theta^{(0)}) }^q \right] d\rho \le O(N^{-\frac{q}{2}}) + \frac{2^q \tilde{C}^q L^q k^q}{N^q} + O(\frac{k^{q+1}}{N^{q+1}}).
\end{align*}
\end{proof}
Therefore, under event $E_4$, for $1 \le q \le 4$, we can bound $\ddot{M}_0(\theta_0)^{-1} [\ddot{M}_0(\theta_0)-\ddot{M}(\theta^{(0)})] (\theta^{(1)} - \theta_0)$ and $\ddot{M}_0(\theta_0)^{-1} [\ddot{M}(\theta_0) - \int_{0}^1 \ddot{M}((1 - \rho) \theta_0 +\rho \theta^{(0)}) d\rho] (\theta^{(0)} - \theta_0)$ as follows:
\begin{eqnarray*}
\mathbb{E}[ \| \ddot{M}_0(\theta_0)^{-1} [\ddot{M}_0(\theta_0)-\ddot{M}(\theta^{(0)})] (\theta^{(1)} - \theta_0) \|^q ] 
&\le& \lambda^{-q} \mathbb{E} \left[ \vertiii{\ddot{M}_0(\theta_0)-\ddot{M}(\theta^{(0)})}^q \right] D^q \\
&\le& O(N^{-\frac{q}{2}}) + O(k^{q}N^{-q}).
\end{eqnarray*}
and,
\begin{eqnarray*}
& & \mathbb{E} \left[ \| \ddot{M}_0(\theta_0)^{-1} [\ddot{M}(\theta_0) - \int_{0}^1 \ddot{M}((1 - \rho) \theta_0 +\rho \theta^{(0)}) d\rho] (\theta^{(0)} - \theta_0) \|^q \right] \\
&\le& \lambda^{-q}\mathbb{E} \left[ \vertiii{\ddot{M}(\theta_0) - \int_{0}^1 \ddot{M}((1 - \rho) \theta_0 +\rho \theta^{(0)}) d\rho}^q \right] D^q \\
&\le& O(N^{-\frac{q}{2}}) + O(k^{q}N^{-q}).
\end{eqnarray*}
And by Lemma \ref{vector_bound}, for $1 \le q \le 8$, we have
$$
\mathbb{E} [ \| \dot{M}(\theta_0) \|^q ] = O(N^{-\frac{q}{2}}).
$$
Therefore, combining above three bounds and equation (\ref{key_eq1}), we have, for $1 \le q \le 4$,
\begin{eqnarray*}
\mathbb{E}[ \| \theta^{(1)} - \theta_0 \|^q ] &\le& 3^q \mathbb{E}[ \| \ddot{M}_0(\theta_0)^{-1} [\ddot{M}_0(\theta_0)-\ddot{M}(\theta^{(0)})] (\theta^{(1)} - \theta_0) \|^q ] \\
& & + 3^q \mathbb{E}[ | \ddot{M}_0(\theta_0)^{-1} \left[ \ddot{M}(\theta_0) - \int_{0}^1 \ddot{M}((1 - \rho) \theta_0 +\rho \theta^{(0)}) d\rho \right] (\theta^{(0)} - \theta_0) \|^q ] \\
& & + 3^q \mathbb{E} [ \| \ddot{M}_0(\theta_0)^{-1} \dot{M}(\theta_0) \|^q ] + Pr(E_4^c) D^q\\
&=& O(N^{-\frac{q}{2}}) + O(k^{q}N^{-q}).
\end{eqnarray*}
Now, we can give tighter bounds for the first two terms in equation (\ref{key_eq1}) by H\"{o}lder's inequality.
\begin{eqnarray*}
& & \mathbb{E}[ \| \ddot{M}_0(\theta_0)^{-1} [\ddot{M}_0(\theta_0)-\ddot{M}(\theta^{(0)})] (\theta^{(1)} - \theta_0) \|^2 ] \\
&\le& \lambda^{-2} \sqrt{ \mathbb{E} \left[ \vertiii{\ddot{M}_0(\theta_0)-\ddot{M}(\theta^{(0)})}^4 \right] } \sqrt{ \mathbb{E}[ \| \theta^{(1)} - \theta_0 \|^4] } \\
&=& O(N^{-2}) + O(k^{4}N^{-4}),
\end{eqnarray*}
and,
\begin{eqnarray*}
& & \mathbb{E}[ \| \ddot{M}_0(\theta_0)^{-1} [\ddot{M}(\theta_0) - \int_{0}^1 \ddot{M}((1 - \rho) \theta_0 +\rho \theta^{(0)}) d\rho] (\theta^{(0)} - \theta_0) \|^2 ] \\
&\le& \lambda^{-2} \sqrt{ \mathbb{E} \left[ \vertiii{\ddot{M}(\theta_0)-\int_{0}^1 \ddot{M}((1 - \rho) \theta_0 +\rho \theta^{(0)}) d\rho}^4 \right] } \sqrt{ \mathbb{E}[ \| \theta^{(0)} - \theta_0 \|^4] } \\
&=& O(N^{-2}) + O(k^{4}N^{-4}).
\end{eqnarray*}
Now, we can finalize our proof by using equation (\ref{key_eq}) again,
\begin{eqnarray*}
\mathbb{E}[ \| \theta^{(1)} - \theta_0 \|^2 ] &\le& 2 \mathbb{E}[ \| \ddot{M}_0(\theta_0)^{-1} \dot{M}(\theta_0) \|^2] + 4 \mathbb{E}[ \| \ddot{M}_0(\theta_0)^{-1} [\ddot{M}_0(\theta_0)-\ddot{M}(\theta^{(0)})] (\theta^{(1)} - \theta_0) \|^2 ] \\
& & + 4 \mathbb{E} \left[ \| \ddot{M}_0(\theta_0)^{-1} [\ddot{M}(\theta_0) - \int_{0}^1 \ddot{M}((1 - \rho) \theta_0 +\rho \theta^{(0)}) d\rho] (\theta^{(0)} - \theta_0) \|^2 \right] \\
&\le& 2 \frac{\mbox{Tr}(\Sigma)}{N} + O(N^{-2}) + O(k^{4}N^{-4}).
\end{eqnarray*}

\begin{comment}
\begin{remark*}
\cite{zhang2013communication} also propose an additonal resampling step to reduce the bias of the simple averaging estimator and call the corresponding estimator, Subsampled Mixture Averaging Estimator (SAVGM). The mean squared error of SAVGM is bounded by $O(N^{-1})+O(kN^{-2})+O(k^3N^{-3})$ while the error of one-step estimator could be bounded in lower order, $O(N^{-1})+O(N^{-2})+O(k^4N^{-4})$. And it is worth noting that SAVGM cannot achieve asymptotic efficiency because the resampling step inflates the estimator covariance although the bias is reduced.
\end{remark*}
\end{comment}

\begin{comment}
By multiplying $\sqrt{N}$ on both sides of equation (\ref{key_eq1}), we obtain
\begin{equation*}
\begin{split}
\sqrt{N}(\theta^{(1)} - \theta_0) =  &\; \sqrt{N} \ddot{M}_0(\theta_0)^{-1} [\ddot{M}_0(\theta_0)-\ddot{M}(\theta^{(0)})] (\theta^{(1)} - \theta_0) \\
& + \sqrt{N} \ddot{M}_0(\theta_0)^{-1} [\ddot{M}(\theta_0) - \ddot{M}(\theta'')] (\theta^{(0)} - \theta_0) - \ddot{M}_0(\theta_0)^{-1} \sqrt{N} \dot{M}(\theta_0).
\end{split}
\end{equation*}
Note that the first two terms on the RHS converge to $0$ in probability when $k = O(\sqrt{N})$. We also have 
$$\sqrt{N} \dot{M}(\theta_0) \xrightarrow{d} N(0,\mathbb{E}[\dot{m}(x;\theta_0)\dot{m}(x;\theta_0)^t]) $$ 
by Central Limit Theorem (CLT). Therefore,
$$
\sqrt{N}(\theta^{(1)} - \theta_0) = -\ddot{M}_0(\theta_0)^{-1} \sqrt{N} \dot{M}(\theta_0) + o_P(1) \xrightarrow{d} N(0,\Sigma).
$$
\end{comment}

\section{Proof of Corollary \ref{estimation_with_crash}}
\label{proof_crash}
At first, we will present a lemma on the negative moments of a Binomial random variable, i.e., $\mathbb{E} \left[ \frac{1}{Z}1_{(Z>0)} \right]$ and $\mathbb{E} \left[ \frac{1}{Z^2} 1_{(Z>0)} \right]$, where $Z \sim \bold{B}(k,p)$ and $\bold{B}(k,p)$ denotes Binomial distribution with $k$ independent trials and a success probability $p$ for each trial. We believe that $\mathbb{E} \left[ \frac{1}{Z}1_{(Z>0)} \right]$ and $\mathbb{E}\left[ \frac{1}{Z^2} 1_{(Z>0)} \right]$ should have been well studied. However, we did not find any appropriate reference on their upper bounds that we need.
So, we derive the upper bounds as follows, which will be useful in the proof of Corollary \ref{estimation_with_crash}.
\begin{lemma}
\label{bound_binomial}
Suppose $Z \sim \bold{B}(k,p)$, when $z>0$, we have
$$
\mathbb{E} \left[ \frac{1}{Z} 1_{(Z>0)} \right] < \frac{1}{kp} + \frac{3}{k^2 p^2} \text{\quad and \quad}
\mathbb{E} \left[ \frac{1}{Z^2} 1_{(Z>0)} \right] < \frac{6}{k^2 p^2}.
$$
\end{lemma}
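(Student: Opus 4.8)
The plan is to reduce both negative moments to the positive-index quantities $\mathbb{E}[1/(Z+1)]$ and $\mathbb{E}[1/((Z+1)(Z+2))]$, which admit exact closed forms through a binomial index shift. First I would record the elementary identities $\frac{1}{z+1}\binom{k}{z} = \frac{1}{k+1}\binom{k+1}{z+1}$ and its second-order analogue $\frac{1}{(z+1)(z+2)}\binom{k}{z} = \frac{1}{(k+1)(k+2)}\binom{k+2}{z+2}$. Substituting these into the defining binomial sums and re-indexing so that what remains is a (truncated) total probability of a $\mathbf{B}(k+1,p)$ or $\mathbf{B}(k+2,p)$ variable, I obtain
$$\mathbb{E}\left[\frac{1}{Z+1}\right] = \frac{1-(1-p)^{k+1}}{(k+1)p} < \frac{1}{(k+1)p} \le \frac{1}{kp},$$
and likewise $\mathbb{E}[1/((Z+1)(Z+2))] < \frac{1}{(k+1)(k+2)p^2} \le \frac{1}{k^2p^2}$. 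These are the two clean quantities against which everything else will be compared.

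For the first bound I would deliberately avoid the crude estimate $1/Z \le 2/(Z+1)$, since it yields a leading constant $2$ rather than $1$. Instead, on the event $\{Z \ge 1\}$ I would split $\frac{1}{Z} = \frac{1}{Z+1} + \frac{1}{Z(Z+1)}$. The first piece is dominated by $\mathbb{E}[1/(Z+1)] < 1/(kp)$, which produces exactly the desired leading term. For the cross term I would use the pointwise inequality $\frac{1}{Z(Z+1)} \le \frac{3}{(Z+1)(Z+2)}$, valid for $Z \ge 1$ because $\frac{Z+2}{Z} = 1 + \frac{2}{Z} \le 3$ there; taking expectations and invoking the closed form bounds this piece by $3/(k^2p^2)$. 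Summing the two pieces gives $\mathbb{E}[Z^{-1}1_{(Z>0)}] < \frac{1}{kp} + \frac{3}{k^2p^2}$.

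For the second moment the same comparison idea applies in a single pass: on $\{Z \ge 1\}$ one has $\frac{1}{Z^2} \le \frac{2}{Z(Z+1)}$ (since $1 + 1/Z \le 2$), and then $\frac{1}{Z(Z+1)} \le \frac{3}{(Z+1)(Z+2)}$ as above, so that $\frac{1}{Z^2} \le \frac{6}{(Z+1)(Z+2)}$ holds pointwise on $\{Z \ge 1\}$. Taking expectations and using $\mathbb{E}[1/((Z+1)(Z+2))] < 1/(k^2p^2)$ immediately delivers $\mathbb{E}[Z^{-2}1_{(Z>0)}] < \frac{6}{k^2p^2}$.

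The computations are all routine; the one point requiring genuine care is forcing the leading constant in the first bound to be $1$ rather than $2$, which is precisely why the additive splitting $1/Z = 1/(Z+1) + 1/(Z(Z+1))$ is preferred over a single multiplicative comparison. The only other thing to watch is the bookkeeping: keeping track of the indicator $1_{(Z>0)}$ when it is harmlessly dropped (the omitted $Z=0$ terms are nonnegative) and verifying that each re-indexed binomial sum is a genuine, possibly truncated, total mass and therefore bounded by $1$.
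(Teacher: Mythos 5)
Your proof is correct and is essentially the paper's own argument in a slightly cleaner dress: the index-shift identities for $\mathbb{E}[1/(Z+1)]$ and $\mathbb{E}[1/((Z+1)(Z+2))]$, the splitting $\tfrac{1}{z}=\tfrac{1}{z+1}+\tfrac{1}{z(z+1)}$, and the pointwise bounds $\tfrac{z+2}{z}\le 3$ and $\tfrac{(z+1)(z+2)}{z^2}\le 6$ for $z\ge 1$ are exactly the steps the paper performs inside its binomial-sum manipulations. The only difference is presentational (pointwise comparisons of random variables versus explicit re-indexed sums), so no further comment is needed.
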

\begin{proof}
By definition, we have
\begin{eqnarray*}
& & \mathbb{E} \left[ \frac{1}{Z} 1_{(Z>0)} \right] = \sum_{z=1}^k \frac{1}{z} \binom{k}{z} p^z (1-p)^{k-z} = \sum_{z=1}^k \frac{1}{z} \frac{k!}{z!(k-z)!} p^z (1-p)^{k-z} \\
&=& \sum_{z=1}^k \frac{z+1}{z} \frac{1}{(k+1)p}\frac{(k+1)!}{(z+1)!(k-z)!} p^{z+1} (1-p)^{k-z} \\
&=& \sum_{z=1}^k \frac{1}{(k+1)p} \binom{k+1}{z+1} p^{z+1} (1-p)^{k-z} + \sum_{z=1}^k \frac{1}{z} \frac{1}{(k+1)p} \binom{k+1}{z+1} p^{z+1} (1-p)^{k-z} \\
&<& \frac{1}{(k+1)p} + \sum_{z=1}^k \frac{z+2}{z} \frac{1}{(k+1)(k+2)p^2} \binom{k+2}{z+2} p^{z+2} (1-p)^{k-z} \\
&<& \frac{1}{(k+1)p} + \frac{3}{(k+1)(k+2)p^2} < \frac{1}{kp} + \frac{3}{k^2 p^2}.
\end{eqnarray*}
Similarly, we have
\begin{eqnarray*}
& & \mathbb{E} \left[ \frac{1}{Z^2} 1_{(Z>0)} \right] = \sum_{z=1}^k \frac{1}{z^2} \binom{k}{z} p^z (1-p)^{k-z} = \sum_{z=1}^k \frac{1}{z^2} \frac{k!}{z!(k-z)!} p^z (1-p)^{k-z} \\
&=& \sum_{z=1}^k \frac{(z+1)(z+2)}{z^2} \frac{1}{(k+1)(k+2)p^2} \frac{(k+2)!}{(z+2)!(k-z)!} p^{z+2} (1-p)^{k-z} \\
&\le& \sum_{z=1}^k \frac{6}{(k+1)(k+2)p^2} \binom{k+2}{z+2} p^{z+2} (1-p)^{k-z} \\
&<& \frac{6}{(k+1)(k+2)p^2} < \frac{6}{k^2 p^2}.
\end{eqnarray*}
\end{proof}

Now, we can prove Corollary \ref{estimation_with_crash} could be as follows.
\begin{proof}
Let the random variable $Z$ denote the number of machines that successfully communicate with the central machine, which means that $Z$ follows Binomial distribution, $B(k,1-r)$. By Law of Large Number, $\frac{Z}{(1-r)k} \xrightarrow{P} 1\text{ as } k \rightarrow \infty$. If $Z$ is known, the size of available data becomes $Zn$. By Theorem \ref{one_step_M_ap}, the one-step estimator $\theta^{(1)}$ is still asymptotic normal when $k=O(\sqrt{N})$,
$$
\sqrt{Zn}(\theta^{(1)} - \theta_0) \xrightarrow{d} N(0,\Sigma) \text{ as } n \rightarrow \infty.
$$
Therefore, when $k \rightarrow \infty$, we have
$$
\sqrt{(1-r)N}(\theta^{(1)} - \theta_0) = \sqrt{\frac{(1-r)N}{Zn}} \sqrt{Zn}(\theta^{(1)} - \theta_0) \xrightarrow{d} \sqrt{\frac{(1-r)N}{Zn}} N(0,\Sigma).
$$
Since $\frac{(1-r)N}{Zn}=\frac{(1-r)k}{Z} \xrightarrow{P} 1$, by Slutsky's Lemma, we have
$$
\sqrt{(1-r)N}(\theta^{(0)} - \theta_0) \xrightarrow{d} N(0,\Sigma).
$$
This result indicates that when the local machines could lose communication independently with central machine with probability $q$, the one-step estimator $\theta^{(1)}$ shares the same asymptotic properties with the oracle M-estimator using $(1-r) \times 100\%$ of the total samples.

Next, we will analyze the mean squared error of one-step estimator with the presence of local machine failures. Note that, when $Z$ is fixed and known, by Theorem \ref{one_step_M}, we have
$$
\mathbb{E}[ \| \theta^{(1)} - \theta_0 \|^2 \big{|} Z] \le \frac{2\mbox{Tr}[\Sigma]}{nZ} + O(n^{-2}Z^{-2}) + O(n^{-4}).
$$
By Rule of Double Expectation and Lemma \ref{bound_binomial},
\begin{eqnarray*}
& & \mathbb{E}[ \| \theta^{(1)} - \theta_0 \|^2 1_{(Z>0)}] = \mathbb{E}[ \mathbb{E}[ \| \theta^{(1)} - \theta_0 \|^2 \big{|} Z] 1_{(Z>0)} ] \\
&\le& \mathbb{E} \left[ \frac{2\mbox{Tr}[\Sigma]}{nZ} 1_{(Z>0)} \right] +\mathbb{E}[ (O(n^{-2}Z^{-2}) + O(n^{-4}) ) 1_{(Z>0)} ] \\
&\le& 2\mbox{Tr}[\Sigma] \left\{ \frac{1}{nk(1-r)} + \frac{3}{nk^2 (1-r)^2} \right\} + O(n^{-2} k^{-2}(1-r)^{-2}) + O(n^{-4})\\
&=& \frac{2\mbox{Tr}[\Sigma]}{N(1-r)} + \frac{6\mbox{Tr}[\Sigma]}{N k (1-r)^2} + O(N^{-2}(1-r)^{-2}) + O(k^2 N^{-2}).
\end{eqnarray*}
\end{proof}

\end{document}